\documentclass[a4paper,12pt,onecolumn]{article}

%

\usepackage{amsthm}



 




\addtolength{\textheight}{2cm}
\addtolength{\hoffset}{-1cm}
\addtolength{\textwidth}{2cm}

\usepackage{graphics} 

\usepackage{epsfig} 

\usepackage{mathptmx} 

\usepackage{times} 
 
\usepackage{amsmath} 

\usepackage{amssymb}  

\usepackage{amsthm}

\usepackage{color}

\usepackage{algorithm}

\usepackage{algorithmic}

\usepackage{enumerate}

\usepackage{caption}

\newtheorem{thm}{Theorem}
\newtheorem{prop}{Proposition}

\newtheorem{lem}{Lemma}
\newtheorem{defn}{Definition}
\newtheorem{rem}{Remark}

\newcommand{\RR}{{\mathbb{R}}}
\newcommand{\NN}{{\mathbb N}}

\newcommand{\norm}[1]{\lVert#1\rVert}


\newcommand{\dotex}{{\frac{d}{dt}}}

\newcommand{\Rg}{{\mathbb R^{\text{dim } \mathfrak g}}}

\begin{document}

\title{The Invariant Extended Kalman filter as a stable observer}
\author{Axel Barrau, Silv\` ere Bonnabel
\thanks{A. Barrau and S. Bonnabel are with MINES
ParisTech, PSL Research University, Centre for
robotics, 60 Bd St Michel 75006 Paris, France.{\tt\small [axel.barrau,silvere.bonnabel]@mines-paristech.fr}} }
\date{}

\maketitle

\thispagestyle{empty}

\pagestyle{empty}

\begin{abstract}
 We analyze the convergence aspects of the invariant extended Kalman filter (IEKF), when the latter is used as a deterministic non-linear observer on Lie groups, for continuous-time systems with discrete observations. One of the main features of invariant observers for left-invariant systems on Lie groups is that the estimation error is autonomous. In this paper we first generalize this result by characterizing the (much broader) class of systems  for which this property holds. Then, we leverage the result to prove for those systems the local stability of the IEKF around \emph{any} trajectory, under the standard conditions of the linear case. One mobile robotics example and one inertial navigation example illustrate the interest of the approach. Simulations evidence the fact that the EKF is capable of diverging in some challenging situations, where the IEKF with identical tuning keeps converging. 
\end{abstract}


\section{Introduction}

The design of non-linear observers is always a challenge, as except for a few classes of systems  (e.g.,  \cite{Gauthier-Kupka-book01}), no general method exists. Of course, the grail of non-linear observer design is to achieve global convergence  to zero of the state estimation error, but this is a very ambitious property to pursue. As a first step, a general method is to use standard linearization techniques, such as the extended Kalman filter (EKF) that makes use of Kalman equations to stabilize the linearized estimation error, and then attempt to derive local convergence properties around \emph{any} trajectory. This is yet a rare property to obtain in a non-linear setting (see, e.g., \cite{aghannan-rouchon-ieee03}), due to the fact that the linearized estimation error equation is time varying, and contrarily to the linear case it generally depends on the unknown true state we seek to estimate.  The EKF,  the most popular observer in the engineering world, provides an ``off the shelf" candidate observer, potentially able to deal with the time-varying nature of the linearized error equation, due to its adaptive gain tuning through a Riccati equation.  However, the EKF does not possess any optimality guarantee, and its efficiency is aleatory. Indeed, its main flaw lies in its very nature: the Kalman gain is computed assuming the estimation error is sufficiently small to be propagated analytically through a first-order linearization of the dynamics about the \emph{estimated} trajectory. When the estimate is actually far from the true state variable, the linearization is not valid, and results in an unadapted gain that may amplify the error. In turn, such positive feedback loop may lead to divergence of the filter. This is the reason why most of the papers dealing with the stability of the EKF (see \cite{boutayeb,song-grizzle-95,reif,bonnabel2012contraction}) rely on the highly non-trivial assumption that the eigenvalues of the Kalman covariance matrix $P_t$ computed about the \emph{estimated} trajectory are lower and upper bounded by strictly positive scalars.  To the authors' knowledge, only a few papers deal with the stability of the EKF without invoking this assumption \cite{krener2003convergence}. It is then replaced by second-order properties whose verification can prove difficult in practical situations. This lack of guarantee is also due to the fact the filter \emph{can} diverge indeed in a number of applications. Note that,  beyond the general theory, there are not even that many engineering examples where the EKF is proved to be (locally) stable. 

The present paper builds upon the theory of symmetry preserving observers  \cite{bonnabel2008symmetry,bonnabel2009non} and notably the theory of invariant Kalman filtering \cite{bonnabel2007left,bonnabel2009invariant,martin2010generalized,barrau2013intrinsic}  in a purely \emph{deterministic} context. As such, it is a contribution to the theory of non-linear observer design on Lie groups that has lately attracted considerable interest, notably for attitude estimation,  see, e.g.,  \cite{mahony2005complementary,Vasconcelos,barczyk2013invariant,
hua2010attitude,lageman2010gradient,grip2015globally,izadi2014rigid}. The detailed contributions and organization of the paper are as follows.

In Section \ref{sect::deterministic}, we recall the main contribution of \cite{bonnabel2008symmetry,bonnabel2009non} is to evidence the fact that for left-invariant systems on Lie groups, non-linear observers may be designed in such a way that the left-invariant estimation error obeys an autonomous equation, a key property for observer design on Lie groups. We show here this property of the error equation can actually be obtained for a \emph{much} broader class of systems, and we characterize this class. Very surprisingly, it turns out that, up to a suitable non-linear change of variables, the error evolution (in the absence of measurements) obeys a \emph{linear} differential equation.

In Section    \ref{sect::IEKF}, we focus on the invariant extended Kalman filter (IEKF) \cite{bonnabel2007left} when applied to the broad class of systems of Section \ref{sect::deterministic}. We consider continuous-time models with discrete observations, which best suits navigation systems where high rate  sensors governing the dynamics are to be combined with low rate sensors \cite{farrell2008aided}. We  change a little the IEKF equations to cast them into a matrix Lie group framework, more handy to use than   the usual abstract Lie group formulation of \cite{bonnabel2007left}. We then prove, that under the standard convergence conditions of the linear case \cite{deyst}, applied to   the linearized model around the \emph{true} state, the IEKF is an asymptotic observer around \emph{any} trajectory of the system, a rare to obtain property. This way, we produce a generic observer with guaranteed local convergence properties under natural assumptions, for a broad class of systems on Lie groups, whereas this property has so far only been reserved to specific examples on Lie groups.   This also allows putting on firm theoretical ground the good behavior of the IEKF in practice, as already noticed in a few papers, see e.g.,  \cite{barczyk2013invariant,barrau2013intrinsic,barczyk2015invariant}.

In Section \ref{sect::examples:A} we consider a mobile robotics example, where a unicycle robot (or simplified car) tries to estimate its position and orientation from GPS position (only) measurements, or alternatively landmarks range and bearing measurements. On this example of engineering interest, the IEKF is proved to converge around \emph{any}  trajectory using the results of the paper, which is a contribution in itself. Simulations indicate the IEKF is always superior to the EKF and may even outperform the latter in challenging situations.  

In Section \ref{sect::examples:B} we consider the highly relevant problem of an unmanned aerial vehicle (UAV) navigating with accelerometers and gyrometers, and range and bearing measurements of known landmarks. Although the system is not invariant in the sense of \cite{bonnabel2008symmetry,bonnabel2009non}, it is proved to fit into our framework so that the  autonomous error equation property of \cite{bonnabel2009non}  holds, a fact never  noticed before to our best knowledge (except in our preliminary conference paper \cite{barrau-bonnabel-cdc14}). The IEKF is shown to converge around \emph{any} trajectory using the results of the paper, which is a contribution in itself. Moreover, it is shown to outperform the EKF which even  diverges when, as in high precision navigation, the user has way more trust in the inertial sensors  than in the landmark measurements.

The  main contributions can be summarized as follows:
\begin{itemize}
\item The class of systems, for which the key result of \cite{bonnabel2009non} about the (state) error equation autonomy holds, is completely characterized, and actually shown to be much broader than left-invariant systems. 
\item The autonomy of the error equation is proved to come with a very intriguing property: a well-chosen non-linear function of the non-linear error is proved to obey a linear differential equation.
\item In turn, this property allows proving that, for the introduced class of systems, the IEKF used in a deterministic context possesses powerful local convergence guarantees that the standard EKF lacks. 
 \item Two examples of navigation illustrate the applicability of the results, and simulations indicate indeed the IEKF is always superior to the EKF, and may turn out to literally outperform the latter when confronted with some challenging situations - the EKF being even capable to diverge. 
\end{itemize}



\section{A special class of multiplicative systems}
\label{sect::deterministic}

\subsection{An introductory example}\label{introd:ex}

Consider a linear (deterministic) system $\dotex x_t=A_tx_t$. Consider two trajectories of this system, say, a reference trajectory  $(\bar x_t)_{t\geq 0}$ and another one $(x_t)_{t\geq 0}$. The  discrepancy between both trajectories $\Delta x_t:=x_t-\bar x_t$ satisfies the linear equation $\dotex \Delta x_t=A_t \Delta x_t$.  This is a key property for the design of linear convergent observers, as during the propagation step, the evolution of the error between the true state and the estimate does \emph{not} depend on the true state's trajectory.

Consider now the following non-linear standard  model of the two-dimensional non-holonomic car. Its state is defined by three parameters : heading $\theta_t$ and position $X_t=(x_t^1,x_t^2)$. The velocity $u_t\in\RR$ is given by an odometer, the angular velocity $\omega_t\in\RR$ is measured by a differential odometer or a gyrometer. The equations read (see, e.g., \cite{de1998feedback}):
$$
 \frac{d}{dt} \theta_t  = \omega_t , \qquad \frac{d}{dt} x_t^1  = \cos(\theta_t) u_t , \qquad 
 \frac{d}{dt} x_t^2  = \sin(\theta_t) u_t. 
$$
Now consider a reference trajectory $(\bar{\theta}_t,\bar{X}_t)$ and a second trajectory $(\theta_t,X_t)$ with different initial conditions but same inputs. The exact propagation of the ``error" $(\Delta \theta_t, \Delta X_t)=(\theta_t - \bar{\theta_t}$, $ X_t-\bar{X_t})$,  satisfies:
\begin{equation}\begin{aligned}\label{ouaich}
 &\frac{d}{dt} \Delta \theta_t  = 0 , \\
& \frac{d}{dt} \Delta x_t^1  = \left[ \cos( \theta_t) - \cos(\bar{\theta_t})  \right] u_t ,\\&
 \frac{d}{dt} \Delta x_t^2  = \left[ \sin(\theta_t) - \sin(\bar{\theta_t})  \right] u_t ,
\end{aligned}\end{equation}
where we let $\Delta X_t=(\Delta x_t^1 ,\Delta x_t^2)$. We see the time derivative of  $(\Delta \theta_t, \Delta X_t)$ is not a function of $(\Delta \theta_t, \Delta X_t)$ only: it also involves $\bar \theta_t$ and $\theta_t$ individually. Moreover, the equation is non-linear. These features, characteristic of non-linear systems, make the design of observers way more complicated in the non-linear case. Now, let us introduce the following non-linear error, where $R(\theta) = \cos(\theta)I_2 + \sin(\theta)J_2 $ denotes the planar rotation matrix of angle $\theta$  (see definition of $J_2$ below):
\begin{equation}\begin{aligned}\label{xi:ccar}
\xi_t:= & \begin{pmatrix} (\theta_t - \bar{\theta_t}) \\ \frac{1}{2} \left( \theta_t - \bar \theta_t \right) \left[ a(\theta_t - \bar \theta_t ) I_2 - J_2 \right] R(-\bar{\theta_t})(X_t-\bar{X_t}) \end{pmatrix}, \\ & \text{ with  } J_2=\begin{pmatrix} 0 & -1 \\ 1 & 0 \end{pmatrix} \text{ and } a(s) = \frac{\sin(s)}{1-\cos(s)},
\end{aligned}\end{equation}
which is equal to 0 indeed if and only if both trajectories coincide. We are about to prove by elementary means a surprising property that will be generalized by Theorem \ref{dechire:thm}.
\begin{prop}Contrarily to the linear error obeying  \eqref{ouaich}, the alternative non-linear error \eqref{xi:ccar} obeys the following  \emph{linear} and autonomous equation although the system, and the error, are totally non-linear:
\begin{equation}
\frac{d}{dt} \xi_t = \begin{pmatrix} 0 & 0 & 0 \\ 0 & 0 & \omega_t \\ -u_t & -\omega_t & 0 \end{pmatrix} \xi_t.
\label{eq::linear_2D}
\end{equation}
\end{prop}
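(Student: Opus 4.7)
The plan is to recognise the unicycle state as an element of the matrix Lie group $SE(2)$ and to identify the non-linear error $\xi_t$ of \eqref{xi:ccar} as the Lie-algebra logarithm of the left-invariant error between the two trajectories. Under this identification, the linearity of \eqref{eq::linear_2D} reduces to the bilinearity of the Lie bracket in $\mathfrak{se}(2)$.

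First, I would embed the state as $g_t := \begin{pmatrix} R(\theta_t) & X_t \\ 0 & 1 \end{pmatrix} \in SE(2)$ and rewrite the dynamics in the left-invariant form $\dot g_t = g_t \mathcal{U}_t$ with $\mathcal{U}_t := \begin{pmatrix} \omega_t J_2 & u_t e_1 \\ 0 & 0 \end{pmatrix} \in \mathfrak{se}(2)$ and $e_1 := (1,0)^T$. Since the reference trajectory obeys the same equation, differentiating $e_t := \bar g_t^{-1} g_t$ yields the \emph{autonomous} equation $\dot e_t = e_t \mathcal{U}_t - \mathcal{U}_t e_t$; concretely $e_t$ is block-triangular with rotation block $R(\theta_t - \bar\theta_t)$ and translation block $R(-\bar\theta_t)(X_t - \bar X_t)$.

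Next I would identify $\xi_t$ with $\log e_t$. Using the closed form $\exp\begin{pmatrix} \alpha J_2 & v \\ 0 & 0 \end{pmatrix} = \begin{pmatrix} R(\alpha) & V(\alpha) v \\ 0 & 1 \end{pmatrix}$ with $V(\alpha) := \frac{\sin\alpha}{\alpha} I_2 + \frac{1 - \cos\alpha}{\alpha} J_2$, and noting that $\sin\alpha/(1-\cos\alpha) = \cot(\alpha/2) = a(\alpha)$, a short algebraic check gives $V(\alpha)^{-1} = \frac{\alpha}{2}[a(\alpha) I_2 - J_2]$—exactly the matrix appearing in \eqref{xi:ccar}. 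Hence the defined $\xi_t$ is nothing but $\log e_t$.

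Finally I would derive the evolution. From $\dot\theta_t = \dot{\bar\theta}_t = \omega_t$ one gets $\dot\alpha_t = 0$ for $\alpha_t := \theta_t - \bar\theta_t$, which yields the first row of \eqref{eq::linear_2D} and, crucially, makes $V(\alpha_t)^{-1}$ constant in $t$. A direct calculation for the translation part $y_t := R(-\bar\theta_t)(X_t - \bar X_t)$ gives $\dot y_t = -\omega_t J_2 y_t + u_t (R(\alpha_t) - I_2) e_1$, and the key identity $V(\alpha)^{-1}(R(\alpha) - I_2) = \alpha J_2$—immediate from $R(\alpha) - I_2 = \alpha V(\alpha) J_2$ and commutativity inside the algebra spanned by $I_2$ and $J_2$—collapses the non-linear input term into a linear one. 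Combined with the commutativity of $V(\alpha)^{-1}$ with $J_2$, this produces $\dot z_t = -\omega_t J_2 z_t + \alpha_t u_t J_2 e_1$ for $z_t := V(\alpha_t)^{-1} y_t$, which in coordinates is exactly the bottom two rows of \eqref{eq::linear_2D}; compactly, this is the Lie-algebra identity $\dot\xi_t = [\xi_t, \mathcal{U}_t]$. The only conceptually non-obvious step is recognising \eqref{xi:ccar} as the $SE(2)$ logarithm of the left-invariant error; once that is in hand the linearity is just the bilinearity of the Lie bracket, and no further computation is needed.
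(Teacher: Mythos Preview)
Your proof is correct and, at the computational level, coincides with the paper's: both differentiate the translation component of $\xi_t$ directly, exploit that $\dot\alpha_t=0$, and collapse the non-linear term via a trigonometric identity (the paper uses $\sin^2 s/(1-\cos s)=1+\cos s$; your version is the cleaner matrix identity $V(\alpha)^{-1}(R(\alpha)-I_2)=\alpha J_2$). Your framing via the $SE(2)$ embedding and the identification $\xi_t=\log(\bar g_t^{-1}g_t)$ is more structural and in fact anticipates exactly what the paper says right after Theorem~\ref{dechire:thm}: that $\xi_t$ in \eqref{xi:ccar} ``merely is the Lie logarithm of the left-invariant error''.

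One caveat on your closing sentence. The claim that ``once the logarithm identification is in hand the linearity is just the bilinearity of the Lie bracket, and no further computation is needed'' overstates the case. From $\dot e_t=[e_t,\mathcal U_t]$ one does \emph{not} get $\dot\xi_t=[\xi_t,\mathcal U_t]$ by bilinearity alone: in general $\frac{d}{dt}\exp(\xi_t)=\exp(\xi_t)\cdot d\exp^{-1}_{\xi_t}(\dot\xi_t)$ involves the full BCH-type series in $\mathrm{ad}_{\xi_t}$, and the collapse to a single bracket is exactly the non-trivial log-linear property that the paper isolates later as Theorem~\ref{dechire:thm}. Your direct computation with $\dot z_t=-\omega_t J_2 z_t+\alpha_t u_t J_2 e_1$ is what actually establishes the proposition; the bracket formula $\dot\xi_t=[\xi_t,\mathcal U_t]$ is a nice \emph{a posteriori} observation, not an \emph{a priori} shortcut.
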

\begin{proof}
We will use the notations   $\delta x_t = \frac{1}{2} \left( \theta_t - \bar \theta_t \right) \left[ a(\theta_t - \bar \theta_t ) I_2 - J_2 \right] R(-\bar{\theta_t})(X_t-\bar{X_t})$. Since we have  $\frac{d}{dt} \delta \theta_t=0$ as in the linear case above, only the two last terms of $\delta x_t$ change over time. Moreover, $J_2$ commutes with $I_2$, $J_2$ and $R(-\bar \theta_t)$ and $\dotex R(-\bar\theta_t)=-J_2\omega_tR(-\bar\theta_t)$. Thus we have:
\begin{align*}
 \frac{d}{dt} & \delta x_t = -J_2 \omega_t \delta x_t \\
 & +  \frac{1}{2} \delta \theta_t \left[ a(\delta \theta_t ) I_2 - J_2 \right] R(-\bar{\theta_t}) \left[ R(\theta_t) -R( \bar \theta_t) \right] (u_t , 0)^T \\
 = & -J_2 \omega_t \delta x_t \\
 & +  \frac{1}{2} \delta \theta_t \left[ a(\delta \theta_t ) I_2 - J_2 \right] \left[ (\cos(\delta \theta_t)-1) I_2 + \sin(\delta \theta_t) J_2  \right] (u_t , 0)^T \\
 = & - J_2\omega_t  \delta x_t +   \delta \theta_t J_2 (u_t , 0)^T \\
 = & -(\omega_t J_2)  \delta x_t +   \delta \theta_t (0 , -u_t)^T.
\end{align*}
In the second to last equality we used the relation $\frac{\sin(s)^2}{1-\cos(s)} = 1+ \cos(s) $. Equation \eqref{eq::linear_2D} is proved.
\end{proof}
 The present section provides a novel geometrical framework - encompassing this example - to characterize systems on Lie groups for which such a property holds. In turn, such a property will simplify the convergence analysis of non-linear observers, namely the IEKF,  due to the implied similarities with the linear case.

\subsection{Systems on Lie groups with state trajectory independent error propagation property}

Let $G\subset \RR^{N\times N}$ be a matrix Lie group whose Lie algebra is denoted $\frak{g}$ and has dimension $\mathrm{dim}~\frak{g}$. We consider a class of dynamical systems:
\begin{align}
\frac{d}{dt}\chi_t & = f_{u_t}(\chi_t), \label{eq:IEKF_model}
\end{align}
where the state $\chi_t$ lives in the Lie group $G$ and ${u_t}$ is an input variable. Consider two distinct trajectories $\chi_t$ and $\bar\chi_t$ of \eqref{eq:IEKF_model}.  Define the left-invariant and right-invariant errors  $\eta_t^L$ and $\eta_t^R$ between the two trajectories as:
\begin{equation}
\eta_t^L=\chi_t^{-1} \bar{\chi}_t \qquad \text{(left invariant),}\label{leftinv}
\end{equation}
\begin{equation}
\eta_t^R=\bar{\chi}_t \chi_t^{-1} \qquad \text{(right invariant).}\label{rightinv}
\end{equation}
The  terminology stems from the invariance of e.g.,  \eqref{leftinv} to (left) multiplications $(\chi,\bar\chi)\to(\Gamma\chi,\Gamma\bar\chi)$ for $\Gamma\in G$. 
\begin{defn}
The left-invariant and right-invariant errors are said to have a  state-trajectory independent propagation if they satisfy a differential equation of the form $\frac{d}{dt}\eta_t = g_{u_t}(\eta_t)$.
\end{defn}

Note that, in general the time derivative of $\eta_t$ is a complicated function depending on $u_t$ and both $\chi_t$ and $\bar{\chi}_t$ in a way that does not boil down to a function of $\eta_t$, see for instance eq \eqref{ouaich} above. The following result allows characterizing the class of systems of the form \eqref{eq:IEKF_model} for which the property holds. 

\begin{thm}
\label{thm::CNS}
 The three following conditions are equivalent for the dynamics \eqref{eq:IEKF_model}:
\begin{enumerate}[i]
\item The left-invariant error \eqref{leftinv} is state trajectory independent
\item The right-invariant error \eqref{rightinv} is state trajectory independent
\item For all $t>0$ and $a,b \in G$ we have (in the tangent space at $ab$):
\begin{equation}
\label{eq::main_relation}
f_{u_t}(ab) = f_{u_t}(a)b + af_{u_t}(b) - a f_{u_t}(I_d)b,
\end{equation}
\end{enumerate}
where $I_d$ denotes the identity matrix. Moreover, if one of these conditions is satisfied we have 
\begin{align}\frac{d}{dt}\eta_t^L &= g_{u_t}^L(\eta_t^L)\quad\text{where}\quad g_{u_t}^L(\eta)=f_{u_t}(\eta)-f_{u_t}(I_d)\eta \label{g_f_1},\\\frac{d}{dt}\eta_t^R &= g_{u_t}^R(\eta_t^R)\quad\text{where}\quad
g_{u_t}^R(\eta)=f_{u_t}(\eta)-\eta f_{u_t}(I_d). \label{g_f_2}\end{align}
\end{thm}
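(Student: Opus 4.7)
The plan is to prove the two equivalences (i)$\Leftrightarrow$(iii) and (ii)$\Leftrightarrow$(iii) in parallel, each by differentiating the invariant error, substituting the dynamics, and then observing that state-trajectory independence is equivalent to a pointwise identity on $G\times G$ that turns out to be exactly \eqref{eq::main_relation}. The explicit formulas \eqref{g_f_1} and \eqref{g_f_2} will fall out of the same calculation.

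First I would compute $\tfrac{d}{dt}\eta_t^L$ for $\eta_t^L=\chi_t^{-1}\bar\chi_t$. Using $\tfrac{d}{dt}\chi_t^{-1}=-\chi_t^{-1}\dot\chi_t\chi_t^{-1}$ and the dynamics \eqref{eq:IEKF_model}, one gets $\tfrac{d}{dt}\eta_t^L=-\chi_t^{-1}f_{u_t}(\chi_t)\eta_t^L+\chi_t^{-1}f_{u_t}(\bar\chi_t)$. Substituting $\bar\chi_t=\chi_t\eta_t^L$ and factoring $\chi_t^{-1}$ on the left yields
\begin{equation*}
\frac{d}{dt}\eta_t^L=\chi_t^{-1}\bigl[f_{u_t}(\chi_t\eta_t^L)-f_{u_t}(\chi_t)\,\eta_t^L\bigr].
\end{equation*}
State-trajectory independence means the right-hand side depends only on $(u_t,\eta_t^L)$, i.e.\ for every input value $u$ the map $(a,b)\mapsto a^{-1}\bigl[f_u(ab)-f_u(a)b\bigr]$ must not depend on $a\in G$. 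Setting $a=I_d$ identifies the common value as $f_u(b)-f_u(I_d)b$, and equating with the general expression gives $f_u(ab)=f_u(a)b+af_u(b)-af_u(I_d)b$, which is precisely \eqref{eq::main_relation}. Conversely, if \eqref{eq::main_relation} holds the same computation reduces $\tfrac{d}{dt}\eta_t^L$ to $f_{u_t}(\eta_t^L)-f_{u_t}(I_d)\eta_t^L$, proving simultaneously (i)$\Leftrightarrow$(iii) and formula \eqref{g_f_1}.

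The right-invariant case is symmetric. Differentiating $\eta_t^R=\bar\chi_t\chi_t^{-1}$ and substituting $\bar\chi_t=\eta_t^R\chi_t$ leads to
\begin{equation*}
\frac{d}{dt}\eta_t^R=\bigl[f_{u_t}(\eta_t^R\chi_t)-\eta_t^R f_{u_t}(\chi_t)\bigr]\chi_t^{-1},
\end{equation*}
and with the roles of the two factors swapped (now $b=\chi_t$ plays the role of the ``free'' argument), evaluating at $b=I_d$ again pinpoints the candidate autonomous right-hand side as $f_u(a)-af_u(I_d)$, and independence from $b$ turns into exactly \eqref{eq::main_relation}. This proves (ii)$\Leftrightarrow$(iii) and formula \eqref{g_f_2}, closing the chain of equivalences.

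The only genuinely subtle step is the passage from the qualitative statement ``the error has a state-trajectory independent propagation'' to the pointwise algebraic identity on $G\times G$. I would justify it by remarking that initial conditions of \eqref{eq:IEKF_model} are unconstrained, so for any prescribed pair $(a,b)\in G\times G$ one can choose a trajectory with $\chi_0=a$ and $\bar\chi_0=ab$ (hence $\eta_0^L=b$); evaluating the candidate ODE at $t=0$ then forces the identity for every $a,b$ and every admissible input value. Beyond that, the calculation is just careful non-commutative bookkeeping, and the formulas \eqref{g_f_1}–\eqref{g_f_2} emerge for free as a byproduct of the ``set $a=I_d$'' (resp.\ $b=I_d$) step.
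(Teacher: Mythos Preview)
Your proof is correct and follows essentially the same approach as the paper: differentiate the invariant error, set the free state variable to $I_d$ to identify the autonomous right-hand side, and reinject to obtain the algebraic identity \eqref{eq::main_relation}, with the right-invariant case handled symmetrically. Your explicit justification that arbitrary $(a,b)\in G\times G$ can be realized as initial data is a welcome clarification the paper leaves implicit.
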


\begin{proof}
Assume we have $\frac{d}{dt}\eta_t^L = g_{u_t}(\eta_t^L)$ for a certain function $g_{u_t}$ and any  $\eta_t^L = \chi_t^{-1} \bar{\chi}_t$, where $\chi_t$ and $\bar{\chi}_t$ are solutions of \eqref{eq:IEKF_model}. We have:
\begin{align}
&g_{u_t}(\chi_t^{-1} \bar{\chi}_t)  = \frac{d}{dt} (\chi_t^{-1} \bar{\chi}_t) = -\chi_t^{-1} [\frac{d}{dt} \chi_t] \chi_t^{-1} \bar{\chi}_t + \chi_t^{-1} \frac{d}{dt}\bar{\chi}_t \nonumber\\&\qquad = -\chi_t^{-1} f_{u_t}(\chi_t)  \eta_t^L + \chi_t^{-1}f_{u_t}(\bar{\chi}_t), \nonumber \\
&\text{i.e. } g_{u_t}(\eta_t^L)  = -\chi_t^{-1} f_{u_t}(\chi_t) \eta_t^L + \chi_t^{-1} f_{u_t}(\chi_t \eta_t^L) \label{g_f_11}.
\end{align}
This has to hold for any $\chi_t$ and $\eta_t^L$. In the particular case where $\chi_t = I_d$ we obtain:
\begin{equation}
g_{u_t}(\eta_t^L)=f_{u_t}(\eta_t^L)-f_{u_t}(I_d)\eta_t^L . \label{g_f_22}
\end{equation}
Reinjecting \eqref{g_f_22} in \eqref{g_f_11} we obtain:
\[
f_{u_t}(\chi_t \eta_t^L) = f_{u_t}(\chi_t)\eta_t^L + \chi_tf_{u_t}(\eta_t^L) - \chi_t f_{u_t}(I_d) \eta_t^L.
\]
The converse is trivial and the proof is analogous for right-invariant errors.
\end{proof}

\begin{rem}
The particular cases of left-invariant and right-invariant dynamics, or the combination of both as follows, verify \eqref{eq::main_relation}. Let $f_{v_t,\omega_t} (\chi) = v_t \chi +  \chi \omega_t$. We have indeed:
\begin{align*}
&f_{v_t,\omega_t}(a)b+af_{v_t,\omega_t}(b)- a f_{v_t,\omega_t}(I_d)b \\&=  (v_t a + a \omega_t) b + a(v_t b + b \omega_t) - a (v_t + \omega_t) b \\
 &  =  v_t a b + ab \omega_t = f_{v_t,\omega_t}(ab).
\end{align*}
\end{rem}

\begin{rem}
In the particular case where $G$ is a vector space with  standard addition as the group composition law,  the condition \eqref{eq::main_relation} boils down to $f_{u_t}(a+b) = f_{u_t}(a) + f_{u_t}(b) -f_{u_t}(0)$ and we recover the affine functions. We thus see the class of system introduced here  appears as a generalization of the linear case. 
\end{rem}


%
%
%
%
%
%
In the next section we show that the dynamics of the form \eqref{eq:IEKF_model} with additional property  \eqref{eq::main_relation} have striking properties  generalizing those of linear systems.


\subsection{Log-linear property of the error propagation}
In the sequel, we will systematically consider systems of the form \eqref{eq:IEKF_model} with the additional property  \eqref{eq::main_relation}, i.e.  systems on Lie groups defined by
\begin{equation}
\label{eq::mult_linear}
\begin{aligned}\frac{d}{dt} \chi_t & = f_{u_t}(\chi_t), \\
\quad\text{where $\forall (u,a,b)$}\quad
f_u(ab) & =af_u(b)+f_u(a)b-af_u(Id)b.
\end{aligned}
\end{equation}


For such systems, Theorem \ref{thm::CNS} proves that the left (resp. right) invariant error  is a solution to the equation  $\frac{d}{dt}\eta_t=g_{u_t}(\eta_t)$ where $g_{u_t}$ is given by   \eqref{g_f_1} (resp. \eqref{g_f_2}).  We have the following novel and striking property.

\begin{thm}\label{dechire:thm}[Log-linear property of the error]
Consider the left or right invariant error $\eta_t^i$ as defined by \eqref{leftinv} or \eqref{rightinv} between two arbitrarily far  trajectories of \eqref{eq::mult_linear}, the superscript  $i$ denoting indifferently $L$ or $R$. Let $\mathcal{L}_{\mathfrak{g}}$ and $\exp(.)$ be defined as in Appendix \ref{sect::tuto_Lie_groups}. Let $\xi_0^i\in\RR^{\text{dim }\mathfrak  g}$ be such that initially $\exp(\xi_0^i)=\eta_0^i$. Let $A_{u_t}^i$ be  defined by $g_{u_t}(\exp(\xi)) = \mathcal{L}_{\mathfrak{g}}(A_{u_t}^i \xi) + O(\norm{\xi}^2)$. If $\xi_t^i$ is defined for $t>0$ by the \emph{linear} differential equation in  $\RR^{\text{dim } \mathfrak g}$
\begin{equation}
\label{eq::def_xi}
\dotex \xi_t^i=A_t^i\xi_t^i,
\end{equation}
\emph{then}, we have for the true non-linear error $\eta_t^i$, the correspondence \emph{at all times} and for arbitrarily large errors 
$$
\forall t\geq 0\quad \eta_t^i=\exp(\xi_t^i).
$$
 \end{thm}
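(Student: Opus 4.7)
The plan is to leverage uniqueness of solutions to ordinary differential equations. Define $\tilde\eta_t := \exp(\xi_t^i)$ where $\xi_t^i$ satisfies the linear equation $\dot\xi_t^i = A_t^i\xi_t^i$. Since $\exp(\xi_0^i) = \eta_0^i$ by hypothesis, it suffices to verify that $\tilde\eta_t$ satisfies the same nonlinear ODE $\dot\eta = g_{u_t}^i(\eta)$ as the true error $\eta_t^i$ (the latter produced by Theorem \ref{thm::CNS}); uniqueness then forces $\tilde\eta_t = \eta_t^i$ for all $t\geq 0$. I write $g_{u_t}$ for either $g_{u_t}^L$ or $g_{u_t}^R$; the two cases are handled identically.

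The first step is to extract a Leibniz-type \emph{derivation identity} from hypothesis \eqref{eq::main_relation}. Substituting $f_u(ab) = f_u(a)b + af_u(b) - af_u(I_d)b$ into $g_u^L(ab) = f_u(ab) - f_u(I_d)\,ab$ (and performing the analogous computation for $g_u^R$) yields immediately, for all $a,b \in G$,
\[
g_{u_t}(ab) = g_{u_t}(a)\,b + a\,g_{u_t}(b), \qquad g_{u_t}(I_d) = 0.
\]

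The central step is a closed form for $g_{u_t}(\exp(\xi))$, valid at every $\xi \in \mathfrak g$. Fix $u_t$ and $\xi$, and set $h(t) := g_{u_t}(\exp(t\xi))$. Using $\exp((t+s)\xi) = \exp(t\xi)\exp(s\xi)$ together with the derivation identity, $h$ satisfies the functional equation $h(t+s) = h(t)\exp(s\xi) + \exp(t\xi)\,h(s)$. Differentiating at $s = 0$ gives the linear matrix ODE
\[
h'(t) = h(t)\,\xi + \exp(t\xi)\,L_{u_t}(\xi), \qquad h(0) = 0,
\]
where $L_{u_t} := dg_{u_t}|_{I_d} : \mathfrak g \to \mathfrak g$ is the linearization of $g_{u_t}$ at the identity. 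Because $\xi$ commutes with $\exp(t\xi)$, variation-of-constants solves this ODE explicitly, and gives (with $\Gamma(x) := (1-e^{-x})/x$, evaluating at $t = 1$)
\[
g_{u_t}(\exp(\xi)) = \exp(\xi)\,\Gamma(\mathrm{ad}_\xi)\,L_{u_t}(\xi).
\]
This is the algebraic miracle driving the theorem: every higher-order nonlinearity of $g_{u_t}$ evaluated at $\exp(\xi)$ bundles exactly into the power series $\Gamma(\mathrm{ad}_\xi)$.

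To conclude, invoke the standard matrix-exponential derivative formula $\tfrac{d}{dt}\exp(\xi_t) = \exp(\xi_t)\,\Gamma(\mathrm{ad}_{\xi_t})\,\dot\xi_t$. Substituting $\dot\xi_t = A_t^i\xi_t$, and identifying (via $\mathcal L_{\mathfrak g}$) the coordinate operator $A_t^i$ with $L_{u_t}$---which is exactly the identification forced by the defining relation $g_{u_t}(\exp(\xi)) = \mathcal L_{\mathfrak g}(A_{u_t}^i\xi) + O(\|\xi\|^2)$ in the theorem statement---one obtains $\tfrac{d}{dt}\tilde\eta_t = \exp(\xi_t)\,\Gamma(\mathrm{ad}_{\xi_t})\,L_{u_t}(\xi_t)$, which by the previous display equals $g_{u_t}(\exp(\xi_t)) = g_{u_t}(\tilde\eta_t)$ \emph{identically}, not merely to first order in $\xi$. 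Hence $\tilde\eta_t$ solves the same Cauchy problem as $\eta_t^i$, and uniqueness delivers $\eta_t^i = \exp(\xi_t^i)$ for all $t\geq 0$. The common factor $\exp(\xi_t)\Gamma(\mathrm{ad}_{\xi_t})$ appears on both sides of the matching, so no invertibility of $\Gamma(\mathrm{ad}_{\xi_t})$ is needed---this is why the equality holds for \emph{arbitrarily large} errors, not just small ones. The main obstacle will be executing the variation-of-constants step carefully and recognizing the resulting integral as $\Gamma(\mathrm{ad}_\xi)$; everything else is routine Lie-group calculus.
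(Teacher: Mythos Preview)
Your proof is correct and takes a genuinely different route from the paper's own argument. Both arguments start from the same derivation identity $g_{u_t}(ab)=g_{u_t}(a)b+ag_{u_t}(b)$, but diverge from there. The paper observes that this identity makes the \emph{flow} $\Phi_t$ of $\dot\eta=g_{u_t}(\eta)$ a group homomorphism, $\Phi_t(\eta_0\eta_0')=\Phi_t(\eta_0)\Phi_t(\eta_0')$; it then exponentiates this to $\Phi_t(\eta_0^n)=\Phi_t(\eta_0)^n$, writes $e^{\xi_0}=(e^{\xi_0/n})^n$, and lets $n\to\infty$ to kill the quadratic remainder and obtain $\Phi_t(e^{\xi_0})=e^{F_t\xi_0}$ with $\dot F_t=A_tF_t$. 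You instead stay at the infinitesimal level: you derive the closed form $g_{u_t}(\exp\xi)=\exp(\xi)\,\Gamma(\mathrm{ad}_\xi)\,L_{u_t}(\xi)$ by solving a linear matrix ODE for $h(s)=g_{u_t}(\exp(s\xi))$, then match it against the standard dexp formula $\tfrac{d}{dt}\exp(\xi_t)=\exp(\xi_t)\,\Gamma(\mathrm{ad}_{\xi_t})\,\dot\xi_t$ and invoke ODE uniqueness. Your approach is more computational and yields as a byproduct an explicit expression for $g_{u_t}\circ\exp$ that makes transparent \emph{why} no higher-order correction is needed (the same $\Gamma(\mathrm{ad}_\xi)$ factor appears on both sides), while the paper's flow-homomorphism argument is more conceptual and emphasizes that the log-linearity is a direct consequence of the flow being a one-parameter family of Lie-group endomorphisms. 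Either is a complete proof; your remark that invertibility of $\Gamma(\mathrm{ad}_{\xi_t})$ is not required is a nice observation that the paper's limiting argument also implicitly sidesteps.
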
The latter result, whose proof has been moved to the Appendix, shows that a wide range of nonlinear problems (see examples below) can lead to linear error equations provided the error variable is correctly chosen.  We also see the results displayed in the previous introductory example of Section \ref{introd:ex} are mere applications of the latter theorem, as the non-holonomic car example turns out to perfectly fit into our framework (see Section \ref{sect::examples:A}) and  $\xi_t$ in eq \eqref{xi:ccar} actually  merely is the Lie logarithm of the left-invariant error.   
 This will be extensively used in Section \ref{sect::IEKF}, and in the examples to prove stability properties of IEKFs.


\section{Invariant Extended Kalman Filtering}
\label{sect::IEKF}

In this section we first recap the equations of the Invariant EKF (IEKF), a variant of the EKF devoted to Lie groups space states, that has been introduced in continuous time in \cite{bonnabel2007left,bonnabel2009invariant}. We derive the equations in continuous time with discrete observations here, which has already been done in a restricted setting in \cite{barrau2013intrinsic}, and we propose a novel matrix (Lie group) framework to simplify the design. We then show that for the class of systems introduced in Section \ref{sect::deterministic}, under observability conditions, and painless conditions on the covariance matrices considered here as design parameters, the IEKF is a (deterministic) non-linear observer with local convergence properties around \emph{any} trajectory, a feature extremely rare to obtain in the field of non-linear observers, due to the dependency of the estimation error to the true unknown trajectory. The notions necessary to follow Section \ref{sect::IEKF} are given in Appendix \ref{sect::tuto_Lie_groups}. 

\subsection{Full system and IEKF general structure}

We consider in this section an equation on a matrix Lie group $G\subset \RR^{N\times N}$ of the form:
\begin{align}
\frac{d}{dt}\chi_t & = f_{u_t}(\chi_t), \label{eq::IEKF_modelle}
\end{align}
with $
f_u(ab)=af_u(b)+f_u(a)b-af_u(Id)b$ for all $(u,a,b)\in U\times G\times G$. This system will be associated to two different kinds of observations.

\subsubsection{Left-invariant observations}
The first family of outputs we are interested in write:
\begin{align}\label{claude:eq}
Y^1_{t_n} & = \chi_{t_n} d^1  \qquad , \qquad
  ... \qquad , \qquad
Y^k_{t_n} = \chi_{t_n} d^k,
\end{align}
where $(d^i)_{i \leq k}$ are known vectors.  The Left-Invariant Extended Kalman Filter (LIEKF) is defined in this setting through the following propagation  and update steps:
\begin{align}
\frac{d}{dt}\hat{\chi}_t & = f_{u_t}(\hat{\chi}_t),\quad  t_{n-1}\leq t< t_n , \qquad\quad\text{Propagation} \label{IEKF_propagation} \\
\hat{\chi}_{t_n}^+ & = \hat{\chi}_{t_n} \exp \left[ L_n
\begin{pmatrix}
\hat{\chi}_{t_n}^{-1} Y_{t_n}^1- d^1 \\
... \\
\hat{\chi}_{t_n}^{-1} Y_{t_n}^k - d^k
\end{pmatrix} \right], \quad\text{Update} \label{LIEKF_update}
\end{align}where the function $L_n:\RR^{kN}\to \RR^{\dim \mathfrak{g}}$ is to be defined in the sequel using error linearizations. 
A left-invariant error between true state $\chi_t$ and estimated state $\hat \chi_t$ can be associated to this filter:
\begin{align}\label{tule}
\eta_t^L = \chi_t^{-1} \hat{\chi}_t.
\end{align}
During the Propagation step,  ${\chi}_t$ and $\hat{\chi}_t$ are two trajectories of the system \eqref{eq::IEKF_modelle}. Thus, the error \eqref{tule} is independent from the true state trajectory from Theorem \ref{thm::CNS} and eq. \eqref{g_f_1} ! We have thus
\begin{align}
\dotex\eta_t^L=g_{u_t}^L(\eta_t^L),\quad t_{n-1}\leq t<t_n.
\label{er_r:eqrev}
\end{align}Consider now the following linear differential equation in $\RR^{\text{dim }\mathfrak{g}}$: 
\begin{align}
\frac{d}{dt} \xi_t = A_{u_t}\xi_t ,  \label{line:eqrev}
\end{align}
where $A_{u_t}$ is defined by $g_{u_t}^L(\exp(\xi)) = \mathcal{L}_{\mathfrak{g}}(A_{u_t} \xi) + O(\norm{\xi_t}^2)$. Theorem \ref{dechire:thm} implies the unexpected result:
\begin{prop}\label{mrpror}If $\xi_t$ is defined as a solution to the  \emph{linear} system \eqref{line:eqrev} and  $\eta_t$ is defined as the solution to the \emph{nonlinear} error system \eqref{er_r:eqrev}, then if at time $t_{n-1}$  we have $\eta_{t_{n-1}}=\exp(\xi_{t_{n-1}})$ then the equality $\eta_t\label{expmap}=\exp(\xi_t)=\exp_m \left( \mathcal{L}_{\mathfrak{g}}({\xi}_t) \right)
$ is verified at all times $ t_{n-1}\leq t<t_n$, even for arbitrarily large initial errors.\end{prop}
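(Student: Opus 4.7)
The plan is to recognize Proposition \ref{mrpror} as a direct specialization of Theorem \ref{dechire:thm} to the propagation phase of the LIEKF between two consecutive discrete measurement times $t_{n-1}$ and $t_n$. The only real work is to check that the hypotheses of Theorem \ref{dechire:thm} are indeed satisfied here, and to reconcile the notation.

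First I would observe that during the open interval $t_{n-1} \leq t < t_n$, both the true state $\chi_t$ and the estimate $\hat{\chi}_t$ are trajectories of the same dynamics \eqref{eq::IEKF_modelle}, differing only in their values at $t_{n-1}$. Since \eqref{eq::IEKF_modelle} satisfies the structural property \eqref{eq::main_relation} by assumption, Theorem \ref{thm::CNS} applies, and it guarantees that the left-invariant error $\eta_t^L = \chi_t^{-1}\hat{\chi}_t$ obeys the state-trajectory-independent equation $\dotex \eta_t^L = g_{u_t}^L(\eta_t^L)$ with $g_{u_t}^L(\eta) = f_{u_t}(\eta) - f_{u_t}(I_d)\eta$, which is exactly \eqref{er_r:eqrev}. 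Next I would invoke Theorem \ref{dechire:thm} with starting time $t_{n-1}$ in place of $0$ and initial value $\eta_{t_{n-1}}^L = \exp(\xi_{t_{n-1}})$. The theorem immediately yields the identity $\eta_t^L = \exp(\xi_t)$ for all $t \in [t_{n-1}, t_n)$, where $\xi_t$ is the solution of the linear ODE $\dotex \xi_t = A_{u_t}\xi_t$, $A_{u_t}$ being precisely the matrix defined in the statement via the first-order expansion $g_{u_t}^L(\exp(\xi)) = \mathcal{L}_{\mathfrak{g}}(A_{u_t}\xi) + O(\norm{\xi}^2)$. The final rewriting $\exp(\xi_t) = \exp_m(\mathcal{L}_{\mathfrak{g}}(\xi_t))$ is just unpacking the convention from Appendix \ref{sect::tuto_Lie_groups} that the Lie group exponential $\exp$, when applied to an element of $\RR^{\text{dim }\mathfrak{g}}$, means first embedding into $\mathfrak{g}$ via $\mathcal{L}_{\mathfrak{g}}$ and then taking the matrix exponential.

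There is essentially no obstacle to overcome beyond verifying that Theorem \ref{dechire:thm}, stated for $t \geq 0$, may be applied on a sub-interval $[t_{n-1}, t_n)$; this follows by a trivial time shift, since the statement is purely local in time and the linear ODE has a unique solution for any given initial condition. The striking content --- that the nonlinear error remains exactly equal to the Lie exponential of a solution of a \emph{linear} ODE, for arbitrarily large initial errors and throughout the whole propagation step --- is entirely inherited from Theorem \ref{dechire:thm} and requires no further argument.
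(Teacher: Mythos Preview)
Your proposal is correct and follows exactly the paper's approach: the paper simply states that ``Theorem \ref{dechire:thm} implies the unexpected result'' and then gives Proposition \ref{mrpror} without further argument. Your write-up merely makes explicit the verification of hypotheses (that $\chi_t$ and $\hat\chi_t$ are both trajectories of \eqref{eq::IEKF_modelle}, so Theorem \ref{thm::CNS} gives the autonomous error equation \eqref{er_r:eqrev}) and the trivial time shift needed to apply Theorem \ref{dechire:thm} on $[t_{n-1},t_n)$ instead of $[0,\infty)$.
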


Besides, at the update step, the evolution of the invariant error variable \eqref{tule} merely writes:
\begin{align}
(\eta_{t_n}^L)^+ = \chi_{t_n}^{-1} \hat{\chi}_{t_n}^+
= \eta_{t_n}^L \exp \left[ L_n
\begin{pmatrix}
(\eta_{t_n}^L)^{-1} d^1 -  d^1  \\
... \\
(\eta_{t_n}^L)^{-1} d^k -  d^k 
\end{pmatrix} \right].\label{update:evolution:left:eq}
\end{align}
We see that the nice geometrical structure of the LIEKF allows the updated error  $(\eta_{t_n}^L)^+$ to be here again only a function of the error just before update $\eta_{t_n}^L$, i.e. to be independent from the true state ${\chi}_{t_n}$.

\subsubsection{Right-invariant observations}
The second family of observations we are interested in have the form:
\begin{align}\label{claudio:eq}
Y^1_{t_n} = \chi_{t_n}^{-1 } d^1 \qquad , \qquad
  ... \qquad , \qquad
Y^k_{t_n} = \chi_{t_n}^{-1} d^k,
\end{align}
with the same notations as in the previous section.
The Right-Invariant EKF (RIEKF) is defined here as:
\begin{align}
\frac{d}{dt}\hat{\chi}_t & = f_{u_t}(\hat{\chi}_t),  \label{RIEKF_propp}\\
\hat{\chi}_{t_n}^+ & = \exp \left[ L_n
\begin{pmatrix}
\hat{\chi}_{t_n} Y_{t_n}^1- d^1 \\
... \\
\hat{\chi}_{t_n} Y_{t_n}^k - d^k
\end{pmatrix} \right] \hat{\chi}_{t_n}. \label{RIEKF_update}
\end{align}
A right-invariant error can be associated to this filter:
\begin{align}\label{skypi}
\eta_t^R = \hat{\chi}_t \chi_t^{-1}.
\end{align}
Once again,  Theorem \ref{thm::CNS} proves the evolution of the error does not depend on the state of the system. The analog of Proposition \ref{mrpror} is thus easily derived for the error \eqref{skypi} and we skip it due to space limitations.

At the update step, the evolution of the invariant error variable reads:
\[
(\eta_{t_n}^R)^+ =\hat{\chi}_t^+ \chi_t^{-1}
= \exp \left[ L_n
\begin{pmatrix}
\eta_{t_n}^R d^1 -  d^1 \\
... \\
\eta_{t_n}^R d^k -  d^k
\end{pmatrix} \right] \eta_{t_n}^R,
\]
so that the error update does not depend on the true state either.

\subsection{IEKF gain tuning}\label{ouida}

In the standard theory of Kalman filtering, EKFs are designed for ``noisy" systems associated with the deterministic considered system. In a deterministic context, the covariance matrices $Q$ and $N$ of the noises are left free to tune by the user, and are design parameters for the EKF used as  a non-linear observer. Yet, in the spirit of \cite{song-grizzle-95}, it is nevertheless convenient to  associate a ``noisy'' system with the considered deterministic system consisting of dynamics \eqref{eq::IEKF_modelle} with outputs \eqref{claude:eq} or \eqref{claudio:eq}. The obtained error equations can be linearized, and the standard Kalman equations applied to make this error decrease. This way, the matrices $Q$ and $N$  can be \emph{interpreted} as covariance matrices. And in many engineering applications, the characteristics of the noises of the sensors are approximately known, so that the engineer can use the corresponding covariance matrices as a useful guide to tune (or design) the  non-linear observer, that is here the IEKF. This provides him with (at least) a first sensible tuning  of the parameter matrices, which is consistent with the trust he has in each sensor. Moreover,  in the same spirit, the IEKF viewed as a non-linear observer remedies a common weakness shared by numerous non-linear observers on Lie groups, as it conveys an information about its own accuracy through the computed covariance matrix $P_t$. Although it comes with no rigorous interpretation in a deterministic context, the information conveyed by $P_t$ may prove useful in applications.

Note that,  in mobile robotics and navigation, the sensors are attached to the earth-fixed frame (e.g., a GPS) or to the body frame (e.g., a gyrometer). To \emph{interpret} them as covariance matrices in the IEKF framework (see below) those matrices $Q$ and $N$ may have to undergo a change of frame yielding trajectory dependent tuning matrices $\hat Q$ and $\hat N$, such as in the application examples in the sequel. This does not weaken the results, but however comes at the price of making the stability analysis a little more complicated. 


\subsubsection{Associated ``noisy" system}
To tune the IEKF \eqref{IEKF_propagation}-\eqref{LIEKF_update} or  \eqref{RIEKF_propp}-\eqref{RIEKF_update}, we associate to the system \eqref{eq::IEKF_modelle} the following
``noisy" system:
\begin{align}
\frac{d}{dt}\chi_t & = f_{u_t}(\chi_t)+\chi_t w_t, \label{eq::IEKF_model}
\end{align}
 where
 $w_t$ is a continuous white noise belonging to $\frak{g}$ whose covariance matrix is denoted by $Q_t$ (for a proper discussion on multiplicative noise for systems defined on Lie groups, see e.g. \cite{barrau2013intrinsic}).

In the same way, we associate to the family of left-invariant observations \eqref{claude:eq} the following family of ``noisy" outputs:
\begin{align}\label{claude:noise}
Y^1_{t_n} & = \chi_{t_n} \left( d^1+B_n^1 \right) + V_n^1 ~ , ~
  ... ~ , ~
Y^k_{t_n} = \chi_{t_n} \left( d^k+B_n^k \right) +V_n^k,
\end{align}
where the $(V_n^i)_{i \leq k}$, $(B_n^i)_{i \leq k}$ are noises with known characteristics.  To the family of right-invariant observations \eqref{claudio:eq} we associate the following family of ``noisy" outputs
\begin{align}\label{claudio:noise}
Y^1_{t_n} = \chi_{t_n}^{-1 } \left( d^1+V_n^1 \right) + B_n^1 ~ , ~
  ... ~ , ~
Y^k_{t_n} = \chi_{t_n}^{-1} \left( d^k+V_n^k \right) + B_n^k. \end{align}

\subsubsection{Linearized ``noisy" estimation error equation}
As in a conventional EKF, we assume the error to be small (here close to $Id$ as it is equal to $Id$ if $\hat \chi_t=\chi_t$) so that the error system can be linearized to compute the gains $L_n$. By definition, the Lie algebra $\mathfrak{g}$ represents the infinitesimal variations around $Id$ of an element of $G$. Thus the natural way to define a vector error variable $\xi_t$ in $\Rg$ is (see Appendix \ref{sect::tuto_Lie_groups}):
\begin{align}
\eta_t\label{expmap}=\exp(\xi_t)=\exp_m \left( \mathcal{L}_{\mathfrak{g}}({\xi}_t) \right).
\end{align}



During the Propagation step, that is for $t_{n-1}\leq t<t_n$, elementary computations based on the results of Theorem \ref{thm::CNS} show that for the noisy model \eqref{eq::IEKF_model} we have
\begin{equation}
\begin{aligned}
\dotex\eta_t^L &  = g_{u_t}^L \left( \eta_t^L \right) - w_t\eta_t^L, \\
 \dotex\eta_t^R & = g_{u_t}^R \left( \eta_t^R \right) - \left( \hat \chi_t w_t \hat \chi_t^{-1} \right) \eta_t^R.
\end{aligned}
\label{er_r:eq}
\end{equation}
Defining $\hat{w}_t \in \RR^{\dim \mathfrak{g}}$ by $\mathcal{L}_{\mathfrak{g}}(\hat{w}_t) = - w_t$ in the first case and $\mathcal{L}_{\mathfrak{g}}(\hat{w}_t) = -\hat\chi_tw_t\hat\chi_t^{-1}$ (i.e. $\hat{w}_t = - Ad_{\hat{\chi}_t} \mathcal{L}_{\mathfrak{g}}^{-1} (w_t)$) in the second case, and using the superscript $i$ to denote indifferently $L$ or $R$ we end up with the linearized error equation in $\RR^{\mathrm{dim }\mathfrak{g}}$: 
\begin{align}
\frac{d}{dt} \xi_t = A_{u_t}^i \xi_t + \hat{w}_t, \label{line:eq}
\end{align}
where $A_{u_t}^i$ is defined by $g_{u_t}^i(\exp(\xi)) = \mathcal{L}_{\mathfrak{g}}(A_{u_t}^i \xi) + O(\norm{\xi_t}^2)$  and where we have neglected terms of order $O(\norm{\xi_t}^2)$ as well as terms of order $O(\norm{\hat w_t}\norm{\xi_t})$. The latter approximation, as well as the fact that ${\hat w_t}$ can be considered as white noise, are approximations that would require a proper justification in a stochastic setting. However,   they are part of the standard EKF methodology, see e.g.,  \cite{lefferts1982kalman} and justifying them is not the object of the present paper. Besides,   the emphasis is put here on   deterministic properties of the observer.

Regarding the output,  we consider for instance the case of left-invariant observations, and define 
$\xi_{t_n}$ through the exponential mapping \eqref{expmap}, i.e. $\exp(\xi_{t_n})=\eta_{t_n}^L$. Moreover, for $1\leq i\leq k$ let $\hat{V}_n^i$ denote  $\hat\chi_{t_n}^{-1}V_n^i$. 
The error update  \eqref{update:evolution:left:eq}, when the LIEKF update \eqref{LIEKF_update} is fed with the ``noisy" measurements  \eqref{claude:noise} becomes
\begin{align}
\left( \eta_{t_n}^L \right)^+ & = \chi_{t_n}^{-1} \hat{\chi}_{t_n}^+ \nonumber \\
 & = \eta_{t_n}^L \exp \left[ L_n
\begin{pmatrix}
\left(\eta_{t_n}^L \right)^{-1} d^1 -  d^1 +  \hat V_n^1 + \left(\eta_{t_n}^L \right)^{-1} B_n^1 \\
... \\
\left(\eta_{t_n}^L \right)^{-1} d^k -  d^k + \hat V_n^k + \left(\eta_{t_n}^L\right)^{-1} B_n^k
\end{pmatrix} \right]. \label{update:evolution:left:eq:noise}
\end{align}
To linearize it we proceed as follows. For $1\leq i\leq k$ we have
\begin{align*}
 & (\eta_{t_n})^{-1}d^i-d^i + \hat V_n^i + (\eta_{t_n}^L)^{-1} B_n^i \\
 & =\exp_m( \mathcal{L}_{\mathfrak{g}}(\xi_{t_n}))^{-1} \left( d^i+ B_n^i   \right) -d^i + \hat{V}_n^i \\
 & = \left( Id-\mathcal{L}_{\mathfrak{g}} \left( \xi \right)_{t_n} \right) \left( d^i + B_n^i \right) - d^i + \hat{V}_n^i +O\norm{{\xi}_{t_n}}^2 \\&=-\mathcal{L}_{\mathfrak{g}}(\xi)_{t_n}d^i + \hat{V}_n^i +B_n^i +O \left( \norm{{\xi}_{t_n}}^2 \right)  +O \left( \norm{{\xi}_{t_n}} \norm{{B}_{n}^i} \right),
\end{align*}
using a simple Taylor expansion of the matrix exponential map. Expanding similarly equation \eqref{update:evolution:left:eq:noise} yields 
\begin{align}
Id+\mathcal{L}_{\mathfrak{g}}(\xi_{t_n}^+) = Id+ \mathcal{L}_{\mathfrak{g}} \left( L_n
 \begin{pmatrix}
-\mathcal{L}_{\mathfrak{g}}(\xi_{t_n}) d^1 + \hat{V}_n^1 +B_n^1 \\
... \\
-\mathcal{L}_{\mathfrak{g}}(\xi_{t_n}) d^k + \hat{V}_n^k +B_n^k
\end{pmatrix} \right) + T. \label{eeeq:eq}
\end{align}with $T$ terms of order $O \left(\norm{\xi_{t_n}}^2 \right) +O \left( \norm{{\xi}_{t_n}} \norm{{B}_{n}} \right)$. 
Neglecting them  we finally get the following linearized error equation in $\RR^{\text{dim }\mathfrak{g}}$
\begin{align}
\xi_{t_n}^+ = \xi_{t_n} + L_n \left( H \xi_{t_n} + \hat{V}_n + B_n \right),\label{uppdate:eq}
\end{align}
where $H\in\RR^{kN\times\mathrm{dim}\mathfrak{g}}$, $\hat V_n\in\RR^{kN}$ and $ B_n\in\RR^{kN}$  are defined by
$$H\xi=
\begin{pmatrix}
-\mathcal{L}_{\mathfrak{g}}(\xi) d^1  \\
... \\
-\mathcal{L}_{\mathfrak{g}}(\xi) d^k  
\end{pmatrix}
,\qquad \hat V_n=
\begin{pmatrix}
\hat V_n^1 \\
... \\
\hat V_n^k
\end{pmatrix}
,\qquad B_n=
\begin{pmatrix}
 B_n^1 \\
... \\
 B_n^k
\end{pmatrix}.
$$
Now, let $\hat{Q}_t$ reflect the trusted covariance of the modified process noise $\hat w_t$, and $\hat{N}_n$ the trusted covariance of the modified measurement noise $\hat{V}_n + B_n$.  
Note that, equations \eqref{line:eq} and \eqref{uppdate:eq} mimick those of a Kalman filter designed for the following auxiliary linear  system with discrete measurements: $\dotex x_t=A_{u_t} x_t + \hat{w}_t$, $y_n=H x_{t_n}+\hat{V}_n + B_n$. The standard Kalman theory thus suggests to compute $L_n$ through the Riccati equation 
\begin{equation}
\label{Kgains:eq}\begin{gathered}
\frac{d}{dt} P_t  = A_{u_t} P_t + P_t A_{u_t}^T +\hat{Q}_t , \qquad P_{t_n}^+  = (I-L_n H) P_{t_n},
\\ \text{with }  
S_n  = H P_{t_n} H^T + \hat{N}_n ,~
L_n  = P_{t_n} H^T S^{-1}.\end{gathered}
\end{equation}

\subsection{Summary of IEFK equations}
In a deterministic context, the IEKF equations can be compactly recapped as follows
\begin{equation}
\begin{aligned}
&\frac{d}{dt}\hat{\chi}_t  = f_{u_t}(\hat{\chi}_t),\quad  t_{n-1}\leq t< t_n , \\ 
&\hat{\chi}_{t_n}^+  = \hat{\chi}_{t_n} \exp \left[ L_n
\begin{pmatrix}
\hat{\chi}_{t_n}^{-1} Y_{t_n}^1- d^1 \\
... \\
\hat{\chi}_{t_n}^{-1} Y_{t_n}^k - d^k
\end{pmatrix} \right], ~\text{(LIEKF)},\\ \text{\underline{or}}\quad & \hat{\chi}_{t_n}^+  = \exp \left[ L_n
\begin{pmatrix}
\hat{\chi}_{t_n} Y_{t_n}^1- d^1 \\
... \\
\hat{\chi}_{t_n} Y_{t_n}^k - d^k
\end{pmatrix} \right] \hat{\chi}_{t_n},  ~\text{(RIEKF)}
\end{aligned} \label{LIEKF:::eq}\end{equation}
where the LIEKF (resp. RIEKF) is to be used in the case of left (resp. right) invariant outputs. The gain $L_n$ is obtained in each case through the following Riccati equation
\begin{equation}
\label{eq::Riccati_Left}
\begin{gathered}
\frac{d}{dt} P_t  = A_{u_t} P_t + P_t A_{u_t}^T +\hat{Q}_t , \\
S_n  = H P_{t_n} H^T + \hat{N}_n , \quad
L_n  = P_{t_n} H^T S^{-1} , \quad
P_{t_n}^+ = (I-L_n H) P_{t_n}.
\end{gathered}
\end{equation}

As concerns the LIEKF, $A_{u_t}$ is defined by $g^L_{u_t}(\exp(\xi)) = \mathcal L_{\mathfrak{g}}(A_{u_t} \xi) + O (\norm{\xi}^2)$, and $H\in\RR^{kN\times\mathrm{dim}\mathfrak{g}}$ is defined by $H\xi=
\begin{pmatrix}
-\left( \mathcal{L}_{\mathfrak{g}} (\xi) d^1 \right)^T,
...,
-\left( \mathcal{L}_{\mathfrak{g}} (\xi) d^k  \right)^T
\end{pmatrix}^T$.  
The design matrix parameters $\hat Q_t, \hat{N}_n,\hat{V}_n, B_n$ are freely assigned by the user. When sensor noise characteristics are known, they can provide the user with a first sensible tuning of those matrices by considering the associated ``noisy" system \eqref{eq::IEKF_model}-\eqref{claude:noise} of Section \ref{ouida}. In this case, the matrices can be \emph{interpreted} in the following way:   $\hat{Q}_t\in\RR^{\text{dim }\mathfrak{g}\times \text{dim }\mathfrak{g}}$  denotes the covariance  of the modified process noise $\hat w_t=-\mathcal L^{-1}_{\mathfrak{g}}(w_t)$ and $\hat{N}_n$ the covariance matrix of the noise $\hat{V}_n+ B_n$, $\hat{V}_n$ and $B_n$ being defined as:
 $ \hat V_n= ( \hat{\chi}_{t_n}^{-1}V_n^1, ... ,
\hat{\chi}_{t_n}^{-1}V_n^k )^T$ , $B_n= (
 B_n^1 , ... ,  B_n^k )^T$.

As concerns the RIEKF implementation, $A_{u_t}$ is defined by $g^R_{u_t}(\exp(\xi)) = \mathcal L_{\mathfrak{g}}(A_{u_t} \xi) + O (\norm{\xi}^2)$, and $H\in\RR^{kN\times\mathrm{dim}\mathfrak{g}}$ by $H\xi=
\begin{pmatrix}
\left( \mathcal{L}_{\mathfrak{g}} (\xi) d^1 \right)^T,
... ,
\left( \mathcal{L}_{\mathfrak{g}} (\xi) d^k  \right)^T
\end{pmatrix}^T$. The design matrix parameters $\hat Q_t, \hat{N}_n,\hat{V}_n, B_n$ are freely assigned by the user. Considering the associated ``noisy" system \eqref{eq::IEKF_model}-\eqref{claudio:noise} of Section \ref{ouida} they can be interpreted as follows. $\hat{Q}_t$ denotes the covariance of the modified process noise  $\hat{w}_t = - Ad_{\hat{\chi}_t} \mathcal{L}_{\mathfrak{g}}^{-1} (w_t)$  and $\hat{N}_n$ the covariance matrix of the noise ${V}_n+ \hat{B}_n$, ${V}_n$ and $\hat B_n$ being defined as:
 $ V_n = \left(  V_n^1, ... ,
 V_n^k \right)^T$ , $\hat B_n = \left(
 \hat{\chi}_{t_n}B_n^1 , ... ,  \hat{\chi}_{t_n}B_n^k \right)^T$.

\subsection{Stability properties}
The aim of the present section is to study the stability properties of the IEKF as a \emph{deterministic} observer for the system \eqref{eq::IEKF_modelle}-\eqref{claude:eq}, or alternatively \eqref{eq::IEKF_modelle}-\eqref{claudio:eq}.  We are about to prove the IEKF has  \emph{guaranteed} (local) stability properties, that  rely on the error equation properties   the  EKF does \emph{not} possess. The stability of an observer is defined as its ability to recover from a perturbation or an erroneous initialization:
\begin{defn}\label{deff}
Let $(x_0,t_0,t) \rightarrow X_{t_0}^t(x_0)$ denote a continuous flow on a space $\mathcal{X}$ endowed with a distance $d$. A flow $(z,t_0,t) \rightarrow \hat{X}_{t_0}^t(z)$ is an asymptotically stable observer of $X$ about the trajectory $\left( {X}_{t_0}^t ({x}) \right)_{t\geq t_0}$  if there exists $\epsilon>0$ such that:
\[
d \left( x, \tilde{x} \right) <\epsilon \Rightarrow   d \left( X_{t_0}^t (x),\hat{X}_{t_0}^t (\tilde{x}) \right) \rightarrow 0 \text{ when }t\to + \infty.
\]
\end{defn}
Theorem \ref{thm::localCV} below is the main result of the paper . It is a consequence of Theorem \ref{dechire:thm} of Section \ref{sect::deterministic}.  
J. J. Deyst and C. F. Price have shown in \cite{deyst} the following theorem, stating sufficient conditions for the Kalman filter to be a stable observer for \emph{linear} (time-varying) deterministic systems.

\begin{thm}[Deyst and Price, 1968]
\label{thm::Deyst_Price}Consider the  linear system $\dotex x_t=A_t x_t$, $y_{t_n}=Hx_{t_n}$ with $x_t \in \RR^p$ and let $\Phi_{{t_0}}^{t}$ denote the square matrix defined by $\Phi_{t_0}^{t_0}=I_p, \frac{d}{dt} \Phi_{t_0}^t = A_t \Phi_{t_0}^t$.
If there exist $\alpha_1,\alpha_2,\beta_1,\beta_2,\delta_1,\delta_2,\delta_3,M$ such that: 
\begin{enumerate}[i]
\item $(\Phi_{t_n}^{t_{n+1}})^T \Phi_{t_n}^{t_{n+1}} \succeq \delta_1 I_p \succeq 0,$ \label{cond::Psi}
\item $\exists q \in \NN^{*},\forall s >0, \exists G_s \in \RR^{p \times q}, Q_s=G_s Q'G_s^T$ where $Q' \succeq \delta_2 I_q \succeq 0,$ \label{cond::Q}
\item $N_n \succeq \delta_3 I \succeq 0,$ \label{cond::R}
\item $\alpha_1 I_p \leq \int_{s=t_{n-M}}^{t_n} \left( \Phi_s^{t_n} \right) Q_s \left( \Phi_s^{t_n} \right)^T \leq \alpha_2 I_p,$ \label{cond::reachable}
\item $\beta_1 I_p \leq \sum_{i=n-M}^{n-1} \left( \Phi^{t_n}_{t_{i+1}} \right)^T H^TN_n^{-1}H \left( \Phi^{t_n}_{t_{i+1}} \right) \leq \beta_2 I_p.$ \label{cond::observable}
\end{enumerate}
Then the linear Kalman filter tuned with covariance matrices $Q$ and $N$ is an asymptotically stable observer for the Euclidean distance. More precisely there exist $\gamma_{\text{min}},\gamma_{\text{max}}>0$ such that $\gamma_{\text{min}} I\preceq P_t I\preceq\gamma_{\text{max}}I$ for all $t$ and $(\hat x_t-x_t)^TP_t^{-1}(\hat x_t-x_t)$ has  exponential decay.
\end{thm}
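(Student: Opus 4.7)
The plan is to use the standard quadratic Lyapunov function $V_t = e_t^T P_t^{-1} e_t$, where $e_t = \hat x_t - x_t$, and to show that it decays geometrically over windows of length $M$ while $P_t$ stays uniformly bounded away from $0$ and $\infty$. This immediately yields the exponential decay of $\norm{e_t}^2$ asserted by the theorem, and the bounds $\gamma_{\min} I \preceq P_t \preceq \gamma_{\max} I$ are already part of the conclusion.

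The first step is to establish those uniform bounds on $P_t$. The lower bound on $P_t$ (equivalently, an upper bound on $P_t^{-1}$) comes from the injection of process noise and condition (iv): integrating the continuous Riccati equation $\dot P_t = A_t P_t + P_t A_t^T + Q_t$ shows that at any update time $P_{t_n}$ dominates $\int_{t_{n-M}}^{t_n} \Phi_s^{t_n} Q_s (\Phi_s^{t_n})^T\, ds \succeq \alpha_1 I$, and conditions (i), (iii) then propagate the bound across updates and between them. The upper bound on $P_t$ is the dual statement and comes from the observations: writing the discrete update in information form $(P_{t_n}^+)^{-1} = P_{t_n}^{-1} + H^T N_n^{-1} H$, and then pulling $P^{-1}$ backwards through the continuous step via the transition matrices, one gets
\[
(P_{t_n}^+)^{-1} \succeq \sum_{i=n-M}^{n-1} (\Phi^{t_n}_{t_{i+1}})^T H^T N_n^{-1} H\, \Phi^{t_n}_{t_{i+1}} \succeq \beta_1 I,
\]
which is precisely condition (v) combined with (iii).

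The second step is to show decay of $V_t$. Using $\dot e_t = A_t e_t$ together with the continuous Riccati equation, a direct computation gives $\frac{d}{dt} V_t = -e_t^T P_t^{-1} Q_t P_t^{-1} e_t \le 0$, so $V_t$ is non-increasing between updates. At an update time, the definitions $e_{t_n}^+ = (I - L_n H) e_{t_n}$ and $L_n = P_{t_n} H^T S_n^{-1}$ yield
\[
V_{t_n}^+ = V_{t_n} - e_{t_n}^T H^T (H P_{t_n} H^T + N_n)^{-1} H e_{t_n}.
\]
Transporting these update decrements back to a common time $t_n$ via the transition matrices $\Phi^{t_n}_{t_{i+1}}$, summing from $i=n-M$ to $n-1$, and using (iii), (v) together with the already established upper bound on $P$, produces a uniform contraction $V_{t_n} \le (1-c) V_{t_{n-M}}$ for some $c \in (0,1)$ independent of $n$. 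Combined with the bounds on $P_t$, this gives geometric decay of $\norm{e_t}^2$.

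The main obstacle is the careful bookkeeping in the sampled-data regime: the continuous evolution of $P_t$ and its discrete jumps must be combined to keep $P_t$ inside a fixed compact interval uniformly in $n$, which requires using (i) to control $\Phi_{t_n}^{t_{n+1}}$ from below and to relate the observability Gramian at $t_n$ to its time-shifted versions across a sliding window. Nothing in this argument is conceptually new beyond the classical Kalman filter stability analysis, so one would really invoke \cite{deyst} for the complete technical verification; the role of restating the theorem here is to have a clean set of hypotheses to combine with Theorem \ref{dechire:thm} so as to transfer stability from the auxiliary linear system to the IEKF on the Lie group.
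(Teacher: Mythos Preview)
The paper does not prove this theorem at all: it is stated as a classical result of Deyst and Price and simply cited via \cite{deyst}. The only use the paper makes of it is in Appendix~\ref{proof::localCV}, where it invokes the two consequences $\gamma_{\min} I \preceq P_t \preceq \gamma_{\max} I$ and the inequality \eqref{eq::V_decreases_exponentially} as black boxes. Your sketch of the Lyapunov argument is the standard one and is consistent with what \cite{deyst} does, and you correctly note at the end that the intended proof here is ``see \cite{deyst}''. So there is no discrepancy to report: the paper's ``proof'' is a citation, and your proposal is an (accurate) outline of the cited argument.
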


The main theorem of the present paper is the extension of this linear result to the non-linear case when the Invariant Extended Kalman Filter is used for systems of Section \ref{sect::deterministic}.
\begin{thm}
\label{thm::localCV}
Consider the system \eqref{eq::IEKF_modelle}-\eqref{claude:eq} (respectively  \eqref{eq::IEKF_modelle}-\eqref{claudio:eq}). Suppose the stability conditions of the linear Kalman filter given in Theorem \ref{thm::Deyst_Price} are verified  about the \emph{true} system's trajectory $\chi_t$ (i.e. are verified for the linear system obtained by linearizing the system \eqref{eq::IEKF_modelle} with left (resp. right) invariant output about $\chi_t$). Then the Left (resp. Right) Invariant Extended Kalman Filter  estimate $\hat{\chi}_t$ defined at eq. \eqref{LIEKF:::eq} is an asymptotically stable observer of $\chi_t$ in the sense of Definition \ref{deff}. Moreover, the convergence radius $\epsilon>0$ is valid over the whole trajectory (i.e. is independent of the initialization time $t_0$).
\end{thm}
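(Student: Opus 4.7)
The plan is to leverage the log-linear property of Theorem \ref{dechire:thm} to reduce the non-linear stability analysis to that of a linear Kalman filter, plus a small higher-order perturbation at each update. The crucial first step is the following observation: because system \eqref{eq::IEKF_modelle} satisfies \eqref{eq::main_relation}, the linearization $A_{u_t}$ defined by $g_{u_t}^i(\exp(\xi))=\mathcal{L}_{\mathfrak{g}}(A_{u_t}\xi)+O(\|\xi\|^2)$ depends only on the input $u_t$, and the measurement matrix $H$ depends only on the known vectors $d^i$ and the group structure. \emph{Neither depends on the state trajectory.} Consequently, the Riccati equation \eqref{eq::Riccati_Left} computed by the IEKF coincides identically with the Riccati equation of a linear Kalman filter applied to the linearization of the system about the true trajectory $\chi_t$ with tuning $\hat Q_t,\hat N_n$. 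Theorem \ref{thm::Deyst_Price} then applies directly, providing uniform bounds $0<\gamma_{\min}I\preceq P_t\preceq\gamma_{\max}I$ and exponential decay of the linear Lyapunov function over windows of length $M$.

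Next, introduce the error variable $\xi_t\in\mathbb{R}^{\dim\mathfrak{g}}$ via the exponential chart $\eta_t^i=\exp(\xi_t)$, which is a diffeomorphism onto a neighborhood of the identity. The remarkable feature, which distinguishes the IEKF from a standard EKF, is that by Theorem \ref{dechire:thm} the Lie algebra variable satisfies $\dot\xi_t=A_{u_t}\xi_t$ \emph{exactly} during propagation — without any second-order remainder. At an update step, a Taylor expansion of the matrix exponential in \eqref{update:evolution:left:eq} (or its right-invariant analogue) yields
\begin{equation*}
\xi_{t_n}^+ \;=\; (I+L_n H)\,\xi_{t_n} \;+\; r_n(\xi_{t_n}), \qquad \|r_n(\xi)\|\leq C\|\xi\|^2,
\end{equation*}
with $C$ uniform in $n$, since by Deyst--Price the gain $L_n=P_{t_n}H^T S_n^{-1}$ is uniformly bounded.

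The proof then proceeds by a Lyapunov argument using $V_t=\xi_t^T P_t^{-1}\xi_t$, which is equivalent to $\|\xi_t\|^2$ thanks to the bounds on $P_t$. During propagation $\dot V_t=-\xi_t^T P_t^{-1}\hat Q_t P_t^{-1}\xi_t\leq 0$, exactly as in the linear case. At an update, the quadratic remainder $r_n$ contributes a cubic correction: $V_{t_n}^+$ equals its linear-filter counterpart plus a term of size $O(\|\xi_{t_n}\|^3)$. Since Theorem \ref{thm::Deyst_Price} guarantees that the linear $V$ decreases by a factor $\rho<1$ over every window of $M$ update steps, for $\|\xi_{t_0}\|$ sufficiently small the cubic perturbation is strictly dominated by this linear decay, giving $V_{t_{n+M}}^+\leq\rho'V_{t_n}^+$ with $\rho<\rho'<1$. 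Induction then yields exponential decay of $V_t$, hence of $\xi_t$, hence convergence of $\eta_t^i$ to the identity, which is precisely the observer convergence claim.

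The delicate point — and the main obstacle — is ensuring that the convergence radius $\epsilon$ is \emph{uniform in the initialization time} $t_0$. This uniformity is not automatic: it requires that the constants in Deyst--Price ($\gamma_{\min}$, $\gamma_{\max}$, the decay factor $\rho$, the bound on $L_n$) and the constant $C$ in the cubic remainder be uniform in $t$. The former follows from the time-uniform hypotheses i--v of Theorem \ref{thm::Deyst_Price}; the latter from the analyticity of the exponential map around the identity of $G$ combined with the uniform boundedness of $L_n$. Once these uniform bounds are established, a standard comparison argument between the perturbed and unperturbed Lyapunov recursions produces an $\epsilon$ depending only on $\gamma_{\min},\gamma_{\max},\rho,C$, and in particular independent of $t_0$, as announced.
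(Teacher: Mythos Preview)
Your overall strategy---exact linear propagation via Theorem \ref{dechire:thm}, quadratic remainder at updates, then a Lyapunov argument---matches the paper's. But there is a genuine gap in your first step. You assert that the Riccati equation \eqref{eq::Riccati_Left} run by the filter ``coincides identically'' with the one linearized about the true trajectory $\chi_t$, and that Theorem~\ref{thm::Deyst_Price} therefore applies directly. This is false in general: while $A_{u_t}$ and $H$ are indeed trajectory-independent, the design matrices $\hat Q_t$ and $\hat N_n$ typically depend on the \emph{estimate} $\hat\chi_t$ (e.g.\ $\hat N_n$ involves $\hat\chi_{t_n}^{-1}\mathrm{Cov}(V_n)\hat\chi_{t_n}^{-T}$ for the LIEKF, and $\hat Q_t$ involves $Ad_{\hat\chi_t}$ for the RIEKF). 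The hypothesis of the theorem is that the Deyst--Price conditions hold for the covariances computed about the \emph{true} trajectory, not for the filter's own $\hat Q_t,\hat N_n$. Hence you cannot conclude $\gamma_{\min}I\preceq P_t\preceq\gamma_{\max}I$ or a uniform bound on $L_n$ without further work, and your subsequent argument (uniform $C$ in the quadratic remainder, uniform $\epsilon$) rests on these bounds.

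The paper closes this gap with a dedicated lemma (Lemma~\ref{lem::Riccati}): if $\|\xi_s\|<\epsilon$ on an interval, then $\hat Q_t,\hat N_n$ are within a multiplicative factor $k$ of their true-trajectory counterparts, so the Deyst--Price conditions hold with modified constants and $P_t$ stays comparable to the true-trajectory covariance $\tilde P_t$. This introduces a circularity---bounds on $P_t$ require $\xi$ small, which in turn requires bounds on $P_t$---that must be resolved by a bootstrap/continuation argument (the end of Appendix~\ref{proof::localCV}): assume $\|\xi_{t_0+s}\|<\epsilon$ on $[0,t]$, derive $\|\xi_{t_0+s}\|\leq\epsilon/2$, and conclude by contradiction that the exit time is infinite. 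Your proposal needs both this perturbation lemma and the continuation step to be complete. (Minor point: your update should read $(I-L_nH)\xi_{t_n}$, not $(I+L_nH)\xi_{t_n}$.)
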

\begin{proof}
The full proof is technical and has been moved to Appendix \ref{proof::localCV}.  The rationale is to compare the evolution of the logarithmic error $\xi_t$ defined as $\eta_t = \exp(\xi_t)$, with its linearization. For the general EKF, the control of second and higher order terms in the error equation is difficult because: 1- they depend on the inputs $u_t$ 2- they depend on the linearization point $\hat{\chi}_t$ 3- the estimation error impacts the gain matrices. 
For the IEKF, the main difficulties vanish as during the propagation step the IEKF is built for the  logarithmic error $\xi_t$ whose evolution $\frac{d}{dt} \xi_t  = A_{u_t} \xi_t$ is in fact \emph{exact}  (no higher order terms) due to  Theorem \ref{dechire:thm}. At the update step, due to the specific form of the IEKF update, second order terms can be \emph{uniformly}  bounded over $n$. And finally, due to the error equation of the IEKF, the Riccati equation depends on the estimate \emph{only} through the  matrices $\hat{Q}_t$ and $\hat{N}_t$ which affect stability in a minor way, as shown by  Theorem \ref{thm::Deyst_Price}.

\end{proof}

The result displayed in Theorem \ref{thm::localCV} is in sharp contrast with the usual results available, that make the \emph{highly non-trivial  assumption} that the linearized system around the \emph{estimated} trajectory is well-behaved \cite{boutayeb,song-grizzle-95,reif,bonnabel2012contraction}. But this fact is almost impossible to predict as when the estimate is (even slightly) away  from the true state, the Kalman gain becomes erroneous, which can in turn amplify the discrepancy between estimate and true state so that there is \emph{no} reason the assumption should keep holding. On the other hand, when considering an actual system undergoing a realistic physical motion (see the examples below), if sufficiently many sensors are available, one can generally assert \emph{in advance} the linearized system around the \emph{true} trajectory  possesses all the desired properties. The following consequence  proves useful in practice.
\begin{thm}
\label{thm::bounded}
Assume the system linearized around the \emph{true} trajectory has the following properties : the propagation matrix $A_t$ is constant, there exist matrices $B,D$ such that $\hat{Q}=B \bar{Q} B^T$ and $\hat{N} = D \bar{N} D^T$ with $\bar{Q}$ and $\bar{N}$ upper- and lower-bounded, with $(A,H,B,D)$ detectable and reachable. Then the conditions of Theorem \ref{thm::localCV} are satisfied and the IEKF is asymptotically stable.
\end{thm}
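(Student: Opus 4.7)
The strategy is to verify directly that the five hypotheses of the Deyst--Price theorem (Theorem \ref{thm::Deyst_Price}) hold uniformly for the time-varying linear system obtained by linearizing along the true trajectory, so that Theorem \ref{thm::localCV} applies. Since $A_t \equiv A$ is constant, the state transition matrix is simply $\Phi_s^t = \exp_m\!\bigl(A(t-s)\bigr)$, which makes every Gramian computation homogeneous in $n$ and reduces the problem to an essentially time-invariant analysis. Implicit throughout is the standing assumption that the observation sampling times satisfy $\tau_{\min}\le t_{n+1}-t_n\le \tau_{\max}$; without it none of the Deyst--Price bounds could be uniform in $n$.

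The first three conditions are then nearly immediate. Condition (i) holds because $(\Phi_{t_n}^{t_{n+1}})^T\Phi_{t_n}^{t_{n+1}} = e^{A^T\Delta t_n}e^{A\Delta t_n}$ depends continuously on $\Delta t_n\in[\tau_{\min},\tau_{\max}]$ and is positive definite, so its smallest eigenvalue is uniformly bounded below. Condition (ii) is directly the factorization $\hat Q_s=B\bar Q_s B^T$, so one takes $G_s\equiv B$ and $Q'_s\equiv \bar Q_s$, which is lower bounded by assumption. Condition (iii) follows from $\hat N_n=D\bar N_n D^T$ together with the two-sided bounds on $\bar N$, provided $D$ is sufficiently non-degenerate (which is the only place where a mild rank hypothesis on $D$ needs to be absorbed into the phrase ``$(A,H,B,D)$ detectable and reachable'').

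Conditions (iv) and (v) are the substantive ones and rely on the reachability/detectability hypotheses. For (iv), the controllability Gramian decomposes as
\begin{equation*}
\int_{t_{n-M}}^{t_n} e^{A(t_n-s)} B\bar Q_s B^T e^{A^T(t_n-s)}\,ds,
\end{equation*}
which is sandwiched between constant multiples of the purely LTI Gramian $\int_{0}^{M\tau_{\min}}e^{A\sigma}BB^Te^{A^T\sigma}d\sigma$. Reachability of $(A,B)$ guarantees this last Gramian is strictly positive definite once the integration window is long enough, and choosing $M$ so that $M\tau_{\min}$ exceeds the reachability horizon yields the uniform lower bound $\alpha_1$; the upper bound $\alpha_2$ follows from the upper bound on $\bar Q$ and the boundedness of the window. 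Condition (v) is treated symmetrically: the observability Gramian is bounded below in terms of $\sum_{i}e^{A^T\sigma_i}H^TH e^{A\sigma_i}$, which is uniformly positive definite for enough samples spread over a window exceeding the observability horizon of $(A,H)$, using the upper bound on $\bar N$ to control $\hat N_n^{-1}$ from below.

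The main obstacle is this last step: Deyst--Price as stated demands an \emph{observability} Gramian lower bound, whereas the theorem only assumes detectability. The resolution is that in the LTI setting the unobservable subspace of $(A,H)$ is $A$-invariant, the restriction of $A$ to it is Hurwitz by detectability, and the Kalman--Bucy machinery (or a direct change of basis separating observable and unobservable parts) shows the Riccati solution $P_t$ remains bounded and the error along the unobservable modes decays autonomously; one can therefore still produce the $\gamma_{\min},\gamma_{\max}$ bounds on $P_t$ needed by the proof of Theorem \ref{thm::localCV}. A parallel remark handles stabilizability in place of reachability on the noise side. Once these standard LTI-to-Gramian equivalences are invoked, every hypothesis of Theorem \ref{thm::localCV} is verified and the asymptotic stability of the IEKF follows.
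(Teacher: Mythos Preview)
The paper states Theorem~\ref{thm::bounded} without proof, so there is nothing to compare your argument against directly. Your approach---reduce to the time-invariant case via $\Phi_s^t=e^{A(t-s)}$ and verify the Deyst--Price conditions using the standard LTI Gramian theory---is the natural one and is clearly what the authors had in mind, given how the result is invoked later (e.g.\ in the proof of Theorem~\ref{thm::3D_features_stability}, where they only check that $(A,H)$ is observable).

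Your own remark about detectability versus observability is the one genuine issue, and you diagnose it correctly: condition~(v) of Theorem~\ref{thm::Deyst_Price} is an \emph{observability} Gramian lower bound, which fails under mere detectability, so the literal claim ``the conditions of Theorem~\ref{thm::localCV} are satisfied'' cannot hold as written. Your proposed fix---bypass the Deyst--Price hypotheses and argue directly that detectability plus stabilizability yield the two-sided bounds $\gamma_{\min}I\preceq P_t\preceq\gamma_{\max}I$ via the standard LTI Riccati theory, which is all the proof of Theorem~\ref{thm::localCV} actually consumes---is the right repair, though you leave it as a sketch. A cleaner route, and the one consistent with how the authors actually \emph{use} the theorem, is simply to read ``detectable and reachable'' as ``observable and controllable'' (or to assume it); then conditions~(iv) and~(v) follow exactly as you outline, with no detour needed. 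Either way, your argument is sound modulo that one terminological wrinkle in the theorem statement itself.
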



\section{Simplified car example}\label{sect::examples:A}
The computations require only  basic knowledge about matrix Lie groups as recalled in Appendix \ref{sect::tuto_Lie_groups}.

\subsection{Considered model}


Consider a (non-holonomic) car evolving on the 2D plane. Its heading is denoted by an angle $\theta_t \in [-\pi,\pi]$ and its position by a vector $x_t \in \RR^2$. They follow the classical equations (see, e.g., \cite{de1998feedback}):
\begin{align}\label{car:dyna}
\frac{d}{dt} \theta_t  = u_t v_t , \qquad
\frac{d}{dt} x^1_t  = \cos(\theta_t) v_t , \qquad
\frac{d}{dt} x^2_t  = \sin(\theta_t) v_t ,
\end{align}
where $v_t$ is the velocity measured by an odometer and $u_t$ (a function of) the steering angle. Two kinds of observations are considered:
\begin{align}
\bar{Y}_n & = x_{t_n} \label{car_GPS:nonoise} \\
\text{or } \bar{Y}_n^k & = R(\theta_{t_n})^T(x_{t_n}-p_k), \qquad k \in [1, K]\label{car_feature:nonoise}
\end{align}
where $R(\theta)$ is a planar rotation of angle $\theta$. Equation
\eqref{car_GPS:nonoise} represents a  position measurement (GPS for instance) whereas \eqref{car_feature:nonoise} represents a range-and-bearing observation of a sequence of known features located at $p_k \in \RR^2$ for $k \in [1, K]$.

\subsection{IEKF gain tuning}
\label{sect::gain_tuning_car}

To derive and tune the IEKF equations, we follow the methodology of Section \ref{ouida} which amounts to 1- associate a ``noisy "system to the original considered system, just because it allows obtaining a sensible  tuning of the design matrices from an engineering viewpoint, 2- transform it into a system defined on a matrix Lie group to make it fit into our framework, 3- linearize the ``noisy" equations, and 4- use the Kalman equations to tune the observer gain. 

\subsubsection{Associated ``noisy" system}
Taking into account the possible noise in the measurements we get
\begin{equation}\begin{gathered}\label{mec:eq}
\frac{d}{dt} \theta_t  = u_t v_t +w_t^{\theta}, \qquad
\frac{d}{dt} x^1_t  = \cos(\theta_t) (v_t + {w^l_t}) - \sin(\theta_t) {w^{tr}_t}, \\
\frac{d}{dt} x^2_t  = \sin(\theta_t) (v_t + {w^l_t}) + \cos(\theta_t){w^{tr}_t},
\end{gathered}\end{equation}
with $w_t^\theta$ the differential odometry error, ${w_t^l}$ the longitudinal odometry error and $w_t^{tr}$ the transversal shift. By letting noise enter the measurement equations we get the following two kinds of measurements:
\begin{align}
\bar{Y}_n & = x_{t_n} + V_n \label{car_GPS} \\
\text{or } \bar{Y}_n^k & = R(\theta_{t_n})^T(x_{t_n}-p_k) + \bar{V}_n^k, \qquad k \in [1, K].\label{car_feature}
\end{align}

\subsubsection{Matrix form}
This system can be embedded in the matrix Lie group $SE(2)$ (see Appendix \ref{sect::tuto_SE2}) using the   matrices:
$$
{ \chi_t = \begin{pmatrix} \cos(\theta_t) & -sin(\theta_t) & x^1_t \\ \sin(\theta_t) & \cos(\theta_t) & x^2_t \\ 0 & 0 & 1 \end{pmatrix},}$$
$${\nu_t = \begin{pmatrix} 0 & -u_t v_t & v_t \\ u_t v_t & 0 & 0 \\ 0 & 0 & 0 \end{pmatrix},} \quad {w_t = \begin{pmatrix} 0 &  -w^\theta_t & w_t^l \\ w_t^\theta & 0 & {w_t^{tr}} \\
0 & 0 & 0 \end{pmatrix} }.
$$
 The equation \eqref{mec:eq} governing the ``noisy" system evolution writes:
\begin{equation}
\frac{d}{dt} \chi_t = \chi_t (\nu_t + w_t), \label{eq::car_matrix}
\end{equation}
and the observations \eqref{car_GPS} and \eqref{car_feature}  respectively have the equivalent form:
\begin{equation}
\label{eq::car_matrix_GPS}
Y_n = \begin{pmatrix} x_{t_n} + V_n \\ 1 \end{pmatrix} = \chi_{t_n} \begin{pmatrix} 0_{2 \times 1} \\ 1 \end{pmatrix} + \begin{pmatrix} V_n \\ 0 \end{pmatrix},
\end{equation}

\begin{equation}
\label{eq::car_matrix_feature}
Y_n^k = \begin{pmatrix} R(\hat{\theta}_{t_n})^T(x_{t_n}-p_k) \\ 1 \end{pmatrix} + \begin{pmatrix} \bar{V}_n^k \\ 0 \end{pmatrix} = - \chi_{t_n}^{-1} \begin{pmatrix} p_k \\ 1 \end{pmatrix} + \begin{pmatrix} \bar V_n^k \\ 0 \end{pmatrix}.
\end{equation}

The reader can verify relation \eqref{eq::main_relation} letting $f_{\nu_t}(\chi_t)=\chi_t\nu_t$.


\subsubsection{IEKF equations for the left-invariant output \eqref{car_GPS:nonoise}}
\label{ex::car_GPS}
The LIEKF  equations \eqref{LIEKF:::eq} for the associated ``noisy" system \eqref{eq::car_matrix}, \eqref{eq::car_matrix_GPS} write:
\begin{equation*}
\frac{d}{dt} \hat{\chi}_t  = \hat{\chi}_t \nu_t 
 , \qquad
\hat{\chi}^+_{t_n}  = \hat{\chi}_{t_n} \exp \left( L_n \left[ \hat{\chi}_{t_n}^{-1} Y_n -
\begin{pmatrix}
0_{2 \times 1} \\ 1
\end{pmatrix}
\right] \right).
\end{equation*}
As the bottom element of $ \left[ \hat{\chi}_{t_n}^{-1} Y_n -
\begin{pmatrix}
0_{2 \times 1} \\ 1
\end{pmatrix} \right]$ is always zero we can conveniently use a reduced-dimension gain matrix $\tilde{L}_n$ defined by $L_n = \tilde{L}_n \tilde{p}$ with $\tilde{p}=(I_2, 0_{2,1})$. To compute the gains, we write the left-invariant error  $
\eta_t = \chi_t^{-1} \hat{\chi}_t
$
whose evolution is:
\begin{equation}\begin{aligned}
&\frac{d}{dt} \eta_t = \eta_t \nu_t - \nu_t \eta_t - w_t \eta_t,\\&~~\eta_{t_n}^+ = \eta_{t_n} \exp \left( \tilde{L}_n \tilde{p} \left[ \eta_{t_n}^{-1} \begin{pmatrix} 0_{2 \times 1} \\ 1 \end{pmatrix}- \begin{pmatrix} 0_{2 \times 1} \\ 1 \end{pmatrix}+\hat{\chi}_{t_n}^{-1} \begin{pmatrix} V_n \\ 0 \end{pmatrix} \right] \right).\label{expl::car}
\end{aligned}\end{equation}
To linearize this equation we introduce the linearized error $\xi_t$ defined replacing $\eta_t$ with $I_3 + \mathcal{L}_{\mathfrak{se}(2)}(\xi_t)$. Introducing the first-order approximations $\eta_t = I_3 + \mathcal{L}_{\mathfrak{se}(2)}(\xi_t)$, $\eta_t^+ = I_3 + \mathcal{L}_{\mathfrak{se}(2)}(\xi_t^+)$, $\exp(u)=I_3+\mathcal{L}_{\mathfrak{se}(2)}(u)$ and $\eta_t^{-1} = I_3 - \mathcal{L}_{\mathfrak{se}(2)}(\xi_t) $ in \eqref{expl::car} and removing the second-order terms in $\xi_t$, $V_n$ and $w_t$ we obtain:
$$
\frac{d}{dt} \xi_t  =  -\begin{pmatrix} 0 & 0 & 0 \\
0 & 0 & -u_t v_t \\
-v_t & u_t v_t & 0 \end{pmatrix} \xi_t - \begin{pmatrix}
w_t^\theta \\
w_t^l \\
w_t^{tr} 
\end{pmatrix}$$ $$ \text{and} \qquad
\xi_{t_n}^+  = \xi_{t_n} - \tilde{L}_n \left[ (0_{2,1}, I_2) \xi_t - R \left( \hat \theta_{t_n} \right)^T V_n \right].$$ The gains $\tilde{L}_n$ are thus finally computed using the Riccati equation \eqref{eq::Riccati_Left} with:
$$ A_t = -\begin{pmatrix} 0 & 0 & 0 \\
0 & 0 & -u_t v_t \\
-v_t & u_t v_t & 0 \end{pmatrix}, \quad H = (0_{2,1}, I_2),$$ $$
\hat{Q}_t = Cov[ (
w_t^\theta,
w_t^l ,
w_t^{tr})^T], \quad
\hat{N} = R \left( \hat{\theta}_{t_n} \right) Cov \left( V_n \right) R \left( \hat{\theta}_{t_n} \right)^T.
$$


\subsubsection{IEKF equations for the right-invariant output \eqref{car_feature}}\label{etouais}

The RIEKF equations \eqref{LIEKF:::eq} for the associated ``noisy" system \eqref{eq::car_matrix}, \eqref{eq::car_matrix_feature} write:
\begin{equation*}
\frac{d}{dt} \hat{\chi}_t  = \hat{\chi}_t \nu_t
~,~ 
\hat{\chi}_{t_n}^+  = \exp \left( L_n \left[ \hat{\chi}_{t_n} Y_n^1 +
\begin{pmatrix}
p_1 \\
1
\end{pmatrix}
; ... ; \hat{\chi}_{t_n} Y_n^K +
\begin{pmatrix}
p_K \\
1
\end{pmatrix}
\right] \right) \hat{\chi}_{t_n}.
\end{equation*}
As the bottom element of $ \left[ \hat{\chi}_{t_n}^{-1} Y_n^k +
\begin{pmatrix}
p_k \\ 1
\end{pmatrix} \right]$ is always zero we can conveniently use a reduced-dimension gain matrix $\tilde{L}_n$ defined by $L_n = \tilde{L}_n \tilde{p}$ with $\tilde{p}= \begin{pmatrix}[I_2, 0_{2,1}]  & & \\ & \ddots & \\ & & [I_2, 0_{2,1}] \end{pmatrix}$. To compute the gains $\tilde{L}_n$ we derive the evolution of the right-invariant error variable 
$
\eta_t = \hat{\chi}_t \chi_t^{-1}
$ between the estimate and the state of the associated ``noisy" system:
\begin{equation}
\label{eq::erro_car_feature}
\begin{aligned}
\frac{d}{dt} \eta_t  & = - (\hat{\chi}_t w_t \hat{\chi}_t^{-1}) \eta_t, \\
\eta_{t_n}^+ & = \exp \left( \tilde{L}_n \tilde{p} \begin{pmatrix} -\eta_{t_n} \begin{pmatrix} p_1 \\ 1 \end{pmatrix} + \begin{pmatrix} p_1 \\ 1 \end{pmatrix} + \hat{\chi}_{t_n} \begin{pmatrix} V_n^1 \\ 0 \end{pmatrix} \\ ... \\  -\eta_{t_n} \begin{pmatrix} p_K \\ 1 \end{pmatrix}+ \begin{pmatrix} p_K \\ 1 \end{pmatrix} +   \hat{\chi}_{t_n} \begin{pmatrix}  V_n^K\\ 0 \end{pmatrix} \end{pmatrix} \right) \eta_{t_n} .
\end{aligned}
\end{equation}
To linearize this equation we introduce the linearized error $\xi_t$ defined as $\eta_t = I_3 + \mathcal{L}_{\mathfrak{se}(2)}(\xi_t)$. Introducing $\eta_t = I_3 + \mathcal{L}_{\mathfrak{se}(2)}(\xi_t)$, $\eta_t^+ = I_3 + \mathcal{L}_{\mathfrak{se}(2)}(\xi_t^+)$, $\exp(u)=I_3+\mathcal{L}_{\mathfrak{se}(2)}(u)$ and $\eta_t^{-1} = I_3 - \mathcal{L}_{\mathfrak{se}(2)}(\xi_t) $ in \eqref{eq::erro_car_feature} and removing the second-order terms in $\xi_t$, $V_n$ and $w_t$ we obtain:
\begin{align*}
\frac{d}{dt} \xi_t &  = - 
\begin{pmatrix}
1 & 0_{1,2} \\
\begin{matrix}
\hat{x}_t^2 \\
-\hat{x}_t^1
\end{matrix} & R(\hat{\theta_t})
\end{pmatrix} \begin{pmatrix} w_t^\theta \\ w_t^l \\ w_t^{tr} \end{pmatrix},\\
\xi_{t_n}^+ & = \xi_{t_n} - \tilde{L}_n \left[ \begin{pmatrix}
\begin{pmatrix}
-(p_1)_2 & 1 & 0 \\
 (p_1)_1 & 0 & 1
\end{pmatrix}  \\
 ... \\
\begin{pmatrix}
-(p_K)_2 & 1 & 0 \\
 (p_K)_1 & 0 & 1
\end{pmatrix}  
 \end{pmatrix} \xi_{t_n} -
 \begin{pmatrix}
 R(\hat{\theta}_{t_n})  V_n^1 \\
 ... \\
 R(\hat{\theta}_{t_n}) V_n^K
 \end{pmatrix} \right].
\end{align*}
 The gains are thus computed using the Riccati equation \eqref{eq::Riccati_Left} with $A_t,H, \hat{Q}$ and $\hat{N}$ defined as:
\[
A_t= 0_{3,3}, \quad H_n = \begin{pmatrix}
\begin{pmatrix}
-(p_1)_2 & 1 & 0 \\
 (p_1)_1 & 0 & 1
\end{pmatrix}  \\
 ... \\
\begin{pmatrix}
-(p_K)_2 & 1 & 0 \\
 (p_K)_1 & 0 & 1
\end{pmatrix}  
 \end{pmatrix},\]
 \[
\hat{Q}_t = \begin{pmatrix}
1 & 0_{1,2} \\
\begin{matrix}
\hat{x}_t^2 \\
-\hat{x}_t^1
\end{matrix} & R(\hat{\theta}_t) 
\end{pmatrix} Cov \begin{pmatrix} w_t^\theta \\ w_t^l \\ w_t^{tr} \end{pmatrix} \begin{pmatrix}
1 & 0_{1,2} \\
\begin{matrix}
\hat{x}_t^2 \\
-\hat{x}_t^1
\end{matrix} & R(\hat{\theta}_t) 
\end{pmatrix}^T,
\]
\[
\hat{N}_n = \begin{pmatrix} R(\hat{\theta}_{t_n}) Cov (N^1) R(\hat{\theta}_{t_n})^T  &  & 0 \\  & \ddots &  \\
0 &  & R(\hat{\theta}_{t_n}) Cov ( N^K ) R(\hat{\theta}_{t_n})^T \end{pmatrix}.
\]

 \subsection{Stability properties of the IEKF viewed as a non-linear observer for the simplified car}
 
 \subsubsection{Stability of the IEKF for the left-invariant output \eqref{car_GPS:nonoise}}

\begin{prop}
\label{prop::car}
If  there exists $v_{\max},v_{\min}>0$ such that the displacement satisfies $\norm{x_{t_{n+1}}-x_{t_n}}\geq v_{\min}>0$ and the input velocity satisfies  $u_t \leq v_{\max}$ then the LIEKF derived at Section \ref{etouais} is an asymptotically stable observer in the sense of Definition \ref{deff} about \emph{any} trajectory.
\end{prop}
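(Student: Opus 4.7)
The plan is to apply Theorem~\ref{thm::localCV}, which reduces asymptotic stability of the LIEKF to verifying the five Deyst--Price conditions of Theorem~\ref{thm::Deyst_Price} for the linearization about the \emph{true} trajectory. A decisive feature of the LIEKF on $SE(2)$ derived in Section~\ref{ex::car_GPS} is that this linearization is trajectory-\emph{independent}: the matrix $A_t$ depends only on the inputs $(u_t,v_t)$, $H=(0_{2,1},I_2)$ and $\hat Q_t$ are constant design parameters, and $\hat N_n = R(\theta_{t_n})\,\mathrm{Cov}(V_n)\,R(\theta_{t_n})^T$ shares its spectrum with $\mathrm{Cov}(V_n)$ because $R(\theta_{t_n})$ is a rotation. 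Conditions (ii)--(iii) of Theorem~\ref{thm::Deyst_Price} therefore hold at once provided the user chooses $\hat Q \succeq \delta_2 I$ (taking $G_s=I_3$, $Q'=\hat Q$) and $\mathrm{Cov}(V_n)\succeq\delta_3 I$, which are sensible default tunings.

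The hypothesis that the inputs stay bounded makes $A_t$ uniformly bounded in operator norm; Gr\"onwall's inequality then yields uniform upper bounds on $\|\Phi_{t_n}^{t_{n+1}}\|$ and on $\|(\Phi_{t_n}^{t_{n+1}})^{-1}\|$ over every intersample interval, so condition (i) holds. Condition (iv) follows by propagating $\hat Q \succeq \delta_2 I$, integrated over a single intersample step, forward through the bounded $\Phi$, and the upper bound in (v) is immediate from the same control on $\Phi$ and on $\hat N_n^{-1}$.

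The genuine obstacle is the observability \emph{lower} bound in (v). The decisive structural fact is that
\[
A_t = \begin{pmatrix} 0 & 0 & 0 \\ 0 & 0 & u_t v_t \\ v_t & -u_t v_t & 0 \end{pmatrix}, \qquad H = (0_{2,1},\, I_2),
\]
so $\dot\xi^1 = 0$, while the unobserved heading-error component $\xi^1$ drives the observed third component through $\dot\xi^3 = v_t\xi^1 - u_t v_t\xi^2$: heading error is revealed to the filter only via motion. The displacement hypothesis implies $\int_{t_n}^{t_{n+1}} |v_s|\,ds \geq \|x_{t_{n+1}} - x_{t_n}\| \geq v_{\min}$, so the integrated speed on every intersample interval is uniformly bounded below. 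I would argue the lower bound of (v) by contradiction: if a unit vector $v\in\RR^3$ satisfied $v^T W v < \varepsilon$ for the Gramian $W$ of (v), then $\|H\Phi^{t_n}_{t_{i+1}} v\|$ would be small at every sample of a window of length $M$; explicit integration of the three-component linear ODE over two or three consecutive intervals, combined with the velocity lower bound, forces first the two position-error components of $v$ to be small (direct observation at any single sample), and then $v^1$ to be small (from the $v_t\xi^1$ drive into $\xi^3$), contradicting $\|v\|=1$.

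The hard part is making this estimate quantitative and uniform in $n$, i.e.\ exhibiting a window length $M$ and a constant $\beta_1>0$ that work at every point of the trajectory. This reduces to a continuous-time persistency-of-excitation estimate whose only ingredients are $v_{\min}$, the input upper bound, and a uniform bound on the intersample length, all of which are independent of $n$. Once conditions (i)--(v) are established, Theorem~\ref{thm::localCV} delivers the announced asymptotic stability in the sense of Definition~\ref{deff}, with a convergence radius $\epsilon>0$ valid uniformly along the whole trajectory.
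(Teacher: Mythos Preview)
Your strategy matches the paper's: both reduce to Theorem~\ref{thm::localCV} and verify the Deyst--Price conditions, and both identify (i) and the lower bound in (v) as the only nontrivial items. Your treatment of (ii)--(iv) and the upper bound in (v) is fine and essentially what the paper does (the paper simply declares these trivial).

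For (i), the paper is slightly sharper than Gr\"onwall: it computes $A_t+A_t^T$ explicitly and observes that the $u_tv_t$ entries cancel, leaving eigenvalues $\{v_t,-v_t,0\}$, so $(\Phi_{t_n}^{t_{n+1}})^T\Phi_{t_n}^{t_{n+1}}\succeq e^{-v_{\max}\Delta t}I$ under a bound on $|v_t|$ alone. Your Gr\"onwall argument also works but needs both $v_t$ and $u_tv_t$ bounded.

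For the lower bound in (v), the paper does \emph{not} argue by contradiction or persistency of excitation; it shows directly that $M=2$ suffices. It decomposes the one-step flow as
\[
\Phi_{t_n}^{t_{n+1}}=\begin{pmatrix}1&0_{1,2}\\ \delta V_n & T_n\end{pmatrix},
\]
and the key observation is that $\|\delta V_n\|$ equals the actual displacement $\|x_{t_{n+1}}-x_{t_n}\|$ (this is the precise meaning of the displacement hypothesis, and is sharper than your arc-length bound $\int|v_s|\,ds\ge v_{\min}$). The two-step Gramian then reduces to a quadratic form $M(\theta,u)=\|\hat R_{t_n}u\|_N^2+\|\theta\,\hat R_{t_{n-1}}\delta V_n+\hat R_{t_{n-1}}T_n u\|_N^2$, and the paper obtains a uniform lower bound by a $\lambda$-parametrized completion of squares, letting $\lambda\to(1/\sqrt 2)^{-}$ so that the cross term is absorbed while the $\theta^2\|\delta V_n\|^2$ term survives with positive coefficient. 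This closes the ``hard part'' you flagged, explicitly and uniformly in $n$, using only $v_{\min}$ and the bound on $|v_t|$.

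So your outline is correct but stops exactly where the work is; the paper's algebraic route with $M=2$ and the $\lambda$-trick is what you are missing.
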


The proof is a verification of the hypotheses of Theorem \ref{thm::localCV} and  has been moved to Appendix \ref{proof::car}.  
Note that it seems very difficult to improve on the assumptions: if the car is at the same place each time its position is measured, the heading $\theta_t$ becomes unobservable, and in practice an arbitrary high velocity is unfeasible. 
 \subsubsection{Stability of the IEKF for the right-invariant output \eqref{car_feature:nonoise}}


\begin{prop}
If at least two distinct points are observed then the IEKF is an asymptotically stable observer in the sense of Definition \ref{deff} about \emph{any} bounded trajectory.
\end{prop}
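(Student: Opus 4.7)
The plan is to invoke Theorem \ref{thm::localCV} by verifying the five Deyst--Price conditions for the system linearized around the \emph{true} trajectory, using the matrices $A_t$, $H_n$, $\hat Q_t$, $\hat N_n$ derived in Section \ref{etouais} (with $\hat \chi_t$ replaced by $\chi_t$). The first simplification I would exploit is that the propagation matrix $A_t \equiv 0_{3,3}$, so the state-transition matrix satisfies $\Phi_s^t \equiv I_3$ for all $s,t$. This immediately gives condition (i) with $\delta_1 = 1$, and reduces conditions (iv) and (v) to uniform upper and lower bounds on $\int \hat Q_s \, ds$ and $H^T \hat N_n^{-1} H$ over the sliding window.

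For the process-noise conditions (ii) and (iv), the key structural observation is the factorization $\hat Q_t = B_t \bar Q B_t^T$ with $\bar Q = \mathrm{Cov}(w_t^\theta, w_t^l, w_t^{tr})^T$ constant and positive definite, and
\[
B_t = \begin{pmatrix} 1 & 0_{1,2} \\ \begin{matrix} x_t^2 \\ -x_t^1 \end{matrix} & R(\theta_t) \end{pmatrix}, \qquad \det B_t = 1.
\]
Boundedness of the trajectory gives a uniform upper bound on $\|B_t\|$; combined with $\det B_t = 1$ this forces a uniform upper bound on $\|B_t^{-1}\|$, so the spectrum of $B_t \bar Q B_t^T$ stays in a compact subinterval of $(0,\infty)$, and integration over a window of bounded positive length yields (iv). Condition (iii) follows analogously from the block-diagonal form of $\hat N_n$: each block is $R(\hat\theta_{t_n}) \mathrm{Cov}(V_n^k) R(\hat\theta_{t_n})^T$, and since $R(\cdot)$ is orthogonal, its spectrum equals that of $\mathrm{Cov}(V_n^k)$, which is bounded away from $0$ and $\infty$.

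The main obstacle, though ultimately not difficult, is the observability condition (v), which here reduces to showing that the constant matrix $H$ from Section \ref{etouais} has full column rank $3$ as soon as two distinct landmarks are observed. I would argue this directly: for two landmarks $p_i \neq p_j$, subtracting the corresponding first rows gives $(-(p_i)_2 + (p_j)_2,\, 0,\, 0)$ and subtracting the second rows gives $((p_i)_1 - (p_j)_1,\, 0,\, 0)$; since $p_i \neq p_j$, at least one of these is a nonzero multiple of $e_1^T$, which together with the rows $(\ast, 1, 0)$ and $(\ast, 0, 1)$ already present yields $\mathrm{rank}(H) = 3$. Combined with the uniform lower bound on $\hat N_n^{-1}$ from (iii), this delivers (v) for any $M \geq 1$. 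All five Deyst--Price conditions hold uniformly along the true trajectory, so Theorem \ref{thm::localCV} applies and gives asymptotic stability of the RIEKF about any bounded trajectory.
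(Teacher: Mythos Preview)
Your proof is correct. The paper takes a shorter route: it invokes Theorem~\ref{thm::bounded} (constant $A_t$, factorized $\hat Q$, $\hat N$, detectability/reachability), so that the only thing left to check is that $H$ has full column rank~$3$, which the paper dismisses as ``obvious'' since two known landmarks determine both position and heading. You instead go back to Theorem~\ref{thm::localCV} and verify the five Deyst--Price conditions directly, exploiting $A_t\equiv 0$ so that $\Phi\equiv I_3$. The substantive content is the same---the rank-$3$ argument for $H$---but you make it explicit via row operations, and you also spell out why $\hat Q_t$ and $\hat N_n$ are uniformly bounded above and below (the $\det B_t=1$ observation combined with the bounded-trajectory hypothesis, and the orthogonality of $R(\theta)$). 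Your route is longer but more self-contained; the paper's route is terser but leans on Theorem~\ref{thm::bounded}, whose hypotheses (seemingly constant $B$, $D$) are only loosely matched to this example, so your direct verification is arguably the cleaner justification.
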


\begin{proof}
According to Theorem \ref{thm::bounded} it is sufficient to show that in this case the observation matrix $H$ is full-rank, i.e. of rank 3. This is obvious as the position and the heading are easily computed from the observation of two vectors at known locations.
\end{proof}

\subsection{Simulations}

The IEKF described in Section \ref{sect::gain_tuning_car} has been implemented and compared to a classical EKF for the experimental setting described by Figure \ref{fig::Simu_2D}. The car drives along a 10-meter diameter circle  for 40 seconds with high rate odometer measurements (100 Hz) and low rate GPS measurements (1 Hz). 

The equations of the IEKF can be found above. The conventional  EKF  is based on the linear error $e_t=\begin{pmatrix} \theta_t - \hat \theta_t \\  x_t^1 - \hat x_t^1 \\ x_t^2 -\hat x_t^2 \end{pmatrix}$ yielding the linearized matrices  $
F_t = \begin{pmatrix}
0 & 0 & 0 \\
-\sin(\hat \theta_t) v_t & 0 & 0 \\
\cos(\hat \theta_t) v_t & 0 & 0
\end{pmatrix} $ and $
H_n = 
\begin{pmatrix}
0 & 1 & 0 \\
0 & 0 & 1 
\end{pmatrix}.
$
 Both filters are tuned with the same design parameters (which can be interpreted as odometer and GPS noise covariances) $
N = I_2,$ and $Q=diag( (\pi/180)^2,10^{-4} ,10^{-4})$ i.e. moderate angular velocity uncertainty and highly precise linear velocity. 
The simulation is performed for two initial values of the heading error: 1$^{\circ}$ and 45$^{\circ}$ while the initial position is always assumed known. The covariance matrix $P_0$ is consistent with the initial error (it encodes a standard deviation of the heading of 1$^{\circ}$ and 45$^{\circ}$ respectively).

The results are displayed on Figure \ref{fig::Simu_2D}. We see that for small initial errors both filters behave similarly for a long time, but for larger errors they soon behave differently, and we see  the IEKF, whose design has been adapted to the specific structure of the system, completely outperforms the EKF.

\begin{figure*}[!h]
  \centering
\begin{minipage}[b][]{0.49\linewidth}\noindent
\includegraphics[width=\linewidth]{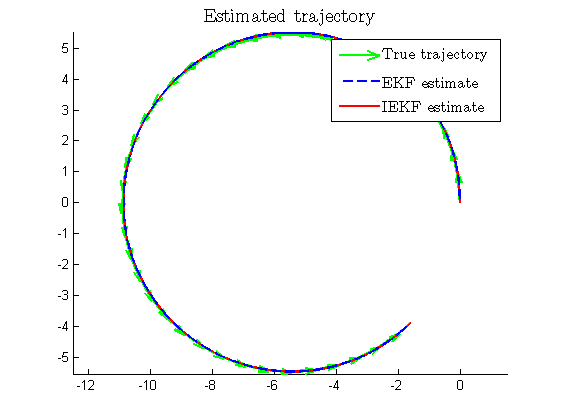}
\end{minipage}
\begin{minipage}[b][]{0.49\linewidth}\noindent
\includegraphics[width=\linewidth]{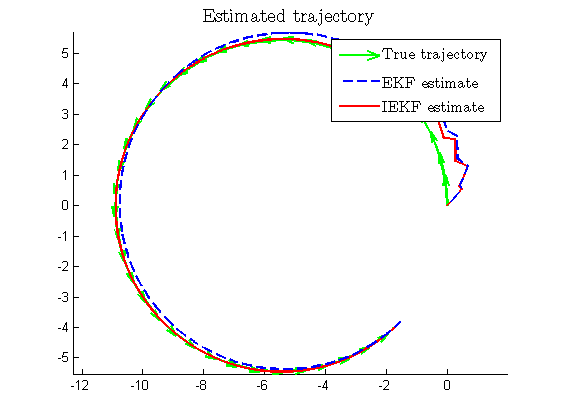}
\end{minipage}
\begin{minipage}[b][]{0.49\linewidth}
\includegraphics[width=\linewidth]{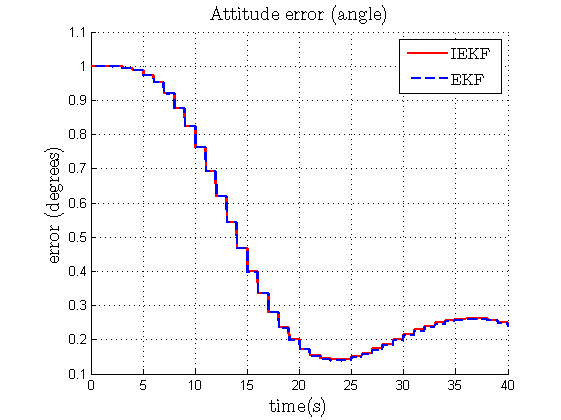}
\end{minipage}
\begin{minipage}[b][]{0.49\linewidth}
\includegraphics[width=\linewidth]{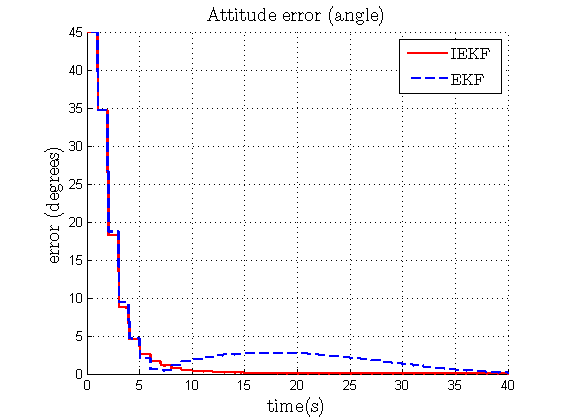}
\end{minipage}
\begin{minipage}[b][]{0.49\linewidth}\noindent
\includegraphics[width=\linewidth]{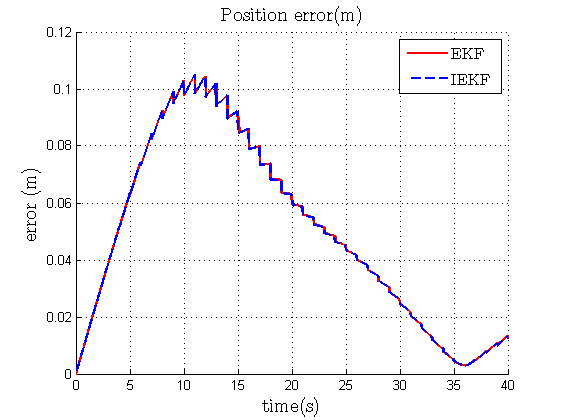}
\end{minipage}
\begin{minipage}[b][]{0.49\linewidth}\noindent
\includegraphics[width=\linewidth]{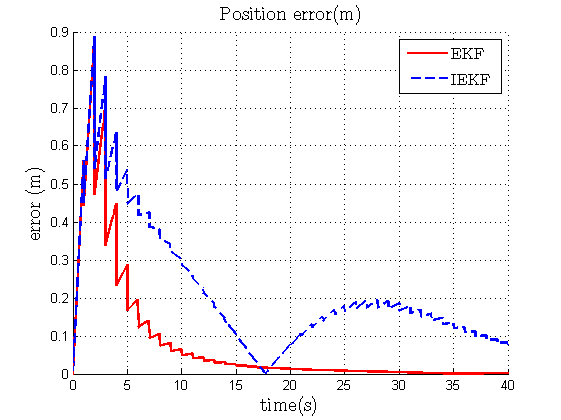}
\end{minipage}
  \caption{The heading and position of the car are estimated through EKF and IEKF with high rate odometry and low rate GPS measurements. Top plots illustrate the experimental setting and display the estimated trajectories, middle plots display the heading errors and bottom plots the position errors. As the starting point is assumed known in this simulation, the initial values of the latter are zero. But it increases afterwards due to initial heading error.
\textbf{Left column:} small initial angle error (1$^{\circ}$). We see EKF and IEKF behave similarly (at least for a long time) as propagation steps are identical. \textbf{Right column:} large initial angle error (45$^{\circ}$). The behaviors rapidly become different, and the EKF is  outperformed. Due to its righteous use of the system's non-linearities, the IEKF keeps ensuring rapid estimation error decrease.
}
\label{fig::Simu_2D}
\end{figure*}


\section{Navigation on flat earth}\label{sect::examples:B}

In this example we estimate the orientation, velocity, and position of a rigid body in space from inertial sensors and relative observations of points having known locations (the setting of \cite{Vasconcelos} but with the state including the position). To our knowledge, this is the first time the invariant observer on Lie groups based approach  is applied to this full navigation problem with landmarks, except for our preliminary conference paper \cite{barrau-bonnabel-cdc14}. Indeed, this example does not fit into the usual framework leading to autonomous errors (unless we discard the position estimate as in   \cite{Vasconcelos}) but thanks to Theorem \ref{thm::CNS} we see it still leads to an autonomous error equation. This allows the IEKF observer to possess provable convergence properties. Note  the problem at hand is different from the navigation problems using magnetometers, and velocity and position measurements of the GPS  \cite{hua2010attitude,grip2015globally}. 

Of course, the EKF, to be more precise its appropriate variant the multiplicative (M)EKF \cite{lefferts1982kalman}, is the state of the (industrial) art for this navigation example, due to its good  performances, and easy tuning based on sensors' noise covariances. But to our best knowledge it is nowhere proved to possess stability properties as a non-linear observer, and  simulations below even indicate  it may diverge in some situations whereas the IEKF converges. The computations require basic formulas recalled in Appendix \ref{sect::tuto_Lie_groups}. 

\subsection{Considered  model}


We consider here the  more complicated model of a vehicle evolving in the 3D space and characterized by its attitude $R_t$, velocity $v_t$ and position $x_t$. The vehicle is endowed with accelerometers and gyroscopes whose measures are denoted respectively by $u_t$ and angular velocity $\omega_t$. The   dynamics read:
\begin{equation}
\label{3D_dynamics:nonoise} 
\frac{d}{dt} R_t  = R_t (\omega_t)_\times \qquad , \qquad
\frac{d}{dt} v_t  = g + R_t u_t  \qquad , \qquad
\frac{d}{dt} x_t  = v_t ,
\end{equation}
where $(\omega)_\times$ denotes the $3\times 3$ skew-symmetric matrix associated with the cross product with $\omega$, that is, for any $b\in\RR^3$ we have $(\omega)_\times b=\omega\times b$. 
Observations of the relative position of known features (using for instance a depth camera) are considered:
\begin{equation}
(Y_n^1,...,Y_n^k) = \left( R_{t_n}^T (p_1-x_{t_n}) , ..., R_{t_n}^T (p_k-x_{t_n})   \right), \label{3D_features:nonoise}
\end{equation} where $(p_1,...,p_k)$ denote the (assumed known) position of the features in the earth-fixed frame.

\subsection{IEKF gain tuning}
To derive and tune the IEKF equations, we follow the methodology of Section \ref{ouida} which amounts to 1- associate a ``noisy "system to the original considered system, just because it allows obtaining a sensible  tuning of the design matrices from an engineering viewpoint, 2- transform it into a system defined on a matrix Lie group to make it fit into our framework, 3- linearize the ``noisy" equations, and 4- use the Kalman equations to tune the observer gain. 

\subsubsection{Associated ``noisy" system}

By merely introducing noise in the accelerometers' and gyrometers' measurements we obtain the well-known equations \cite{farrell2008aided}:
\begin{equation}
\label{3D_dynamics} 
\frac{d}{dt} R_t  = R_t (\omega_t+w_t^{\omega})_\times  , \quad
\frac{d}{dt} v_t  = g + R_t (u_t+w_t^u)   , \quad
\frac{d}{dt} x_t  = v_t .
\end{equation}
Letting additive noise pollute the observations (the sensor being in the body frame) we get:
\begin{equation}
(Y_n^1,...,Y_n^k) = \left( R_{t_n}^T (p_1-x_{t_n}) + V_n^1, ..., R_{t_n}^T (p_k-x_{t_n})  + V_n^k \right), \label{3D_features}
\end{equation}
where $V_n$, $V_n^1,...,V_n^k$ are noises in $\RR^3$.

\subsubsection{Matrix form}
As already noticed in the preliminary work \cite{barrau-bonnabel-cdc14}, the system \eqref{3D_dynamics} can be embedded in the group of double homogeneous matrices (see Appendix \ref{sect::tuto_SE23}) using the matrices $\chi_t$, $w_t$ and function $f_{\omega,u}$:
\[
\chi_t = \begin{pmatrix} R_t & v_t & x_t \\ 0_{3,1} & 1 & 0 \\ 0_{3,1} & 0 & 1 \end{pmatrix}, \qquad w_t = \begin{pmatrix} (w_t^\omega)_\times & w_t^u & 0_{3,1} \\
0_{1,3} & 0 & 0 \\
0_{1,3} & 0 & 0 \end{pmatrix},
\]
\[
f_{\omega,u}: \begin{pmatrix} R & v & x \\ 0_{3,1} & 1 & 0 \\ 0_{3,1} & 0 & 1 \end{pmatrix} \rightarrow \begin{pmatrix} R(\omega)_\times & g + Ru & v \\ 0_{3,1} & 0 & 0 \\ 0_{3,1} & 0 & 0 \end{pmatrix}.
\]
The equation of the dynamics becomes:
\begin{equation}
\frac{d}{dt} \chi_t = f_{\omega_t,u_t}(\chi_t) + \chi_t w_t, \label{eq::3D_matrix_dynamique}
\end{equation}
and the observations \eqref{3D_features} have the equivalent forms:
\begin{equation}
(Y_n^1,...,Y_n^k) = \left( \chi_{t_n}^{-1} \begin{pmatrix} p_1 \\ 0 \\ 1 \end{pmatrix} +\begin{pmatrix}  V_n^1\\ 0 \\ 0 \end{pmatrix}, ...,  \chi_{t_n}^{-1} \begin{pmatrix} p_k \\ 0 \\ 1 \end{pmatrix} +\begin{pmatrix}  V_n^k\\ 0 \\ 0 \end{pmatrix}  \right) . \label{eq::3D_matrix_features}
\end{equation}
\begin{prop}
The matricial function $f_{\omega_t, u_t}$ is neither left nor right invariant. However the reader can verify relation \eqref{eq::main_relation}  which is easy to derive.
\end{prop}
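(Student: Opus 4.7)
The plan is to verify both assertions by straightforward block-matrix computation in $SE_2(3)$, using the parameterization $\chi=\begin{pmatrix}R & v & x\\0&1&0\\0&0&1\end{pmatrix}$ with $R\in SO(3)$ and $v,x\in\RR^3$. Everything reduces to comparing the top block rows of $3\times 5$ matrices, since the bottom two rows of $f_{\omega,u}(\chi)$ are identically zero. All manipulations are elementary linear algebra; no Lie-theoretic machinery is needed beyond the multiplication rule
\[
ab \;=\; \begin{pmatrix} R_aR_b & R_av_b+v_a & R_ax_b+x_a\\ 0 & 1 & 0\\ 0 & 0 & 1\end{pmatrix}.
\]

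For the non-invariance claim, I would first evaluate $f_{\omega,u}(Id)$ with $R=I_3$, $v=0$, $x=0$, obtaining a matrix whose top row reads $\bigl((\omega)_\times,\, g+u,\, 0\bigr)$. Left translation by $\chi$ gives top row $\bigl(R(\omega)_\times,\, R(g+u),\, 0\bigr)$, whereas $f_{\omega,u}(\chi)$ has top row $\bigl(R(\omega)_\times,\, g+Ru,\, v\bigr)$: the gravity term and the position slot already disagree, so left-invariance fails. The right-translate $f_{\omega,u}(Id)\chi$ has top row $\bigl((\omega)_\times R,\, (\omega)_\times v + g + u,\, (\omega)_\times x\bigr)$, which differs from $f_{\omega,u}(\chi)$ in every column for generic data.

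For the main identity \eqref{eq::main_relation}, I would compute the four matrices $f_{\omega,u}(ab)$, $f_{\omega,u}(a)b$, $af_{\omega,u}(b)$ and $af_{\omega,u}(Id)b$ column by column and check that the first three block-columns of $f(a)b+af(b)-af(Id)b$ reproduce those of $f(ab)$. Column one simplifies immediately: the three summands contribute $R_a(\omega)_\times R_b$, $R_aR_b(\omega)_\times$ and $-R_a(\omega)_\times R_b$, leaving $R_aR_b(\omega)_\times$, as required. Column three is equally quick: the two occurrences of $R_a(\omega)_\times x_b$ cancel and one is left with $v_a+R_av_b$, matching the $(1,3)$ block of $f(ab)$. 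The delicate column is the velocity one, because gravity and the accelerometer term mix with the rotation in an inhomogeneous way; there the sum equals $R_a(\omega)_\times v_b + (g+R_au) + (R_ag+R_aR_bu) - R_a(\omega)_\times v_b - R_ag - R_au = g + R_aR_bu$, again matching $f(ab)$.

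The only subtle point—and the reason this system sits outside the classical left/right-invariant framework—is precisely the velocity column: the affine gravity term $g$ and the body-frame input $R_au$ must cancel against a spurious $R_ag+R_au$ coming from $af(Id)b$. This is what makes condition \eqref{eq::main_relation} the right generalization, and the cancellation is exactly what Theorem \ref{thm::CNS} exploits to yield an autonomous error equation for the inertial navigation problem.
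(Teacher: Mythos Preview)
Your verification is correct and is exactly the direct block-matrix computation the paper invites the reader to carry out; the paper itself gives no proof beyond the sentence ``the reader can verify relation \eqref{eq::main_relation} which is easy to derive.'' Your column-by-column bookkeeping for $f(a)b+af(b)-af(Id)b$ matches $f(ab)$ in all three block columns, and your non-invariance counterexamples via $\chi f(Id)$ and $f(Id)\chi$ are valid.
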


\subsubsection{IEKF equations}
\label{sect::IEKF_eq_3D}
The RIEKF \eqref{LIEKF:::eq} for the associated ``noisy" system
 \eqref{eq::3D_matrix_dynamique}, with right-invariant ``noisy" observations \eqref{eq::3D_matrix_features} reads:
\[
\frac{d}{dt} \hat{\chi}_t = f_{\omega_t,u_t} (\hat{\chi}_t)
\qquad , \qquad
\hat{\chi}_t^+= \exp \left( L_n \begin{pmatrix} \hat{\chi}_t Y_n^1 \\ ... \\ \hat{\chi}_t Y_n^k  \end{pmatrix} \right) \hat{\chi}_t.
\]
As the two last entries of each matrix $\hat{\chi}_t^{-1}Y_n^j$ are always zero, one we can conveniently use a reduced-dimension gain matrix $\tilde{L}_n$ defined by $L_n=\tilde{L}_n \tilde{p}$ with $\tilde{p}=\begin{pmatrix} [I_3, 0_{3,2}] & & \\ & \ddots \\ & & [I_3, 0_{3,2}] \end{pmatrix}$.  
The right-invariant error is $
\eta_t =  \hat{\chi}_t \chi_t^{-1}
$
and its evolution reads:

\begin{align}
 \frac{d}{dt} \eta_t & = f_{\omega_t,u_t} (\eta_t) - \eta_t f_{\omega_t,u_t}(I_5) - (\hat{\chi}_t  w_t \hat{\chi}^{-1}) \eta_t , \label{eq::error_3D_dynamics_features} \\
\eta_{t_n}^+ & =  \exp \left( \tilde{L}_n \tilde{p} \begin{pmatrix} \eta_{t_n} \begin{pmatrix} p_1 \\ 0 \\ 1 \end{pmatrix} +\hat{\chi}_{t_n} \begin{pmatrix} V_n^1 \\ 0_{2,1} \end{pmatrix}  \\ ... \\  \eta_{t_n} \begin{pmatrix} p_k \\ 0 \\ 1 \end{pmatrix} + \hat{\chi}_{t_n} \begin{pmatrix} V_n^k \\ 0_{2,1} \end{pmatrix}  \end{pmatrix} \right) \eta_{t_n}. \label{eq::error_3D_observation_features}
\end{align}

To linearize this equation we introduce the linearized error $\xi_t$ by replacing $\eta_t$ with $I_3 + \mathcal{L}_{\frak{se}(3),2}(\xi_t)$. Using the first order approximations $\eta_t = I_3 + \mathcal{L}_{\frak{se}(3),2}(\xi_t)$, $\eta_t^+ = I_3 + \mathcal{L}_{\frak{se}(3),2}(\xi_t^+)$, $\exp(u)=I_3+\mathcal{L}_{\frak{se}(3),2}(u)$ and $\eta_t^{-1} = I_3 - \mathcal{L}_{\frak{se}(3),2}(\xi_t) $ in \eqref{eq::error_3D_dynamics_features}, \eqref{eq::error_3D_observation_features} and removing the second-order terms in $\xi_t$, $V_n$ and $w_t$ we obtain:

\begin{align*}
\frac{d}{dt} \xi_t & = \begin{pmatrix} 0_{3,3} & 0_{3,3} & 0_{3,3} \\ (g)_\times & 0_{3,3} & 0_{3,3} \\ 0_{3,3} & I_3 & 0_{3,3} \end{pmatrix} \xi_t - \begin{pmatrix} \hat{R}_t & 0_{3,3} & 0_{3,3} \\ (\hat{v}_t)_\times \hat{R}_t & \hat{R}_t & 0_{3,3} \\ (\hat{x}_t)_\times \hat{R}_t & 0_{3,3} & \hat{R}_t \end{pmatrix} w_t, \\
\xi_{t_n}^+ & = \xi_{t_n} - \tilde{L}_n \left[ \begin{pmatrix}  (p_1)_\times & 0_{3,3} & -I_3 \\ & ... & \\ (p_k)_\times & 0_{3,3} & -I_3  \end{pmatrix} \xi_{t_n} - \begin{pmatrix} \hat{R}_{t_n} V_n^1 \\ ... \\  \hat{R}_{t_n} V_n^k \end{pmatrix} \right].
\end{align*}
The gains $\tilde{L}_n$  are computed using the Riccati equation \eqref{eq::Riccati_Left} and matrices $A_t,H, \hat{Q}$ and $\hat{N}$ defined as:
\[
A_t = \begin{pmatrix} 0_{3,3} & 0_{3,3} & 0_{3,3} \\ (g)_\times & 0_{3,3} & 0_{3,3} \\ 0_{3,3} & I_3 & 0_{3,3} \end{pmatrix} , \qquad H = \begin{pmatrix}  (p_1)_\times & 0_{3,3} & -I_3 \\ & ... & \\ (p_k)_\times & 0_{3,3} & -I_3  \end{pmatrix},
\]
\[
\hat{Q}= \begin{pmatrix} \hat{R}_t & 0_{3,3} & 0_{3,3} \\ (\hat{v}_t)_\times \hat{R}_t & \hat{R}_t & 0_{3,3} \\ (\hat{x}_t)_\times \hat{R}_t & 0_{3,3} & \hat{R}_t \end{pmatrix} Cov(w_t) \begin{pmatrix} \hat{R}_t & 0_{3,3} & 0_{3,3} \\ (\hat{v}_t)_\times \hat{R}_t & \hat{R}_t & 0_{3,3} \\ (\hat{x}_t)_\times \hat{R}_t & 0_{3,3} & \hat{R}_t \end{pmatrix}^T ,
\]
\[
\hat{N} = \begin{pmatrix} \hat{R}_{t_n} Cov(V_n^1) \hat{R}_{t_n}^T & & \\ & \ddots & \\ &  &  \hat{R}_{t_n} Cov(V_n^k) \hat{R}_{t_n}^T \\  \end{pmatrix}.
\]

\subsection{Stability properties of the IEKF viewed as a non-linear observer for the navigation example}
\begin{thm}
\label{thm::3D_features_stability}
If three non-collinear points are observed, then the  IEKF whose equations are derived in Section \ref{sect::IEKF_eq_3D} is an asymptotically stable observer for the system \eqref{3D_dynamics:nonoise}-\eqref{3D_features:nonoise}   in the sense of Definition \ref{deff} about \emph{any} bounded trajectory.
\end{thm}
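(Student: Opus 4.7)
The plan is to reduce the claim to Theorem \ref{thm::bounded} rather than verify the Deyst--Price conditions of Theorem \ref{thm::Deyst_Price} directly. First I would read off the linearization of Section \ref{sect::IEKF_eq_3D} and observe that the propagation matrix
\[
A = \begin{pmatrix} 0_{3,3} & 0_{3,3} & 0_{3,3} \\ (g)_\times & 0_{3,3} & 0_{3,3} \\ 0_{3,3} & I_3 & 0_{3,3} \end{pmatrix}
\]
is constant along any trajectory, since gravity is constant. This disposes of the first hypothesis of Theorem \ref{thm::bounded} for free.

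Next I would exhibit the factorizations $\hat Q_t = B_t\,\bar Q\, B_t^T$ and $\hat N_n = D_{t_n}\,\bar N\, D_{t_n}^T$ from the formulas of Section \ref{sect::IEKF_eq_3D}, where $B_t$ is the block lower-triangular factor (diagonal blocks $\hat R_t$; subdiagonal blocks $(\hat v_t)_\times \hat R_t$ and $(\hat x_t)_\times \hat R_t$), $D_{t_n}$ is block-diagonal in $\hat R_{t_n}$, and $\bar Q,\bar N$ are the user-chosen positive-definite process- and measurement-noise covariances. Along any bounded trajectory, $\hat v_t$ and $\hat x_t$ stay bounded, so $B_t$ and $D_{t_n}$ are uniformly bounded; a block-triangular inversion gives $B_t^{-1}$ the same structure with $-\hat R_t^T (\hat v_t)_\times$ and $-\hat R_t^T (\hat x_t)_\times$ in the subdiagonal slots, hence also uniformly bounded. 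Reachability of $(A,B)$ then follows immediately from invertibility of $B_t$.

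The core algebraic step is observability of $(A,H)$. A short computation shows that each block of $HA$ equals $(0,-I_3,0)$, which makes $\xi_v$ directly observable. For the remaining components, $H\xi = 0$ gives $(p_i)_\times \xi_R = \xi_x$ for $i=1,2,3$; subtracting pairs yields $(p_i-p_j)_\times \xi_R = 0$, and non-collinearity of $p_1,p_2,p_3$ makes $p_2-p_1$ and $p_3-p_1$ linearly independent. A vector $\xi_R$ forced to be parallel to two non-parallel directions must vanish, and then $\xi_x = 0$ as well, so the stacked matrix $[H^T,(HA)^T]^T$ has full column rank $9$ and $(A,H)$ is observable (hence detectable).

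The main obstacle I expect is not the linear algebra but extracting uniformity in time: $B_t$ and $D_{t_n}$ genuinely depend on the estimated trajectory, and one must invoke boundedness of the true trajectory together with the local nature of Theorem \ref{thm::bounded} (by which the estimate stays close to the truth on the trajectory of interest) to pin down uniform upper and lower bounds on the factors. Once that subtlety is in place, Theorem \ref{thm::bounded} applies and delivers the claimed asymptotic stability of the IEKF about any bounded trajectory.
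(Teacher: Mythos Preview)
Your approach is correct and follows the same high-level route as the paper: reduce to Theorem~\ref{thm::bounded} and verify observability of $(A,H)$. The only notable difference is in the observability check itself. The paper integrates $A$ over one sampling interval to obtain the discrete transition matrix $\Phi$, stacks $[H;H\Phi]$, and reaches rank $9$ by elementary row operations; you instead work directly with $[H;HA]$ and argue via the kernel, using $(p_i-p_j)_\times\xi_R=0$ and non-collinearity to force $\xi_R=0$. Your argument is slightly cleaner and avoids computing $\Phi$ explicitly. You are also more explicit than the paper about the reachability and boundedness hypotheses of Theorem~\ref{thm::bounded} (the paper simply asserts that observability is the only nontrivial point), and your handling of the trajectory-dependence of $B_t$, $D_{t_n}$ via the local nature of the result is the right way to close that loop.
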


\begin{proof}
According to Theorem \ref{thm::bounded} we only have to ensure the couple (A,H) is observable. Integrating the propagation on one step we obtain the discrete propagation matrix $$ \Phi=\begin{pmatrix} I_3 & 0_{3 \times 3} & 0_{3 \times 3} \\ t(g)_{\times} & I_3 & 0_{3 \times 3} \\ \frac{1}{2} t^2 (g)_\times  & t Id_{3 \times 3} & I_3 \end{pmatrix}. $$ The observation matrix is denoted $H$. We will show that $[H ; H \Phi]$ has rank $9$. We can keep only the raws corresponding to the observation of three non-collinear features $p_1,p_2,p_3$ and denote the remaining matrix by $\mathcal{H}_1$. Matrices $\mathcal{H}_2$ and $\mathcal{H}_3$, obtained using elementary operations on the columns of $\mathcal{H}_1$, have a rank inferior or equal to the rank of $\mathcal{H}_1$:
\[
\mathcal{H}_1=\begin{pmatrix}
(p_1)_\times & 0_{3 \times 3} & -I_3  \\
(p_2)_\times & 0_{3 \times 3} & -I_3  \\
(p_3)_\times & 0_{3 \times 3} & -I_3  \\
(p_1)_\times - \frac{1}{2}t^2 (g)_\times & -t I_3 & -I_3 \\
(p_2)_\times - \frac{1}{2}t^2 (g)_\times & -t I_3 & -I_3 \\
(p_3)_\times - \frac{1}{2}t^2 (g)_\times & -t I_3 & -I_3
 \end{pmatrix},
\qquad
\mathcal{H}_2=\begin{pmatrix}
(p_1)_\times & 0_{3 \times 3} & -I_3  \\
(p_2)_\times & 0_{3 \times 3} & -I_3  \\
(p_3)_\times & 0_{3 \times 3} & -I_3  \\
-\frac{1}{2}t^2 (g)_\times & -t I_3 & 0_{3 \times 3} \\
-\frac{1}{2}t^2 (g)_\times & -t I_3 & 0_{3 \times 3} \\
-\frac{1}{2}t^2 (g)_\times & -t I_3 & 0_{3 \times 3}
 \end{pmatrix},
 \]
 \[
\mathcal{H}_3=\begin{pmatrix}
(p_1-p_3)_\times & 0_{3 \times 3} & 0_{3 \times 3}  \\
(p_2-p_3)_\times & 0_{3 \times 3} & 0_{3 \times 3}  \\
\frac{1}{2}t^2 (g)_\times & t I_3 & 0_{3 \times 3} \\
(p_3)_\times & 0_{3 \times 3} & -I_3 
 \end{pmatrix}.
\]
The diagonal blocks $\begin{pmatrix}
-(p_1-p_3)_\times & 0_{3 \times 3}  \\
-(p_2-p_3)_\times & 0_{3 \times 3} 
 \end{pmatrix}$, $t I_3$ and $I_3$ have rank $3$ thus the full matrix has rank $9$.
\end{proof}

\subsection{Simulations}

The IEKF described in Section \ref{sect::IEKF_eq_3D} has been implemented and compared to a state of the art multiplicative EKF \cite{lefferts1982kalman} for the experimental setting described by Figure \ref{fig::Simu_3D} (top plots). The vehicle drives a 10-meter diameter circle (green arrows) in 30 seconds and observes three features (black circles) every second while receiving high-frequency inertial measurements (100 Hz).  The equations of the IEKF have already been detailed. The error variable to be linearized for the multiplicative (M)EKF is $e_t=(\hat R_t R_t^{-1}, \hat v_t - v_t, \hat x_t - x_t)$. As $\hat R_t R_t^{-1}$ is not a vector variable,  it is linearized using the first-order expansion  $\hat R_t R_t^{-1} = I_3 + \left( \zeta_t \right)_{\times}, \zeta_t \in \mathbb{R}^3$ \cite{lefferts1982kalman}. The linearized error variable is thus a vector $ \epsilon_t = \left( \zeta, dx, dv \right) \in \mathbb{R}^9$. Expanding the propagation and observation steps up to the first order in $\epsilon_t$ give the classical $F_t$ and $H_n$ matrices used in the Riccati Equation of the MEKF:
 $$
F_t = \begin{pmatrix}
0_3 & 0_3 & 0_3 \\
- \left( \hat R_t u_t \right)_\times & 0_3 & 0_3 \\
0_3 & I_3 & 0_3
\end{pmatrix},
$$
$$
H_n = 
\begin{pmatrix}
-\hat R_{t_n}^T \left( p^1 - \hat x_{t_n} \right)_{\times} & \quad 0_3 \quad & \quad \hat R_{t_n}^T \quad \\
 & \ldots &  \\
-\hat R_{t_n}^T \left( p^k - \hat x_{t_n} \right)_{\times} & \quad 0_3 \quad & \quad \hat R_{t_n}^T \quad
\end{pmatrix}.
$$
We use the following design parameters in two distinct simulations, with same $N$ but two different matrices $Q$.
$$
N = \begin{pmatrix}10^{-2} I_3 & 0_3 & 0_3 \\ 0_3 & 10^{-2} I_3 & 0_3 \\ 0_3 & 0_3 & 10^{-2} I_3 \end{pmatrix},
$$
$$
Q_1 = \begin{pmatrix} 10^{-8} I_3 & 0_3 & 0_3 \\ 0_3 & 10^{-8} I_3 & 0_3 \\ 0_3 & 0_3 & 0_3 \end{pmatrix},
\quad
Q_2 = \begin{pmatrix} 10^{-4} I_3 & 0_3 & 0_3 \\ 0_3 & 10^{-4} I_3 & 0_3 \\ 0_3 & 0_3 & 0_3 \end{pmatrix}.
$$
The initial errors are the same for both simulations:  15 degrees for attitude and  1 meter for position standard deviations. The small ``process noise'' matrix $Q_1$, although reasonable in the context of high-precision inertial navigation, has been deliberately chosen to challenge EKF-like methods: the corresponding gains are small so  the errors introduced during the transitory phase due to non-linearities in the initial errors can never be corrected. Note this would not be an issue if the system was linear: the estimation errors and filter gains would decrease simultaneously. The problem is that the error does not decrease as fast as predicted by the \emph{linear} Kalman theory. As shown by the plots of the left column of Figure \ref{fig::Simu_3D} (top plots) it makes the EKF even diverge! This is probably the simplest way to  make the EKF fail in a navigation problem and this is purely a problem of  non-linearity as no noise has been added  whatsoever. Still on the left column, we see the IEKF is not affected by the problem, due to its appropriate non-linear structure. In particular, the attitude and position errors go to zero in accordance with Theorem \ref{thm::3D_features_stability}.

Usually, engineers get around those convergence problems by   artificially inflating the ``process noise'' matrix $Q$ (see also \cite{reif}). This classical solution, sometimes referred to as robust tuning, is illustrated here by using   $Q_2$ instead. The results are displayed on the right column of Fig. \ref{fig::Simu_3D}. They illustrate the fact the EKF, as an observer, can be improved  through a proper tuning, although still much slower to converge that the IEKF. But this raises issues: $Q$ and $N$ have been chosen for a specific trajectory with no guarantee regarding robustness. Moreover, these matrices admit a physical interpretation (the accuracy of the sensors) and arbitrarily changing them by several  orders of magnitude is a renouncement to use this precious information when available. In turn,  this makes  the matrix $P_t$ loose its interpretability as an indication of the observer's accuracy in response to the sensors' trusted accuracy.  For this relevant problem, we thus see the IEKF turns out to be a viable alternative to the EKF thanks to its \emph{guaranteed} properties, and to its convincing experimental behavior reflecting way better performances than the EKF, even for challenging choices of $Q$ and $N$.

\begin{figure*}[!h]
  \centering
\begin{minipage}[b][]{0.44\linewidth}\noindent
\includegraphics[width=\linewidth]{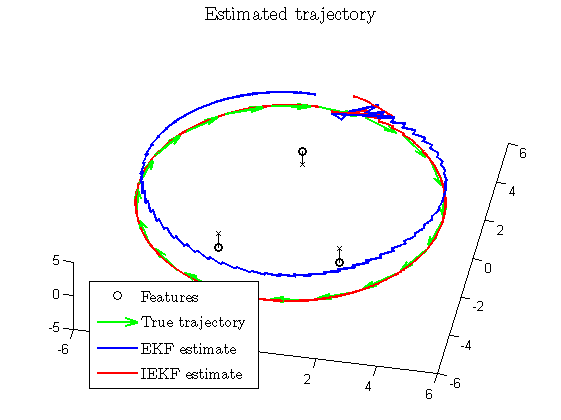}
\end{minipage}
\begin{minipage}[b][]{0.44\linewidth}\noindent
\includegraphics[width=\linewidth]{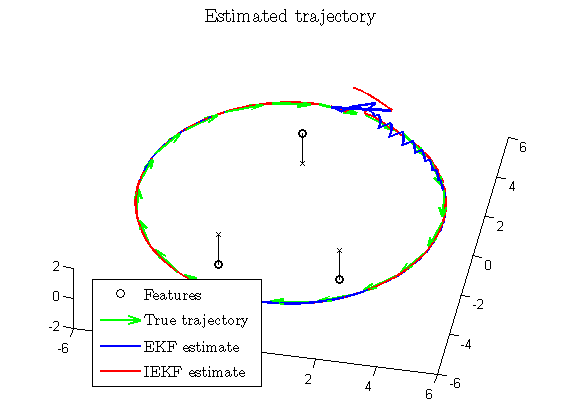}
\end{minipage}
\begin{minipage}[b][]{0.44\linewidth}
\includegraphics[width=\linewidth]{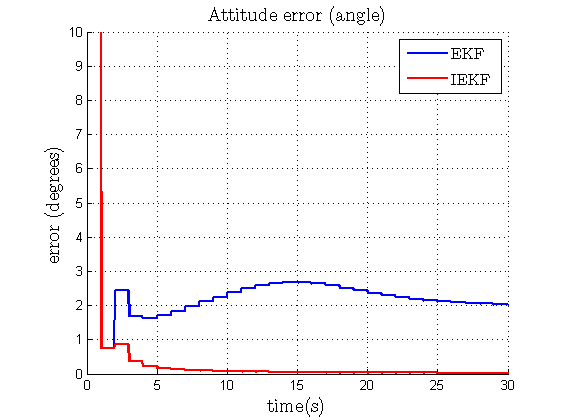}
\end{minipage}
\begin{minipage}[b][]{0.44\linewidth}
\includegraphics[width=\linewidth]{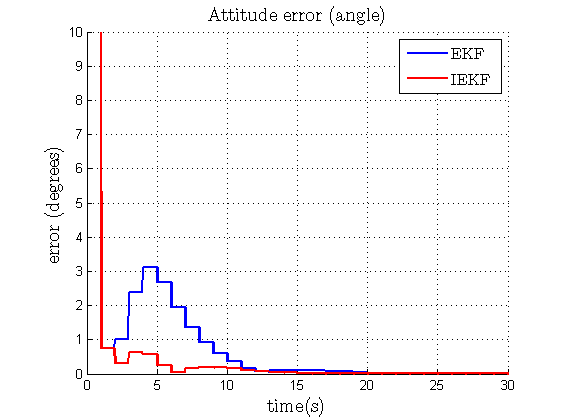}
\end{minipage}
\begin{minipage}[b][]{0.44\linewidth}\noindent
\includegraphics[width=\linewidth]{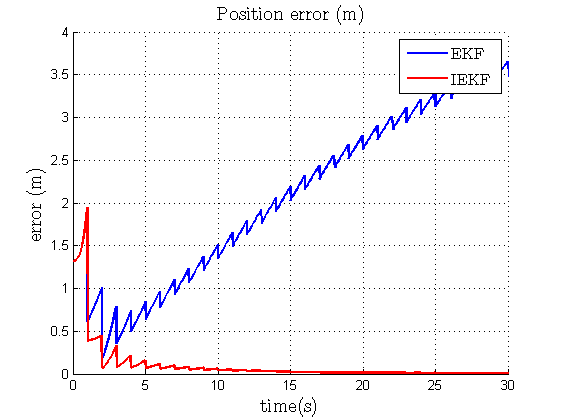}
\end{minipage}
\begin{minipage}[b][]{0.44\linewidth}\noindent
\includegraphics[width=\linewidth]{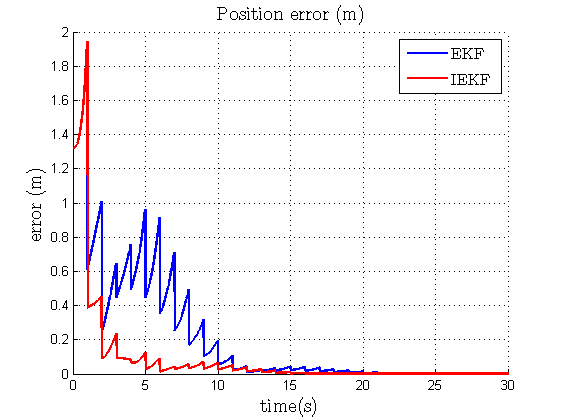}
\end{minipage}
  \caption{Aided inertial navigation based on high rate accelerometers' and gyrometers' measurements and low rate observation of known landmarks. We also displayed the orthogonal projection of the landmarks on the plane containing the trajectory (black crosses) to help imagining the 3D position of the landmarks. This shows the disposition of the landmarks is the same in both experiments. Top plots illustrate the experimental setting and display the  EKF and IEKF estimates. Middle plots display the attitude errors and bottom plots the position errors. \textbf{Left column:} the tuning of $Q$ is tight ($Q=Q_1$) due to highly precise inertial sensors. This creates robustness issues: the gains of the EKF decrease rapidly during the transitory phase while the attitude error is not reduced enough due to non-linearities. When the position estimate is impacted, the gains have become too small to correct the errors, leading to filter's divergence. IEKF ensures rapid decrease to 0 of the estimation error with identical tuning. \textbf{Right plot:}   $Q$ is inflated ($Q=Q_2$). This classical engineering trick prevents the EKF to diverge but IEKF still prevails in terms of time of convergence.}
\label{fig::Simu_3D}
\end{figure*}

\section{Conclusion}
The Invariant EKF, when used as a deterministic observer for an introduced and well characterized class of problems on Lie groups, is shown to possess theoretical stability guarantees under the simple and natural hypotheses of the linear case, a feature the EKF has never been proved to share so far. Simulations confirm the   IEKF is an appealing alternative indeed, as it is always superior to the EKF and  outperforms it in challenging situations, while remaining similar to EKF in terms of tuning, implementation, and computational load. 


\appendix

 \section{Matrix Lie groups useful formulas}
\label{sect::tuto_Lie_groups}

A matrix Lie group $G$ is a set of square invertible matrices of size $N \times N$ verifying the following properties:
\[
Id \in G, \qquad \forall g \in G, g^{-1} \in G , \qquad \forall a,b \in G, ab \in G
\]

If $\gamma(t)$ is a process taking values in $G$ and verifying $\gamma(0)=Id$, then its derivative at $t=0$ cannot take any value in the set of squared matrices $\mathcal{M}_N(\RR)$. It is constrained to lie in a vector subspace $\mathfrak{g}$ of $\mathcal{M}_N( \RR )$ called the ``Lie algebra of $G$". As it proves useful to identify $\mathfrak{g}$ to $\RR^{\dim \mathfrak{g}}$, a linear mapping $\mathcal{L}_{\frak{g}}: \RR^{\dim \frak{g}} \rightarrow \frak{g}$ is used in this paper. The vector space $\mathfrak{g}$ can be mapped to the matrix Lie group $G$ through the classical matrix exponential $\exp_m$. As well, $\RR^{\dim \mathfrak{g}}$ can be mapped to $G$ through a function $\exp$ defined by $\exp(\xi)=\exp_m(\mathcal{L}_{\mathfrak{g}}(\xi))$. For any $g\in G$ the operator $Ad_g : \RR^{\dim \mathfrak{g}} \rightarrow  \RR^{\dim \mathfrak{g}}$ is defined by $g \mathcal{L}_{\mathfrak{g}}(\xi) g^{-1} = \mathcal{L}_{\mathfrak{g}}(Ad_g \xi)$. For any $x \in \RR^{\dim \mathfrak{g}}$ the operator $Ad_{x} : \RR^{\dim \mathfrak{g}} \rightarrow  \RR^{\dim \mathfrak{g}}$ is defined by $\mathcal{L}_{\mathfrak{g}}(x) \mathcal{L}_{\mathfrak{g}}(\xi) - \mathcal{L}_{\mathfrak{g}}(\xi) \mathcal{L}_{\mathfrak{g}}(x) = \mathcal{L}_{\mathfrak{g}}(Ad_x \xi)$. These operators are very handy in practical computations. For all matrix Lie groups considered in this paper, no matrix exponentiation is actually needed as there exist closed formulas, given thereafter. We give now a short description of the matrix Lie groups appearing in the present paper.

\subsection{Group of direct planar isometries $SE(2)$}
\label{sect::tuto_SE2}

We have here $G=SE(2)$ and $\mathfrak{g} = \mathfrak{se}(2)$, where $$
SE(2)= \left\lbrace \begin{pmatrix} R(\theta) & x \\ 0_{1,2} & 1 \end{pmatrix}, \begin{pmatrix} \theta \\ x \end{pmatrix} \in \RR^3 \right\rbrace , $$ $$ { \frak{se}(2)= \left\lbrace \begin{pmatrix} 0 & -\theta & u_1 \\ \theta & 0 & u_2 \\ 0 & 0 & 0 \end{pmatrix}, \begin{pmatrix} \theta \\ u_1 \\ u_2 \end{pmatrix} \in \RR^3 \right\rbrace, }$$ $$ { \mathcal{L}_{\mathfrak{se}(2)} \begin{pmatrix} \theta \\ u_1 \\ u_2 \end{pmatrix} = \begin{pmatrix} 0 & -\theta & u_1 \\ \theta & 0 & u_2 \\ 0 & 0 & 0 \end{pmatrix},}
$$ where
$R(\theta)=\begin{pmatrix} \cos(\theta) &  -\sin(\theta) \\ \sin(\theta) & \cos(\theta) \end{pmatrix}$ denotes the rotation of angle $\theta$, and the exponential mapping is:
\[
\exp \left( (\theta , u_1 , u_2)^T \right) = \begin{pmatrix}R(\theta) & x \\ 0_{1,2} & 1 \end{pmatrix},
\]
where
\[
x=
\begin{pmatrix}
\frac{\sin(\theta)}{\theta}          &     -\frac{1-\cos(\theta)}{\theta} \\
\frac{1-\cos(\theta)}{\theta}      &   \frac{\sin(\theta)}{\theta}   
\end{pmatrix}.
\]

\subsection{Group of double direct spatial isometries $SE_2(3)$}
\label{sect::tuto_SE23}

We have here: $$G=SE_2(3)=\left\lbrace \begin{pmatrix} R & v & x \\ 0_{1,3} & 1 & 0 \\ 0_{1,3} & 0 & 1 \end{pmatrix}, R \in SO(3), v, x \in \RR^3  \right\rbrace , $$ $$\frak g=\frak{se}_2(3)= \left\lbrace \begin{pmatrix} (\xi)_{\times} & u & y \\ 0_{1,3} & 0 & 0 \\ 0_{1,3} & 0 & 0 \end{pmatrix}, \xi, u, y \in \RR^3  \right\rbrace .$$ An isomorphism between $\RR^9$ and $\frak{se}_2(3)$ is given by $ { \mathcal{L}_{\mathfrak{se}_2(3)} \begin{pmatrix} \xi \\ u \\ y \end{pmatrix} = \begin{pmatrix} (\xi)_{\times} & u & y \\ 0_{1,3} & 0 & 0 \\ 0_{1,3} & 0 & 0 \end{pmatrix}}$. The exponential mapping is given by the formula:
\begin{align*}
\exp & \left( \left( \xi^T , u^T , y^T \right)^T \right) \\
 & = I_5 + S + \frac{1 - \cos(||\xi||)}{||\xi||^2}) S^2 + \frac{ ||\xi|| -\sin(||\xi||)}{||\xi||^3} S^3, 
\end{align*}
where  $S = \mathcal{L}_{\mathfrak{se}_2(3)}(\xi ,u , y )^T$.


\section{Further explanation and proof of the log-linear property}

The definition of $A_t$ through  $g_{u_t}(\exp(\xi)) = \mathcal{L}_{\mathfrak{g}}(A_{u_t} \xi) + O(\norm{\xi_t}^2)$ is very convenient in practice as shown in Sections \ref{sect::examples:A} and \ref{sect::examples:B}, but it requires a quick theoretical explanation  as follows. 
On an abstract Lie group, vector $g_{u_t} \left( \exp\left( \xi \right) \right)$ belongs to $T_{\exp \left( \xi \right)} G$ (tangent space at $\exp \left( \xi \right)$) whereas the image of $\mathcal{L}_{\mathfrak{g}}$ is $\mathfrak{g}=T_{Id}G$. But as $g_{u_t} \left( Id \right)=0$ it is true that the first order terms of $g_{u_t} \left( \exp\left( \xi \right) \right)$  belong to $\mathfrak g\subset\RR^{N\times N}$. Indeed as $\exp\left( \xi \right)^{-1}g  \left( \exp\left( \xi \right) \right)\in\mathfrak g$ we have $g  \left( \exp\left( \xi \right) \right)=\exp_m(\mathcal{L}_{\mathfrak{g}}(\xi))\mathcal{L}_{\mathfrak{g}}(\omega(\xi))$ for some function $\omega$ with $\omega(0) =0$. Thus $g  \left( \exp\left( \xi \right) \right)=(Id+\mathcal{L}_{\mathfrak{g}}(\xi))\mathcal{L}_{\mathfrak{g}}(\omega(\xi))+O(\norm{\xi}^2)=\mathcal{L}_{\mathfrak{g}}(\omega(\xi))+O(\norm{\xi}^2)$.

The proof of Theorem \ref{dechire:thm} is based upon the following lemmas.
\begin{lem}Consider the system \eqref{eq::mult_linear} and let $\bar{\chi}_t$ denote a particular solution. Consider the condition
\begin{equation}\begin{aligned}
\label{eq::mult_homogeneous}
\forall (u,a,b)\in G\quad
g_u(ab)  =ag_u(b)+g_u(a)b
\end{aligned}
\end{equation}
We have the following properties:
\begin{itemize}
\item The function $g^R_u(\eta)=f_{u_t}(\eta)-\eta f_{u_t}(Id)$  verifies \eqref{eq::mult_homogeneous} and all the solutions of \eqref{eq::mult_linear} have the form $\chi_t=\eta_t^R \bar{\chi}_t$, where $\eta_t^R$ verifies $\frac{d}{dt} \eta_t^R = g_{u_t}^R(\eta_t^R)$.
\item The function $g^L_{u_t}(\eta)=f_{u_t}(\eta)- f_{u_t}(Id)\eta$  verifies  \eqref{eq::mult_homogeneous} and all the solutions of \eqref{eq::mult_linear} have the form $\chi_t= \bar{\chi}_t\eta_t^L$, where $\eta_t^L$ verifies $\frac{d}{dt} \eta_t^L = g_{u_t}^L(\eta_t^L)$.
\end{itemize}
\end{lem}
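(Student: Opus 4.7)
The plan is to handle the two bullets in parallel, since by the left-right symmetry visible in Theorem \ref{thm::CNS} and in the formulas \eqref{g_f_1}, \eqref{g_f_2}, the second bullet is just the mirror image of the first. I would therefore prove the right-invariant statement in detail and then note that swapping the order of factors throughout gives the left-invariant one.

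First I would check the algebraic identity \eqref{eq::mult_homogeneous} for $g_u^R(\eta) = f_u(\eta) - \eta f_u(Id)$. Starting from $g_u^R(ab) = f_u(ab) - ab\, f_u(Id)$, I apply the hypothesis \eqref{eq::main_relation} to rewrite $f_u(ab)$, and then regroup:
\begin{equation*}
g_u^R(ab) = a\bigl[f_u(b) - b f_u(Id)\bigr] + \bigl[f_u(a) - a f_u(Id)\bigr]b = a g_u^R(b) + g_u^R(a) b,
\end{equation*}
which is the desired property. The symmetric manipulation, pulling out $a$ on the left instead, yields $g_u^L(ab) = a g_u^L(b) + g_u^L(a) b$.

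Second, I would establish the factorization of arbitrary solutions. Given a particular solution $\bar\chi_t$ of \eqref{eq::mult_linear}, define $\eta_t^R := \chi_t \bar\chi_t^{-1}$, so that $\chi_t = \eta_t^R \bar\chi_t$ by construction; the only nontrivial thing to check is that $\eta_t^R$ obeys the announced ODE. Differentiating using the product rule together with $\frac{d}{dt}\bar\chi_t^{-1} = -\bar\chi_t^{-1} f_u(\bar\chi_t) \bar\chi_t^{-1}$ gives
\begin{equation*}
\tfrac{d}{dt}\eta_t^R \;=\; f_{u_t}(\eta_t^R \bar\chi_t)\bar\chi_t^{-1} \;-\; \eta_t^R\, f_{u_t}(\bar\chi_t)\,\bar\chi_t^{-1}.
\end{equation*}
Applying \eqref{eq::main_relation} to $f_{u_t}(\eta_t^R \bar\chi_t)$ with $a = \eta_t^R$, $b = \bar\chi_t$ and simplifying, the $f_{u_t}(\bar\chi_t)$ terms cancel and what remains is exactly $f_{u_t}(\eta_t^R) - \eta_t^R f_{u_t}(Id) = g_{u_t}^R(\eta_t^R)$. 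Conversely, any $\eta_t^R$ satisfying this ODE yields, by an analogous computation, a solution $\chi_t = \eta_t^R \bar\chi_t$ of \eqref{eq::mult_linear}, so the parametrization is onto. The left-invariant bullet follows verbatim after writing $\chi_t = \bar\chi_t \eta_t^L$ with $\eta_t^L = \bar\chi_t^{-1}\chi_t$ and applying \eqref{eq::main_relation} with the roles of $a$ and $b$ swapped.

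The only mild obstacle is bookkeeping: being consistent about sides (left versus right multiplication) in every step, and invoking \eqref{eq::main_relation} with the correct pair $(a,b)$ so that the terms which must cancel actually do. Once those factor orderings are lined up, both bullets reduce to one application of the hypothesis and a regrouping, and no deeper geometric fact is needed — in particular, the lemma does not require the log-linear property of Theorem \ref{dechire:thm}, only the algebraic identity \eqref{eq::main_relation} that characterizes the class of systems considered.
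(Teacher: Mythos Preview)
Your proof is correct and is exactly the verification the paper has in mind: the paper's own proof consists of the single sentence ``The verification of these two properties is trivial,'' and what you have written is precisely that verification spelled out---one application of \eqref{eq::main_relation} to check \eqref{eq::mult_homogeneous}, and one differentiation of $\eta_t^R=\chi_t\bar\chi_t^{-1}$ (resp. $\eta_t^L=\bar\chi_t^{-1}\chi_t$) followed by another application of \eqref{eq::main_relation} to obtain the error ODE. There is no alternative route here to compare.
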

The verification of these two properties is trivial. The functions $g_{u_t}$ governing the errors propagation turn out to possess an intriguing property. 
\begin{lem}
\label{prop::product}
Let $\Phi_t$ be the flow (that is the solution at time $t$ associated to a given initial condition) associated to the system $\frac{d}{dt}\eta_t=g_{u_t}(\eta_t)$, where $g_{u_t}$ verifies \eqref{eq::mult_homogeneous}. Then:
\[
\forall \eta_0, \eta_0' \in G, \Phi_t(\eta_0 \eta_0')=\Phi_t(\eta_0) \Phi_t(\eta_0')
\]
\end{lem}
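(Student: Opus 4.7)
My plan is to argue by uniqueness of solutions to the ODE $\dotex \eta_t = g_{u_t}(\eta_t)$. Fix initial conditions $\eta_0,\eta_0'\in G$ and define two curves in $G$:
$$
\alpha_t := \Phi_t(\eta_0)\,\Phi_t(\eta_0'), \qquad \beta_t := \Phi_t(\eta_0\eta_0').
$$
Clearly $\alpha_0 = \eta_0\eta_0' = \beta_0$, so it suffices to show that both curves satisfy the same differential equation, after which standard uniqueness gives $\alpha_t=\beta_t$ for all $t$.

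The core computation is a one-liner: by the ordinary product rule for matrix-valued functions (recall $G\subset\RR^{N\times N}$), and using the defining ODE satisfied by $\Phi_t(\eta_0)$ and $\Phi_t(\eta_0')$,
$$
\dotex\alpha_t = \bigl[\dotex\Phi_t(\eta_0)\bigr]\Phi_t(\eta_0') + \Phi_t(\eta_0)\bigl[\dotex\Phi_t(\eta_0')\bigr] = g_{u_t}\bigl(\Phi_t(\eta_0)\bigr)\Phi_t(\eta_0') + \Phi_t(\eta_0)\,g_{u_t}\bigl(\Phi_t(\eta_0')\bigr).
$$
Applying the hypothesis \eqref{eq::mult_homogeneous} with $a=\Phi_t(\eta_0)$ and $b=\Phi_t(\eta_0')$, the right-hand side collapses to $g_{u_t}\bigl(\Phi_t(\eta_0)\Phi_t(\eta_0')\bigr) = g_{u_t}(\alpha_t)$. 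Meanwhile $\beta_t$ satisfies $\dotex\beta_t = g_{u_t}(\beta_t)$ by definition of the flow. Hence $\alpha_t$ and $\beta_t$ are two solutions of the same (Lipschitz, say, under mild regularity of $f_{u_t}$ hence of $g_{u_t}$) ODE on $G$ with the same initial data, so $\alpha_t = \beta_t$ for all $t$.

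There is essentially no obstacle here: the Leibniz-type identity \eqref{eq::mult_homogeneous} was tailor-made so that the product of two solutions is again a solution, which is exactly what the claim asserts. The only mild subtlety is to make sure the product $\Phi_t(\eta_0)\Phi_t(\eta_0')$ stays in $G$ so that $g_{u_t}$ is defined there, but this is immediate since $G$ is a group and each factor lies in $G$ by construction. One could alternatively phrase the argument as saying that $g_{u_t}$ is a derivation of the group product, making $\Phi_t$ a one-parameter family of group homomorphisms; the proof above is simply the infinitesimal form of that observation.
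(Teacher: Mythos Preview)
Your argument is correct and is essentially identical to the paper's proof: both show that the product $\Phi_t(\eta_0)\Phi_t(\eta_0')$ satisfies the ODE $\dotex\eta_t = g_{u_t}(\eta_t)$ via the Leibniz rule and hypothesis \eqref{eq::mult_homogeneous}, then conclude by uniqueness of solutions. The only difference is cosmetic---you spell out the uniqueness step explicitly, whereas the paper leaves it implicit.
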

\begin{proof}
We simply have to see that $\Phi_t(\eta_0) \Phi_t(\eta_0')$ is solution of the system $\frac{d}{dt}\eta_t=g_{u_t}(\eta_t)$:
\begin{align*}
\frac{d}{dt} [\Phi_t(\eta_0) \Phi_t(\eta_0')]  & = g_{u_t}(\Phi_t(\eta_0)) \Phi_t(\eta_0') + \Phi_t(\eta_0) g_{u_t}(\Phi_t(\eta_0')) \\&= g_{u_t} \left( \Phi_t(\eta_0) \Phi_t(\eta_0') \right)
\end{align*}
\end{proof}
An immediate recursion gives then:

\begin{lem}
\label{prop::power}
We have furthermore
\[
\forall \eta_0 \in G, \forall p \in \mathbb{Z}, \Phi_t(\eta_0^p)=\Phi_t(\eta_0)^p
\]
\end{lem}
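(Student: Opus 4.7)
The plan is to obtain Lemma \ref{prop::power} as a direct corollary of Lemma \ref{prop::product} by two-sided induction on $p$, after first pinning down the base case $\Phi_t(Id)=Id$. This base case can be extracted from \eqref{eq::mult_homogeneous} itself: setting $a=b=Id$ gives $g_u(Id)=2g_u(Id)$, hence $g_u(Id)=0$, so the constant curve $\eta\equiv Id$ solves the autonomous ODE $\frac{d}{dt}\eta_t=g_{u_t}(\eta_t)$ with initial condition $Id$, and therefore $\Phi_t(Id)=Id$. (Alternatively, Lemma \ref{prop::product} applied with $\eta_0=\eta_0'=Id$ yields $\Phi_t(Id)=\Phi_t(Id)^2$, whose only group-valued solution is $Id$.)

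For $p\geq 1$ a one-line induction closes the argument: assuming $\Phi_t(\eta_0^p)=\Phi_t(\eta_0)^p$, Lemma \ref{prop::product} applied to $\eta_0^p$ and $\eta_0$ gives
\[
\Phi_t(\eta_0^{p+1})=\Phi_t(\eta_0^p\,\eta_0)=\Phi_t(\eta_0^p)\,\Phi_t(\eta_0)=\Phi_t(\eta_0)^{p+1}.
\]
To extend to negative $p$, the intermediate step is the inverse formula $\Phi_t(\eta_0^{-1})=\Phi_t(\eta_0)^{-1}$, which follows from Lemma \ref{prop::product} applied to the pair $(\eta_0,\eta_0^{-1})$ together with the base case: $\Phi_t(\eta_0)\,\Phi_t(\eta_0^{-1})=\Phi_t(Id)=Id$. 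Once this is in hand, the positive-$p$ induction can be rerun with $\eta_0$ replaced by $\eta_0^{-1}$, yielding $\Phi_t(\eta_0^{-p})=\Phi_t(\eta_0^{-1})^p=\Phi_t(\eta_0)^{-p}$.

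I do not expect any real obstacle here, since the statement is essentially a bookkeeping exercise on top of Lemma \ref{prop::product}; the only point worth a brief mention is that the argument implicitly uses that $\Phi_t$ is well-defined on all of $G$ (so that $\eta_0^{p}$ and $\eta_0^{-1}$ are legitimate initial conditions), which is the case because the Lie group $G\subset\RR^{N\times N}$ and the autonomous dynamics are smooth.
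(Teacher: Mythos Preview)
Your proof is correct and follows exactly the approach the paper intends: the paper's own proof is the single phrase ``An immediate recursion gives then,'' and your argument is simply that recursion spelled out in full (base case $\Phi_t(Id)=Id$, induction for positive $p$ via Lemma~\ref{prop::product}, extension to negative $p$ through the inverse formula). Nothing to add.
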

Lemmas \ref{prop::product} and \ref{prop::power} indicate the behavior of the flow infinitely close to $Id$ dictates its behavior arbitrarily far from it, as the flow commutes with exponentiation. The use of the exponential thus allows deriving an infinitesimal version of the Lemma \ref{prop::power}, which is an equivalent formulation of Theorem \ref{dechire:thm}.


\begin{thm}
\label{thm::exp}
Let $\Phi_t$ be the flow associated to the system $\frac{d}{dt}\eta_t=g_{u_t}(\eta_t)$ satisfying \eqref{eq::mult_homogeneous}.
We have:
\color{black}
\[
\Phi_t(\exp(\xi_0))=\exp(F_t \xi_0)
\]
where $F_t$ is the solution of the matrix equation $F_0 = Id, \frac{d}{dt} F_t = A_t F_t$ \color{black}.
\end{thm}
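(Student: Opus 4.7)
The goal is to lift the linear ODE $\dot F_t = A_t F_t$ into a statement about the full nonlinear flow $\Phi_t$. My plan exploits the fact, already captured by Lemma \ref{prop::product} and Lemma \ref{prop::power}, that $\Phi_t$ is a group homomorphism of $G$ that commutes with integer powers. Since the ordinary matrix exponential satisfies $\exp_m(X)^n=\exp_m(nX)$, and hence $\exp(\xi)=\exp(n\cdot(\xi/n))^{?}$ via $\exp(\xi/n)^n=\exp(\xi)$, the natural strategy is: linearize $\Phi_t$ near the identity on a one-parameter family $s\mapsto\exp(s\xi_0)$, then recover the full-size statement by taking $s=1/n$ and raising to the $n$-th power.

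More concretely, first I would observe that $g_{u_t}(Id)=0$: applying \eqref{eq::mult_homogeneous} with $a=b=Id$ gives $g_{u_t}(Id)=2g_{u_t}(Id)$. Consequently $\Phi_t(Id)=Id$ for all $t$. Next, set $\psi(s,t):=\Phi_t(\exp(s\xi_0))$, so that $\psi(s,0)=\exp(s\xi_0)=Id+s\mathcal{L}_{\mathfrak{g}}(\xi_0)+O(s^2)$ and $\partial_t\psi(s,t)=g_{u_t}(\psi(s,t))$. Write $\psi(s,t)=Id+s\mathcal{L}_{\mathfrak{g}}(\zeta_t)+O(s^2)$ for a curve $\zeta_t\in\mathbb{R}^{\dim\mathfrak{g}}$ with $\zeta_0=\xi_0$. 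Plugging this expansion into the ODE and using the defining relation $g_{u_t}(\exp(\xi))=\mathcal{L}_{\mathfrak{g}}(A_{u_t}\xi)+O(\|\xi\|^2)$, the first-order terms in $s$ give $\dot{\zeta}_t=A_t\zeta_t$, hence $\zeta_t=F_t\xi_0$. Thus
\begin{equation*}
\Phi_t(\exp(s\xi_0))=\exp(sF_t\xi_0)+O(s^2)\qquad\text{as }s\to 0.
\end{equation*}

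The final step is the passage from infinitesimal to global. For every integer $n\geq 1$, Lemma \ref{prop::power} combined with the identity $\exp(\xi_0/n)^n=\exp(\xi_0)$ gives
\begin{equation*}
\Phi_t(\exp(\xi_0))=\Phi_t\bigl(\exp(\xi_0/n)^n\bigr)=\Phi_t(\exp(\xi_0/n))^n=\bigl(\exp((1/n)F_t\xi_0)+O(1/n^2)\bigr)^n.
\end{equation*}
Letting $n\to\infty$, the standard limit $\bigl(Id+(1/n)M+O(1/n^2)\bigr)^n\to\exp_m(M)$ yields $\Phi_t(\exp(\xi_0))=\exp(F_t\xi_0)$, which is the claim.

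The main obstacle is the very last limit: one must justify that the $O(1/n^2)$ remainder does not accumulate when the matrix is raised to the $n$-th power. This is a routine but genuine estimate — expanding the product of $n$ factors and bounding the terms containing the remainder by $n\cdot\|\exp((1/n)F_t\xi_0)\|^{n-1}\cdot O(1/n^2)=O(1/n)$ suffices, since $\|\exp((1/n)F_t\xi_0)\|^n$ stays bounded. Everything else reduces to differentiating the group-homomorphism property proved in Lemmas \ref{prop::product}–\ref{prop::power} at the identity, which is exactly where the matrix $A_t$, and therefore the fundamental solution $F_t$, enters.
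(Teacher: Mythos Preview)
Your argument is correct and follows essentially the same route as the paper: linearize $\Phi_t\circ\exp$ at the origin, invoke Lemma~\ref{prop::power} with $\exp(\xi_0)=\exp(\xi_0/n)^n$, and pass to the limit $n\to\infty$. The only cosmetic difference is that the paper writes the first-order expansion \emph{inside} the exponential, i.e.\ $\Phi_t(\exp(\xi_0/n))=\exp\bigl(\tfrac{1}{n}F_t\xi_0+r_t(\tfrac{1}{n}\xi_0)\bigr)$ with $r_t$ quadratic, so that the $n$-th power is exactly $\exp\bigl(F_t\xi_0+n\,r_t(\tfrac{1}{n}\xi_0)\bigr)$ and the limit is immediate without the product estimate you had to supply; the paper then recovers the ODE $\dot F_t=A_tF_t$ a posteriori by differentiating the identity in $t$, whereas you derive it first by differentiating in $t$ the linearization $\zeta_t$.
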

 \begin{proof}
Thanks to Lemma \ref{prop::power} we have, for any $n \in \mathbb{N}$, $
\Phi_t \left( e^{\xi_0} \right)  = \Phi_t \left( \left[ e^{\frac{1}{n} \xi_0} \right]^n \right)  = \Phi_t \left( e^{\frac{1}{n} \xi_0} \right)^n = \left[ e^{\frac{1}{n} F_t \xi_0 + r_t \left( \frac{1}{n} \xi_0 \right) } \right]^n$, where $F_t = \left.D \left( \exp^{-1} \circ \Phi_t \circ \exp \right)\right|_0$ and $r_t \left( \frac{1}{n} \xi_0 \right) $ is a quadratic term, which ensures in turn $\Phi_t \left( e^{\xi_0} \right)= e^{ F_t \xi_0 + n r_t \left( \frac{1}{n}\xi_0 \right)}$.  
Letting $n \rightarrow \infty$ we get $\Phi_t \left( e^{\xi_0} \right) =  e^{ F_t \xi_0}$. Differentiating both sides of the latter equality we obtain $g_{u_t} \left( \Phi_t \left( e^\xi_0 \right) \right)=D \exp_{F_t \xi_0} \left[ \frac{d}{dt} F_t \xi_0 \right]$. A first-order expansion in $\xi_0$ using matrix $A_t$ gives: $\mathcal{L}_{\mathfrak{g}} \left( A_t F_t \xi_0 \right) + \circ \left( || \xi_0 || \right)=\mathcal{L}_{\mathfrak{g}} \left( \frac{d}{dt}F_t \xi_0 \right) + \circ \left( || \xi_0 || \right)$, then $A_t F_t \xi_0 = \frac{d}{dt} F_t \xi_0$ for any $\xi_0 \in \mathbb{R}^{\dim \mathfrak{g}}$ and finally $A_t =  \frac{d}{dt} F_t$.
\color{black}
\end{proof}
%
%
%
%

\section{Proof of theorem \ref{thm::localCV}}
\label{proof::localCV}


\subsection{Proof rationale}
We define the rest $r_n$, here for the left-invariant filter only, as follows: $\exp \left[ (I-L_n H) \xi_n - r_n(\xi) \right]$    ~~= $\exp \left[ \xi \right] \exp \left[ -L_n H (\exp [ \xi] b - b) \right]$. We then introduce the flow $\Psi_{t_0}^t$ of the linear part  of the equations governing $\xi_t$ (that is, $
\Psi_{t_0}^{t_0} = Id , \quad 
\frac{d}{dt} \Psi_{t_0}^t = A_{u_t} \Psi_{t_0}^t ,\quad
\Psi_{t_0}^{t+} = (I-L_nH)\Psi_{t_0}^t$) and decompose the solution $\xi_t$ as:
\begin{equation}
\forall t\geq 0,\qquad\xi_t = \Psi_{t_0}^t \xi_0 + \sum_{t_n<t}\Psi_{t_n}^t r_n(\xi_{t_n}) \label{eq::decomposition}
\end{equation}
All we have to verify is that the appearance of the second-order terms $r_n(\xi_{t_n}) $ at each update is compensated by the exponential decay of $\Psi_{t_0}^t$ (Theorem \ref{thm::Deyst_Price}).
\subsection{Review of existing linear results}

Consider a linear time-varying Kalman filter and let $\Psi_{t_0}^t$ denote the flow of the error variable $\xi_s$. It is proved in \cite{deyst} that if the parameters of the Riccati equation verify conditions \eqref{cond::Psi} - \eqref{cond::observable} then there exist $\gamma_{\max}>0$ and $\gamma_{\min} >0$ such that $\gamma_{\max} I \succeq P_t \succeq \gamma_{\min} I$. This pivotal property allows proving the solution of the \emph{linear} error equation $\Psi_s^t \xi_s$ verifies for $V(P,\xi)=\xi^T P^{-1} \xi$:
\begin{equation}
V \left( P_{t_{n+N}}^+,\Psi_{t_n}^{t_{n+N}}(\xi_{t_n}^+) \right) \leq V(P_{t_{n}}^+,\xi_{t_n}^+) - \beta_3 ||\Psi_{t_n}^{t_{n+N}}(\xi_{t_n}^+)||^2 \label{eq::V_decreases_exponentially}
\end{equation}
where $\beta_3$ only depends on $\alpha_1,\alpha_2,\beta_1,\beta_2,\delta_1,\delta_2,\delta_3,N$. Of course, the proof given in \cite{deyst} holds if the inequalities are only verified on an interval $[0,T]$. We will also use the  direct consequence:
\begin{equation}
V \left( P_{t_{n+1}}^+,\Psi_{t_n}^{t_{n+1}}(\xi_{t_n}^+) \right) < V \left( P_{t_n}^+,\xi_{t_n}^+ \right) \label{eq::V_decreases}
\end{equation}

\subsection{Preliminary lemmas}
The proof of Theorem \ref{thm::localCV} is displayed in the next subsection. It relies on the final Lemma \ref{lem::Lyapounov}, which is proved step by step in this section through lemmas \ref{lem::Riccati}, \ref{lem::first_order} and \ref{lem::second_order}. The time  interval between two successive observations will be denoted $\Delta t$. $\tilde P_t$ will denote the Kalman covariance about the true state trajectory.

\begin{lem}
\label{lem::Riccati} \emph{[modified constants for closeby trajectories]}
If the conditions \eqref{cond::Psi} to \eqref{cond::observable} are satisfied about the \emph{true} trajectory, then for any $k>1$ there exists a radius $\epsilon$ such that the bound $\forall s \in [0,t], || \xi_{t_0+s} || < \epsilon$ ensures the conditions \eqref{cond::Psi} to \eqref{cond::observable} are also verified on $[t_0,t_0+t]$ about the \emph{estimated} trajectory, with the modified constants $\hat{\delta}_1=\delta_1,\hat{\delta}_2=\frac{1}{k^2}\delta_2,\hat{\delta}_3=\frac{1}{k^2}\delta_3,\hat{\alpha}_1=\frac{1}{k^2}\alpha_1,\hat{\alpha}_2=k^2\alpha_2,\hat{\beta}_1= \frac{1}{k^2} \beta_1,\hat{\beta}_2=k^2 \beta_2$.  Moreover, if $\frac{1}{k} \tilde{P}_{t_0} \leq P_{t_0} \leq k \tilde{P}_{t_0}$ then $\frac{1}{k} \tilde{P}_{t_0+s} \leq P_{t_0+s} \leq k \tilde{P}_{t_0+s}$ holds on $[0,t]$.
\end{lem}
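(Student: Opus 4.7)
\subsection*{Proof plan for Lemma \ref{lem::Riccati}}

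The plan is to exploit the crucial structural feature of the IEKF (and the class of systems considered in Section \ref{sect::deterministic}): the matrices $A_{u_t}$ and $H$ appearing in the Riccati equation \eqref{eq::Riccati_Left} depend only on the input $u_t$ and on the known vectors $d^i$, not on the estimated state. Consequently, the transition matrix $\Phi_{t_n}^{t_{n+1}}$ in Theorem \ref{thm::Deyst_Price} is the same whether it is built along the true trajectory $\chi_t$ or along the estimated trajectory $\hat{\chi}_t$. The only trajectory-dependent ingredients entering the five conditions are the design matrices $\hat{Q}_t$ and $\hat{N}_n$, which are continuous (in fact smooth) functions of $\hat{\chi}_t$, as can be checked on the explicit formulas in the examples of Sections \ref{sect::examples:A} and \ref{sect::examples:B}. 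I will use this observation to reduce the lemma to a purely continuity argument.

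First, I would verify the five conditions one by one. Condition \eqref{cond::Psi} is preserved untouched with $\hat{\delta}_1=\delta_1$ because $\Phi_{t_n}^{t_{n+1}}$ is trajectory-independent. For \eqref{cond::Q} and \eqref{cond::R}, continuity of $\hat{Q}_t(\hat{\chi}_t)$ and $\hat{N}_n(\hat{\chi}_{t_n})$ together with the bound $\|\xi_{t_0+s}\|<\epsilon$ on $[0,t]$ (which yields a uniform bound on $\|\hat{\chi}_{t_0+s}\chi_{t_0+s}^{-1}-I\|$) implies that the factorization $\hat{Q}_s=\hat{G}_s\hat{Q}'\hat{G}_s^T$ and the matrix $\hat{N}_n$ can be made as close as wanted to their counterparts along the true trajectory. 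Shrinking $\epsilon$ so that the relative multiplicative perturbations on $\hat{Q}'$ and $\hat{N}_n$ are below $1/k^2$ yields $\hat{\delta}_2=\delta_2/k^2$ and $\hat{\delta}_3=\delta_3/k^2$. For \eqref{cond::reachable} and \eqref{cond::observable}, the Grammians are integrals or finite sums of expressions involving $\Phi$ (unchanged) and $\hat{Q}_s$ or $\hat{N}_n^{-1}$; the same uniform perturbation estimate immediately yields the constants $\hat{\alpha}_i,\hat{\beta}_i$ with factors $k^{\pm 2}$.

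The main obstacle, and the genuine work of the lemma, is the second statement about propagation of the sandwich bound $\tilde{P}_t/k\preceq P_t\preceq k\tilde{P}_t$. The plan here is to view the Riccati equation \eqref{eq::Riccati_Left} as a dynamical system in $P$ driven by the parameters $(\hat{Q}_t,\hat{N}_n)$, and to exploit its monotonicity in these parameters (a well-known property of the Riccati operator: larger $\hat Q$ and smaller $\hat N^{-1}$ give pointwise larger $P$). By choosing $\epsilon$ small enough, we can enforce $\frac{1}{\sqrt{k}}\tilde{Q}_t\preceq\hat{Q}_t\preceq\sqrt{k}\,\tilde{Q}_t$ and an analogous bound for $\hat{N}_n$, where $\tilde Q,\tilde N$ are evaluated on the true trajectory. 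Then starting from $P_{t_0}\in[\tilde{P}_{t_0}/k,k\tilde{P}_{t_0}]$ and using Deyst--Price's uniform bound $\gamma_{\min}I\preceq\tilde{P}_t\preceq\gamma_{\max}I$ (obtained from the modified constants, noting that $\gamma_{\min},\gamma_{\max}$ themselves change only by a bounded factor), a comparison/Gronwall argument applied to both the continuous propagation on $[t_n,t_{n+1}]$ and the discrete update propagates the inequality $\tilde{P}_t/k\preceq P_t\preceq k\tilde{P}_t$ for all $s\in[0,t]$.

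Finally, I would emphasize that the whole argument is uniform over $[t_0,t_0+t]$ because the perturbation of the parameters is uniformly bounded by the hypothesis $\sup_{s\in[0,t]}\|\xi_{t_0+s}\|<\epsilon$, and because the five conditions in Theorem \ref{thm::Deyst_Price} have been obtained in \cite{deyst} over any finite interval. In particular, the radius $\epsilon$ depends only on $k$ and on the constants $\alpha_i,\beta_i,\delta_i,M$ attached to the true trajectory, and not on the starting time $t_0$, which will be important when this lemma is combined with \eqref{eq::V_decreases_exponentially} in the subsequent Lyapunov argument for Theorem \ref{thm::localCV}.
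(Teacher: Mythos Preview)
Your plan is correct and follows essentially the same route as the paper: both arguments hinge on the fact that $A_{u_t}$, $H$ and hence $\Phi$ are trajectory-independent, so that only $\hat Q_t$ and $\hat N_n$ carry the dependence on the estimate, and then use continuity plus Riccati monotonicity to transfer the Deyst--Price bounds and the sandwich on $P_t$. The one point the paper makes more explicit than you do is the algebraic identity $\hat Q_t=Q_1+Ad_{\eta_t^{-1}}\bigl[Ad_{\chi_t^{-1}}Q_2Ad_{\chi_t^{-1}}^T\bigr]Ad_{\eta_t^{-1}}^T$ (and similarly for $\hat N_n$), which shows that the perturbation factor depends \emph{only} on the error $\eta_t=\exp(\xi_t)$ and not on $\chi_t$ itself; this is exactly what makes your ``relative multiplicative perturbation'' uniform without any boundedness assumption on the true trajectory, and it is worth stating rather than leaving implicit in the phrase ``continuity of $\hat Q_t(\hat\chi_t)$''.
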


\begin{proof}
We consider the LIEKF, the proof works the same way for the RIEKF. Matrices $\hat Q_t$ and $\hat N_n$ depend on the estimate $\hat \chi_t$, this is why this lemma is needed. So we replace them by their values: $\hat Q_t = Cov \left( w_t \right)$ if the noise term has the form $\chi_t w_t$, $\hat Q_t = Ad_{\hat \chi_t^{-1}} Cov \left( w_t \right) Ad_{\hat \chi_t^{-1}}^T$ if the noise term has the form $w_t \chi_t$, and $\hat N_n = \hat \chi_{t_n}^{-1} Cov \left( V_n \right)  \hat \chi_{t_n}^{-T} + Cov \left( B_n \right)$. All these situations are covered if we assume there exist four (possibly time-dependent) matrices $Q_1$, $Q_2$, $N_1$ and $N_2$ such that $\hat Q_t =  Q_1 + Ad_{\hat \chi_t^{-1}} Q_2 Ad_{\hat \chi_t^{-1}}^T$ and $\hat N_n =  N_1+ \hat \chi_{t_n}^{-1} N_2 \hat \chi_{t_n}^{-T} $. These notations will be used in the sequel but they hold only for this proof: they are not related to matrices $Q_1$ and $Q_2$ defined in the simulations sections. The Riccati equation computed about the true trajectory reads:
\begin{align*}
&\frac{d}{dt} \tilde{P}_t  = A_t \tilde{P}_t + \tilde{P}_t A_t^T + Q_1 + Ad_{\chi_t} Q_2 Ad_{\chi_t}^T, \\
&\tilde{P}_{t_n}^+  =\tilde{P}_{t_n}-\tilde{P}_{t_n} H^T \left( H \tilde{P}_{t_n} H^T + N_1+\chi_{t_n}^{-1} N_2 \chi_{t_n}^{-T} \right)^{-1} H \tilde{P}_{t_n}.
\end{align*}
The Riccati equation computed on the estimated trajectory is obtained replacing $\chi_t$ with $\hat{\chi}_t$.
Recalling the error $\eta_t$ and the properties of the $Ad$, the idea of the proof is simply to rewrite the Riccati equation computed about $\hat{\chi}_t$ as a perturbation of the Riccati equation computed about $\chi_t$:
\[
\frac{d}{dt} P_t  = A_t P_t + P_t A_t^T + Q_1 + Ad_{\eta_t^{-1}} \left[ Ad_{\chi_t^{-1}} Q_2 Ad_{\chi_t^{-1}}^T \right] Ad_{\eta_t^{-1}}^T,
\]
\begin{align*}
  P^+_{t_n} =
   P_{t_n}-P_{t_n} H^T \left( H P_{t_n} H^T + N_1+\eta_{t_n}^{-1} \left[ \chi_{t_n}^{-1} N_2 \chi_{t_n}^{-T} \right] \eta_{t_n}^{-T} \right)^{-1} H P_{t_n}.
\end{align*}
Controlling the perturbation is easy: matrix-valued functions $\xi \rightarrow e^{-\xi}$ and $\xi \rightarrow Ad_{e^-\xi}$ are continuous and equal to $I_d$ for $\xi=0$, thus there exists a real
$\epsilon>0$ depending only on $k$  such that $|| \xi_{t_0+s} || \leq \epsilon$ ensures $ \frac{1}{{k}} N_2 \preceq  \left( e^{-\xi_{t_0+s}} \right) N_2 \left( e^{-\xi_{t_0+s}} \right)^T \preceq {k} N_2$ and $\frac{1}{{k}} Q_2 \preceq Ad_{e^{-\xi_{t_0+s}}} Q_2 Ad_{e^{-\xi_{t_0+s}}}^T \preceq {k} Q_2$. It ensures consequently $$ \frac{1}{k} \left( Q_1 + Ad_{\chi_{t_0+s}^{-1}} Q_2 Ad_{\chi_{t_0+s}^{-1}}^T \right) \preceq Q_1 + Ad_{\hat{\chi}_{t_0+s}^{-1}} Q_2 Ad_{\hat{\chi}_{t_0+s}^{-1}}^T \preceq  k \left( Q_1 + Ad_{\chi_{t_0+s}^{-1}} Q_2 Ad_{\chi_{t_0+s}^{-1}}^T \right) $$ and $$ \frac{1}{k} \left( N_1 + \chi_{t_0+s}^{-1} N_2 \chi_{t_0+s}^{-T} \right) \preceq N_1 + \hat \chi_{t_0+s}^{-1} N_2\hat \chi_{t_0+s}^{-T} \preceq  k \left( N_1 + \chi_{t_0+s}^{-1} N_2 \chi_{t_0+s}^{-T} \right), $$ and a mere look at the definitions of the constants of Theorem \ref{thm::Deyst_Price} yields the modified constants.

The inequality $\frac{1}{k} \tilde{P}_{t_0+s} \leq P_{t_0+s} \leq k \tilde{P}_{t_0+s}$ follows from the matrix inequalities above on the covariance matrices, by writing the Riccati equation verified by $k P_t$ and $\frac{1}{k} P_t$ and using simple matrix inequalities. 

\end{proof}

\begin{lem}\emph{[first-order control of growth]}
Under the same conditions as in Lemma \ref{lem::Riccati} (including $\frac{1}{k} \tilde{P}_{t_0} \leq P_{t_0} \leq k \tilde{P}_{t_0}$) and $||\xi_{t_0+s}||$ bounded by the same $\epsilon$ for $s \in [0,2M \Delta T]$ (i.e. over $2M$ time steps, where $M$ is defined as in Theorem \ref{thm::Deyst_Price}), there exists a continuous function $l_1$ depending \emph{only} on $k$ ensuring $ || \xi_{t_0+s} || \leq l_1( || \xi_{t_0} ||)$ for any $s \in [0,2M \Delta T] $ and $\l_1 (x) = O ( x)$.
\label{lem::first_order}
\end{lem}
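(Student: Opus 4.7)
My plan is to exploit the decomposition \eqref{eq::decomposition}, namely $\xi_t = \Psi_{t_0}^t \xi_{t_0} + \sum_{t_n < t} \Psi_{t_n}^t r_n(\xi_{t_n})$, and to run a bootstrap. The idea is that $\Psi$ is a bounded operator on the finite window $[0, 2M\Delta t]$ and the rests $r_n$ are quadratic in their arguments, so under the a priori bound $\norm{\xi_{t_0+s}} < \epsilon$ the quadratic contribution becomes a small multiplicative perturbation that can be absorbed, leaving a purely linear bound in $\norm{\xi_{t_0}}$.

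The first step is to convert the a priori bound into uniform bounds on the Kalman machinery. By Lemma \ref{lem::Riccati}, the hypothesis $\norm{\xi_{t_0+s}} < \epsilon$ forces $\tfrac{1}{k}\tilde P_{t_0+s} \preceq P_{t_0+s} \preceq k\, \tilde P_{t_0+s}$ throughout the window, and Theorem \ref{thm::Deyst_Price} pins $\tilde P$ between two positive multiples of the identity; thus $P_{t_0+s}$ is uniformly squeezed with constants depending only on $k$. Consequently the gains $L_n = P_{t_n} H^T(HP_{t_n}H^T + \hat N_n)^{-1}$, the update operators $I - L_n H$, and therefore the transition $\Psi_{t_m}^{t_n}$ (built from $A_{u_t}$ and the updates) admit an operator-norm bound $C_\Psi$ depending only on $k$. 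By the same token the second-order rests $r_n$, which come from a Taylor expansion of the exponential map and involve $L_n$, satisfy $\norm{r_n(\xi)} \leq C_r \norm{\xi}^2$ with $C_r$ depending only on $k$.

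The bootstrap is then immediate. Setting $B := \max_{0 \leq n \leq 2M} \norm{\xi_{t_n}}$, the decomposition evaluated at update times yields
\begin{equation*}
B \;\leq\; C_\Psi \norm{\xi_{t_0}} + 2M\, C_\Psi C_r\, B^2 \;\leq\; C_\Psi \norm{\xi_{t_0}} + 2M\, C_\Psi C_r\,\epsilon\, B,
\end{equation*}
since $B \leq \epsilon$ by hypothesis. Shrinking $\epsilon$ if necessary (still only as a function of $k$, which is permissible because Lemma \ref{lem::Riccati} asserts the existence of \emph{some} such $\epsilon$ for each $k$) so that $2M C_\Psi C_r \epsilon \leq 1/2$, we obtain $B \leq 2 C_\Psi \norm{\xi_{t_0}}$. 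Extension to non-update times is routine: between two consecutive updates $\xi_t$ evolves by the exact linear equation $\dot\xi_t = A_{u_t} \xi_t$ (log-linear property of Theorem \ref{dechire:thm}), which inflates the norm by at most a uniform factor $C_{\textrm{prop}}$. This delivers $\norm{\xi_{t_0+s}} \leq l_1(\norm{\xi_{t_0}})$ with the linear function $l_1(x) = 2 C_\Psi C_{\textrm{prop}}\, x$.

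The main obstacle is bookkeeping: every constant introduced ($C_\Psi$, $C_r$, $C_{\textrm{prop}}$, and the possibly-shrunken $\epsilon$) has to be tracked as depending on $k$ alone. The delicate point is that the Kalman gains themselves depend on the estimated trajectory through $P_t$, so one must pass through Lemma \ref{lem::Riccati} to transfer bounds from the reference Riccati equation to the actual one, and do so without circularly reusing the very bound $\norm{\xi_{t_0+s}} \leq l_1(\norm{\xi_{t_0}})$ one is trying to prove.
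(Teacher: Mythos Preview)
Your overall strategy---use the decomposition \eqref{eq::decomposition}, bound the linear part uniformly, treat the rests as a quadratic perturbation, and bootstrap---is sound and does lead to the stated conclusion. However, there is one genuine gap: your claim that $\Psi_{t_m}^{t_n}$ and $C_{\textrm{prop}}$ admit operator-norm bounds depending only on $k$ is not justified by what you wrote. Bounding $P_t$, $L_n$, and $I-L_nH$ controls only the \emph{update} factors of $\Psi$; the continuous-time factor is the flow of $\dot\xi=A_{u_t}\xi$, and nothing in the hypotheses of Theorem~\ref{thm::Deyst_Price} bounds $A_{u_t}$ or $\Phi_{t_n}^{t_{n+1}}$ from above (condition~(i) is a \emph{lower} bound). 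The fix is exactly the device the paper uses: the Lyapunov function $V_t(\xi)=\xi^T P_t^{-1}\xi$ satisfies $\tfrac{d}{dt}V_t(\xi_t)=-\xi_t^T P_t^{-1}\hat Q_t P_t^{-1}\xi_t\leq 0$ along the propagation and is nonincreasing across updates for the \emph{linear} flow $\Psi$ (this is \eqref{eq::V_decreases}), so $\gamma_{\min}^{-1}\norm{\Psi_{t_0}^t\xi_0}^2\leq V_t(\Psi_{t_0}^t\xi_0)\leq V_{t_0}(\xi_0)\leq\gamma_{\max}^{-1}\norm{\xi_0}^2$, giving $\norm{\Psi_{t_0}^t}\leq\sqrt{\gamma_{\max}/\gamma_{\min}}$ uniformly. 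Once you invoke this, your $C_\Psi$ and $C_{\textrm{prop}}$ are legitimate and your bootstrap goes through.

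For comparison, the paper does not use the global decomposition \eqref{eq::decomposition} here at all. It works step by step through $V$: between updates $V$ decreases (giving the $\sqrt{\gamma_{\max}/\gamma_{\min}}$ inflation factor), and at each update the triangle inequality on $V^{1/2}$ gives $\norm{\xi_{t_n}^+}\leq\sqrt{\gamma_{\max}/\gamma_{\min}}\bigl(\norm{\xi_{t_n}}+\tilde l^k(\norm{\xi_{t_n}})\bigr)$ with $\tilde l^k(x)=O(x^2)$ coming from the BCH estimate on $r_n$. Iterating this map $2M$ times yields $l_1$ as a finite composition, automatically $O(x)$ near zero, with no need to shrink $\epsilon$. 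Your route gives a cleaner (linear) $l_1$ but at the price of possibly reducing $\epsilon$; the paper's route keeps the original $\epsilon$ but produces a merely $O(x)$ function. Either is adequate for the downstream Lemmas~\ref{lem::second_order} and~\ref{lem::Lyapounov}.
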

\begin{proof}
Using Lemma \ref{lem::Riccati} and then Theorem \ref{thm::Deyst_Price} we know there exist two constants $\gamma_{\min}>0$ and $\gamma_{\max}>0$ such that $\gamma_{\max} I \succeq P_t \succeq \gamma_{\min} I$. The non-linear rest $r_{t_n}(\xi)$ introduced in \eqref{eq::decomposition} is defined by $\exp(\xi)\exp(L_n(e^\xi b - b - H \xi ))=\exp((I-L_n H)\xi+r_{t_n}(\xi))$. The Baker-Campbell-Hausdorff (BCH) formula gives $r_{t_n}(\xi)=O(||\xi||.||L_n \xi||)$ but $L_n$ is uniformly bounded over time by $\frac{\gamma_{\max} ||H||}{\delta_3}$ as an operator. Thus $||r_n||$ is \emph{uniformly} dominated over time by a second order: there exists a continuous function $\tilde{l}^k$ (depending only on $k$ and on the \emph{true} trajectory) such that $\tilde{l}^k(x) = O(x^2)$ and $||r_{t_n} (\xi)|| \leq \tilde{l}^k (|| \xi ||)$ for any $n$ such that $t_n\leq 2M\Delta T$.

Now we can control the evolution of the error using $\tilde{l}^k$. The propagation step is linear, thus we have the classical result $\frac{d}{dt} V_t(\xi_t) <0$. It ensures $||\xi_{t_0+s}|| < \sqrt{ \frac{\gamma_{\max}}{\gamma_{\min}} } ||\xi_{t_0}||$ as long as there is no update on $[t_0,t_0+s]$. At each update step we have $V_{t_n}^+(\xi_{t_n}^+)^{1/2} = V_{t_n}^+ \left( [I-L_n H] \xi_{t_n} + r_n \left( \xi_{t_n} \right) \right)^{1/2} $ $ \leq V_{t_n}^+ \left( [I-L_n H] \xi_{t_n} \right)^{1/2} + V_{t_n}^+ \left( r_n(\xi_{t_n}) \right)^{1/2} \leq V_{t_n}\left(\xi_{t_n} \right)^{1/2} + V_{t_n}^+ \left( r_n(\xi_{t_n} ) \right)^{1/2}$ using the triangular inequality. Thus: $||\xi_{t_n}^+|| \leq \sqrt{ \frac{\gamma_{\max}}{\gamma_{\min}} } \left( || \xi_{t_n} || + ||r_n (\xi_{t_n}) || \right) $~~$\leq \sqrt{ \frac{\gamma_{\max}}{\gamma_{\min}} } \left( || \xi_{t_n} || +\tilde{l}^k( || \xi_{t_n} ||) \right)$. Reiterating over successive propagations and updates  over $[0, 2 M \Delta T]$,  we see  $||\xi_{t_0+s}||$ is uniformly bounded by a  function $l_1(||\xi_{t_0}||)$ that is first order in $||\xi_{t_0}||$.

\end{proof}

\begin{lem}\emph{[second-order control of the Lyapunov function]}
Under the same conditions as in Lemma \ref{lem::Riccati} (including $\frac{1}{k} \tilde{P}_{t_0} \leq P_{t_0} \leq k \tilde{P}_{t_0}$)  and for $||\xi_{t_0+s}||$ bounded by the same $\epsilon$ for $s \in [0, 2M \Delta T]$ ($2M$ time steps, see Theorem \ref{thm::Deyst_Price} for the definition of $M$), there exists a continuous function $l_2$ depending \emph{only} on $k$ ensuring $V_{t_0+s} \left( \xi_{t_0+s} \right) \leq V_{t_0+s} (\Psi_{t_0}^{t_0+s}\xi_{t_0}) + l_2( || \xi_{t_0} ||) \leq V_{t_0} (\xi_{t_0}) + l_2( || \xi_{t_0} ||) $ for any $s \in [0,2 M \Delta t] $ with $\l_2 (x) = O ( x^2)$. We also have $V_{t_n}\left( \xi_{t_n}^+ \right) \leq V_{t_n}^+ ( (\Psi_{t_0}^{t_n})^+ \xi_{t_0}) + l_2( || \xi_{t_0} ||) \leq V_{t_0} (\xi_{t_0}) + l_2( || \xi_{t_0} ||)$ for $t_0 \leq t_n \leq t_0 + 2M \Delta T$.
\label{lem::second_order}
\end{lem}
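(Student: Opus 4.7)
\medskip
\noindent\textbf{Plan of proof.} The strategy is to exploit the linear decomposition \eqref{eq::decomposition}, namely $\xi_{t_0+s} = \Psi_{t_0}^{t_0+s} \xi_{t_0} + \sum_{t_0 \le t_n < t_0+s} \Psi_{t_n}^{t_0+s} r_n(\xi_{t_n})$, and control the nonlinear part termwise using the second-order bound on $r_n$ that is already available. First, I would invoke Lemma \ref{lem::Riccati} to propagate the conditions of Theorem \ref{thm::Deyst_Price} from the true trajectory to the estimated one (with modified constants depending only on $k$), so that on the whole interval $[t_0, t_0+2M\Delta T]$ we have uniform bounds $\gamma_{\min} I \preceq P_t \preceq \gamma_{\max} I$ with $\gamma_{\min},\gamma_{\max}$ depending only on $k$ and on the Deyst--Price constants. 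In particular, $V_t^{1/2}$ is a norm on $\RR^{\dim\frak g}$ equivalent to the Euclidean one with constants uniform in $t$.

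Next, I would use Lemma \ref{lem::first_order} to ensure that $\|\xi_{t_n}\|\le l_1(\|\xi_{t_0}\|)$ uniformly for $t_n\in[t_0,t_0+2M\Delta T]$, and combine this with the pointwise second-order bound $\|r_n(\xi)\|\le\tilde l^k(\|\xi\|)$ established in the proof of Lemma \ref{lem::first_order}. This gives a uniform bound $\|r_n(\xi_{t_n})\|\le \tilde l^k(l_1(\|\xi_{t_0}\|))$, with right-hand side of order $O(\|\xi_{t_0}\|^2)$ and a constant depending only on $k$.

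Then, applying the triangle inequality for the norm $V_{t_0+s}^{1/2}$ to \eqref{eq::decomposition},
\begin{equation*}
V_{t_0+s}(\xi_{t_0+s})^{1/2} \le V_{t_0+s}\bigl(\Psi_{t_0}^{t_0+s}\xi_{t_0}\bigr)^{1/2} + \sum_{t_n<t_0+s} V_{t_0+s}\bigl(\Psi_{t_n}^{t_0+s} r_n(\xi_{t_n})\bigr)^{1/2}.
\end{equation*}
The key observation is that by the linear decrease property \eqref{eq::V_decreases}, each term in the sum satisfies $V_{t_0+s}(\Psi_{t_n}^{t_0+s} r_n(\xi_{t_n}))^{1/2} \le V_{t_n}^+(r_n(\xi_{t_n}))^{1/2} \le \gamma_{\min}^{-1/2}\|r_n(\xi_{t_n})\|$. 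Since there are at most $2M$ updates in the interval, the whole sum is bounded by $C(k)\|\xi_{t_0}\|^2$ for a constant depending only on $k$. Squaring the triangle inequality and using that $V_{t_0+s}(\Psi_{t_0}^{t_0+s}\xi_{t_0})^{1/2}\le \gamma_{\min}^{-1/2}l_1(\|\xi_{t_0}\|)=O(\|\xi_{t_0}\|)$, the cross-term $2AB$ and the square $B^2$ absorb (on the bounded set $\|\xi_{t_0}\|\le\epsilon$) into a single function $l_2(\|\xi_{t_0}\|)=O(\|\xi_{t_0}\|^2)$ depending only on $k$. This proves the first inequality. The second inequality $V_{t_0+s}(\Psi_{t_0}^{t_0+s}\xi_{t_0}) \le V_{t_0}(\xi_{t_0})$ is a direct application of the Deyst--Price decrease \eqref{eq::V_decreases}, iterated across the updates inside $[t_0,t_0+s]$. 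The variant bound at the update instant $\xi_{t_n}^+$ is obtained in the same way, simply stopping the triangle-inequality argument right after the update.

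The main obstacle is to make sure that all constants appearing in the bounds (the uniform Riccati bounds $\gamma_{\min},\gamma_{\max}$, the operator norms of the gains $L_n$ controlling the BCH remainder in $r_n$, and the constants from Lemma \ref{lem::first_order}) depend only on $k$ and on the Deyst--Price constants attached to the true trajectory, and in particular not on the initialization time $t_0$. This uniformity is what ultimately lets the local stability radius in Theorem \ref{thm::localCV} be chosen independently of $t_0$, and it is the reason for the careful bookkeeping in Lemma \ref{lem::Riccati}.
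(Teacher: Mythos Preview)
Your proposal is correct and follows essentially the same route as the paper: apply the triangle inequality for the norm $V_{t_0+s}^{1/2}$ to the decomposition \eqref{eq::decomposition}, push each propagated remainder back via \eqref{eq::V_decreases}, bound $\|r_n(\xi_{t_n})\|$ by $\tilde l^k\!\circ l_1(\|\xi_{t_0}\|)$ using Lemma~\ref{lem::first_order}, and square while controlling the cross term through the first-order bound on $V_{t_0+s}(\Psi_{t_0}^{t_0+s}\xi_{t_0})$. Your emphasis on the $k$-uniformity of all constants (Riccati bounds, gain bounds, BCH remainder) is exactly the point the paper is tracking throughout Lemmas~\ref{lem::Riccati}--\ref{lem::second_order}.
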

\begin{proof}
The result stems from the decomposition \eqref{eq::decomposition} as:
\begin{align*}
 & V_{t_0+s}  (\xi_{t_0+s})^{1/2} = V_{t_0+s} \left( \Psi_{t_0}^{t_0+s} \xi_{t_0} + \sum_{t_0<t_n<t_0+s} \Psi_{t_n}^{t_0+s} r_n(\xi_{t_n}) \right)^{1/2} \\
 & \leq V_{t_0+s} \left( \Psi_{t_0}^{t_0+s} \xi_{t_0+s} \right)^{1/2} + \sum_{t_0<t_n<t_0+s} V_{t_0+s} \left( \Psi_{t_n}^{t_0+s} r_n(\xi_{t_n}) \right)^{1/2} \\
& \tag*{ (triangular inequality)}\\
 & \leq V_{t_0+s}  \left( \Psi_{t_0}^{t_0+s} \xi_{t_0+s} \right)^{1/2} + \sum_{t_0<t_n<t_0+s} V_{t_n} \left( r_n(\xi_{t_n}) \right)^{1/2} \tag*{(using \eqref{eq::V_decreases})}\\
 & \leq V_{t_0+s}  \left( \Psi_{t_0}^{t_0+s} \xi_{t_0+s} \right)^{1/2} + \sum_{t_0<t_n<t_0+s} \sqrt{ \frac{\gamma_{\max}}{\gamma_{\min}} } || r_n(\xi_{t_n}) ||^{1/2} \\
 & \tag*{(from the def. of $V$)}\\
 & \leq V_{t_0+s}  \left( \Psi_{t_0}^{t_0+s} \xi_{t_0+s} \right)^{1/2} + \sum_{t_0<t_n<t_0+s} \sqrt{ \frac{\gamma_{\max}}{\gamma_{\min}} } ( \tilde{l}^k \circ l_1^k ) (||\xi_{t_0}||)^{1/2} \\
 & \tag*{(from Lemma \ref{lem::first_order}).}
\end{align*}
As we have $( \tilde{l}^k \circ l_1^k )(x) = O(x^2)$, we obtain the result squaring the inequality and using $V_{t_0+s}  \left( \Psi_{t_0}^{t_0+s} \xi_{t_0+s} \right)$ $ \leq V_{t_0}  \left( \xi_{t_0} \right) \leq \frac{\gamma_{\max}}{\gamma_{\min}} ||\xi_{t_0} ||$ to control the crossed terms.
\end{proof}

\begin{lem}\emph{[final second order growth control]}
Under the same conditions as in Lemma \ref{lem::Riccati} (including $\frac{1}{k} \tilde{P}_{t_0} \leq P_{t_0} \leq k \tilde{P}_{t_0}$)  and for $||\xi_{t_0+s}||$ bounded by the same $\epsilon$ for $s \in [0,t]$, there exist two functions $l_1^{k}(\xi) = O(||\xi||^2)$ and $l_2^{k}=o(||\xi||^2)$ and a constant $\beta^{k}$ ensuring the relation:
\begin{align}
V_{t_0+s}(\xi_{t_0+s}) & \leq V_{t_0}(\xi_{t_0}) + l_1^{k} (\xi_{t_0}) \\
 & -  \sum_{i=0}^{J-1} \left[ \beta^{k} ||\xi_{t_{ n_0}+iM \Delta t}||^2 - l_2^k (\xi_{t_{ n_0}+iM \Delta t}) \right]
+ l_1^k(||\xi_{t_{n_{\max}}}^+||),
\label{eq::Lyapounov}
\end{align}
where ${n_{\max}}$ is the last update before $t_0+s$ ( i.e. ${n_{\max}}  = \max \left\lbrace n, t_n \leq t_0+s \right\rbrace$), $J$ is the number of successive sequences of $M$ updates in $[t_0+M \Delta t,t_{n_{\max}}]$ (i.e. $J = \max \left\lbrace j, t_{n_{\max}-jM} \geq t_0 \right\rbrace -1 $) and $n_0 = n_{\max} - JM$. If $t_0+s=t_{n_{\max}}$ the last term can be removed.
\label{lem::Lyapounov}
\end{lem}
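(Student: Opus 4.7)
The plan is to combine the non-linear/linear decomposition \eqref{eq::decomposition} with the linear Kalman exponential decrease \eqref{eq::V_decreases_exponentially}, telescoping across the $J$ consecutive windows of $M$ updates that fit inside $[t_0,t_0+s]$, and paying at each step the cost of substituting the non-linear trajectory by its linearization via Lemma \ref{lem::second_order}.

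First I would partition $[t_0,t_0+s]$ into three zones: the initial segment $[t_0,t_{n_0}]$ (which by construction contains strictly fewer than $M$ updates), the $J$ full windows $[t^*_i,t^*_{i+1}]$ with $t^*_i := t_{n_0+iM\Delta t}$, and the tail segment $[t_{n_{\max}},t_0+s]$ (which contains no update). On the initial and tail zones no decrease is claimed: a direct application of Lemma \ref{lem::second_order} yields
\[
V_{t_{n_0}}^+(\xi_{t_{n_0}}^+) \leq V_{t_0}(\xi_{t_0}) + l_2^k(\|\xi_{t_0}\|), \qquad V_{t_0+s}(\xi_{t_0+s}) \leq V_{t_{n_{\max}}}^+(\xi_{t_{n_{\max}}}^+) + l_2^k(\|\xi_{t_{n_{\max}}}^+\|),
\]
which, after enlarging the control function to $l_1^k = O(\|\xi\|^2)$, become the two $l_1^k$ contributions appearing in the statement.

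Second, on each window $[t^*_i, t^*_{i+1}]$ I would chain Lemma \ref{lem::second_order} (which compares the non-linear evolution over $M$ updates to the linearized Kalman flow) with the linear decrease \eqref{eq::V_decreases_exponentially} (applied with the Kalman constants modified per Lemma \ref{lem::Riccati}):
\[
V_{t^*_{i+1}}^+(\xi_{t^*_{i+1}}^+) \leq V_{t^*_{i+1}}^+\!\bigl((\Psi_{t^*_i}^{t^*_{i+1}})^+\xi_{t^*_i}^+\bigr) + l_2^k(\|\xi_{t^*_i}^+\|) \leq V_{t^*_i}^+(\xi_{t^*_i}^+) - \beta_3 \bigl\|(\Psi_{t^*_i}^{t^*_{i+1}})^+\xi_{t^*_i}^+\bigr\|^2 + l_2^k(\|\xi_{t^*_i}^+\|).
\]
To convert the linearized-flow norm $\|(\Psi_{t^*_i}^{t^*_{i+1}})^+\xi_{t^*_i}^+\|^2$ into the required $\|\xi_{t^*_i}\|^2$, I would invoke uniform upper and lower operator bounds on $(\Psi_{t^*_i}^{t^*_{i+1}})^+$ over a window of $M$ updates (inherited from the boundedness of $A_t$ and the uniform bound $\|L_n\|\leq \gamma_{\max}\|H\|/\delta_3$ already noted in the proof of Lemma \ref{lem::first_order}), together with $\xi_{t^*_i}^+ = (I-L_{n_0+iM}H)\xi_{t^*_i} + r_{n_0+iM}(\xi_{t^*_i})$ whose remainder is $O(\|\xi_{t^*_i}\|^2)$. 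The residual cross-terms are $o(\|\xi\|^2)$ and absorb into $l_2^k$, producing the per-window inequality $V_{i+1} \leq V_i - \beta^k\|\xi_{t^*_i}\|^2 + l_2^k(\xi_{t^*_i})$. Telescoping over $i=0,\ldots,J-1$ and concatenating with the two boundary bounds from the first step delivers the claimed estimate; the case $t_0+s=t_{n_{\max}}$ merely omits the empty tail contribution.

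The main obstacle is the \emph{uniform}-in-$i$ control of all perturbation terms. Repeatedly invoking Lemma \ref{lem::second_order} over $J$ windows could in principle cause the remainders to inflate and break the telescoping, but this is avoided precisely because Lemmas \ref{lem::Riccati} and \ref{lem::first_order} furnish the relevant constants ($\gamma_{\min},\gamma_{\max}$, the bound on $\|L_n\|$, the modified Deyst--Price constants, and the quadratic control $\tilde l^k$) uniformly along the trajectory, depending only on $k$. The standing hypothesis $\|\xi_{t_0+s}\|\leq \epsilon$ throughout $[0,t]$ then guarantees that a single function $l_2^k=o(\|\xi\|^2)$ can be used on every window, so that summation across $i$ simply produces the sum written in the statement.
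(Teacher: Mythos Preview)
Your proposal is correct and follows essentially the same route as the paper: split $[t_0,t_0+s]$ into an initial segment, $J$ full windows of $M$ updates, and a tail; use Lemma \ref{lem::second_order} on the boundary segments; and on each window chain Lemma \ref{lem::second_order} with the linear Deyst--Price decrease \eqref{eq::V_decreases_exponentially}, then telescope. One small remark: to lower-bound $\|(\Psi_{t^*_i}^{t^*_{i+1}})^+\xi\|^2$ by a constant times $\|\xi\|^2$, the paper does not invoke boundedness of $A_t$ (which is not among the standing hypotheses) but rather uses the identity $\Psi_{t_0}^{t_n+}=(P_n^+P_n^{-1})\Psi_{t_0}^{t_n}$ together with condition (i) of Theorem \ref{thm::Deyst_Price} (the $\delta_1$ lower bound on $\Phi^T\Phi$) and the bounds $\gamma_{\min}I\preceq P_t\preceq\gamma_{\max}I$, yielding the explicit constant $(\frac{\gamma_{\min}}{\gamma_{\max}}\delta_1)^M$.
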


\begin{proof}
For $l_1^k$ we choose the same function as in Lemma \ref{lem::second_order}. There is nothing more to prove for $s < 2 M \Delta t$. Let $s \geq 2 M \Delta  t$. We have $V_{t_0+s}(\xi_{t_0 + s})-V_{t_0}^+(\xi_{t_0}^+) = \left( V_{t_0+s}(\xi_{t_0 + s})-V_{t_{n_{\max}}}^+ (\xi_{t_{n_{\max}}}^+) \right)  + \left( V_{t_{n_{\max}}}^+ (\xi_{t_{n_{\max}}}^+) - V_{t_{n_0}}^+ (\xi_{t_{n_0}}^+) \right) + \left( V_{t_{n_0}}^+ (\xi_{t_{n_0}}^+) - V_{t_0} (\xi_{t_0}) \right) $. The first and third terms are upper bounded using Lemma \ref{lem::second_order}. The second term is controlled as follows:
\begin{align*}
 & V_{t_{n_{\max}}}^+  (\xi_{t_{n_{\max}}}^+) - V_{t_{n_0}}^+ (\xi_{t_{n_0}}^+) \\
 & =  \sum_{i=0}^{J-1} \left[ V_{t_{ n_{0} +  (i+1)M }}^+ (\xi_{t_{ n_{0} +  (i+1)M}}^+) - V_{t_{ n_0+iM}}^+ (\xi_{t_{ n_0 + iM}}^+) \right] \\
& \leq \sum_{i=0}^{J-1} \left[ V_{t_{ n_0+(i+1)M} }^+ (\Psi_{t_{ n_0+iM}}^{t_{ n_0+(i+1)M}+} \xi_{t_{n_0 +  iM}}^+) - V_{t_{ n_0+iM} }^+ (\xi_{t_{ n_0 +  iM }}^+) + l_2 (\xi_{t_{ n_0+iM}}^+)
\right] .
\end{align*}
And we conclude using (see \cite{deyst}): 
\begin{align*}
 V_{t_{ n_0+(i+1)M} }^+ & (\Psi_{t_{ n_0+iM}}^{t_{ n_0+(i+1)M}+} \xi_{t_{n_0 +  iM}}^+) - V_{t_{ n_0+iM} }^+ (\xi_{t_{ n_0 +  iM }}^+) \leq \\ & - \tilde{\beta}^{k} || \Psi_{t_{ n_0+iM}}^{t_{ n_0+(i+1)M}+} \xi_{t_{n_0 +  iM}}^+ ||^2 \leq - \tilde{\beta}^{k} (\frac{\gamma_{\min}}{\gamma_{\max}} \delta_1)^M || \xi_{t_{n_0 +  iM}}^+ ||^2 ,
\end{align*}
for a $\tilde{\beta}^k$ depending only on the modified constants of Lemma \ref{lem::Riccati}. The last inequality is obtained using $\Psi_{t_0}^{t_n +} = (P_n^+ P_n^{-1}) \Psi_{t_0}^{t_n}$ and an obvious recursion over $M$ time steps. We finally set $\beta = \tilde{\beta}^{k} (\frac{\gamma_{\min}}{\gamma_{\max}} \delta_1)^M$.
\end{proof}

\begin{rem}
The control we have obtained on $\xi_{t_0+s}$ is verified if $||\xi_{t_0+s}||$ is \emph{already} in a ball of radius $\epsilon$ over the whole interval $[t_0, t_0 + t]$. We now prove the result holds assuming \emph{only} that $\xi_{t_0}$ is sufficiently small.
\end{rem}

\subsection{Proof of theorem \ref{thm::localCV}}
Applying Lemma \ref{lem::Lyapounov} with $t_0+s=t_{n_{\max}}$ gives for $\frac{1}{k} \tilde{P}_{t_0} \leq P_{t_0} \leq k \tilde{P}_{t_0}$ and $||\xi_{t_0+s}|| < \epsilon$ on $[0,t]$ :
\begin{align*}
|| \xi_{t_{n_{\max}}}^+ ||^2 \leq & \frac{\gamma_{\max}}{\gamma_{\min}} || \xi_{t_0} ||^2 + \gamma_{\max} l_1^{k} (||\xi_{t_0}||)  \\
 & -\frac{\gamma_{\max}}{\gamma_{\min}} \sum_{i=0}^{J-1} \left[ \beta^k ||\xi_{t_{n_0+iM}}^+||^2 - l_2(\xi_{t_{n_0+iM}}^+) \right].
\end{align*}
There exist $K>0$ and $\epsilon'>0$ such that for $x < \epsilon'$, we have $l_2(x) < \frac{\beta^k}{2} x$ and $ \gamma_{\max} l_1^{k} (x) < Kx$ (as $l_2(x) = O(x^2)$ and $l_1^k(x) = O(x^2)$) which gives:
$
|| \xi_{t_{n_{\max}}}^+ ||^2 \leq \left( \frac{\gamma_{\max}}{\gamma_{\min}} + K \right) || \xi_{t_0} ||^2.
$
Thus, for $|| \xi_{t_0} || < \frac{\epsilon'}{\sqrt{\frac{\gamma_{\max}}{\gamma_{\min}}+K}}$:
\begin{align}
|| \xi_{t_0+s} ||^2 \leq & \left( \frac{\gamma_{\max}}{\gamma_{\min}} + K + K \left( \frac{\gamma_{\max}}{\gamma_{\min}} + K \right) \right) || \xi_{t_0} ||^2 \nonumber \\
 & -\frac{\gamma_{\max}}{\gamma_{\min}}  \sum_{i=0}^{J-1} \frac{\beta^k}{2} ||\xi_{t_{n_0+iM}}^+||^2,
\label{eq::control_last formula}
\end{align}
which finally ensures
$$||\xi_{t_0}|| < \frac{1}{2} \epsilon' / \left( \frac{\gamma_{\max}}{\gamma_{\min}} + K + K \left( \frac{\gamma_{\max}}{\gamma_{\min}} + K \right) \right)\Rightarrow||\xi_{t_0+s}||\leq\epsilon'/2$$Reducing $\epsilon'$ if necessary to have $\epsilon'\leq\epsilon$, we have obtained $||\xi_{t_0+s}|| < \epsilon'$ for $s\in[0,t]\Rightarrow||\xi_{t_0+s}||\leq\epsilon'/2$ for sufficiently small $||\xi_{t_0}||$ (as Lemma \ref{lem::Lyapounov} applies).   Letting $t=\inf \left\lbrace s ,||\xi_{t_0+s}|| \geq \frac{3}{4}\epsilon'\right\rbrace$ for sufficiently small $||\xi_{t_0}||$ we end up with a contradiction if we suppose $t< + \infty$, which proves $t= + \infty$. All the previous results thus hold \emph{only} for sufficiently small $||\xi_{t_0}||$.

Moreover, \eqref{eq::control_last formula} shows that $\sum_{i=0}^{J-1} \frac{\beta^k}{2} ||\xi_{t_{n_0+iM}}^+||^2 $ is bounded and has positive terms thus $ || \xi_{t_{n_0+iM}}^+ ||^2$ goes to zero. Note also that $||P_t - \tilde{P}_t || \underset{t \rightarrow + \infty}{\longrightarrow} 0$ as a byproduct.


\section{Proof of proposition \ref{prop::car}}
\label{proof::car}
Only conditions \eqref{cond::Psi} and \eqref{cond::observable} are non-trivial. Let $\Phi$ denote the flow of the dynamics. We have $$ \frac{d}{dt} \left[ (\Phi^t_{t_n})^T \Phi^t_{t_n} \right] = (\Phi^t_{t_n})^T \begin{pmatrix} 0 & 0 & -v_t \\ 0 & 0 & 0 \\ -v_t & 0 & 0 \end{pmatrix} \Phi^t_{t_n} \succeq -v_{\max} (\Phi^t_{t_n})^T \Phi^t_{t_n}  $$ as the eigenvalues of $ \begin{pmatrix} 0 & 0 & 1 \\ 0 & 0 & 0 \\ 1 & 0 & 0 \end{pmatrix}$ are $(1,-1,0)$. Thus, $\forall z\in \mathbb{R}^3, \frac{d}{dt} \log \left( z^T \left( \Phi^t_{t_n} \right)^T \Phi^t_{t_n} z   \right) \geqslant -v_{\max}$ and finally $z^T(\Phi^{t_{n+1}}_{t_n})^T \Phi_{t_n}^{t_{n+1}} z \geqslant \exp \left(- v_{\max} (t_{n+1}-t_n) \right) ||z||^2$ as $(\Phi^{t_n}_{t_n})^T \Phi_{t_n}^{t_n}=I_3$.  Thus $(\Phi^{t_{n+1}}_{t_n})^T \Phi_{t_n}^{t_{n+1}} \succeq \exp \left(- v_{\max} (t_{n+1}-t_n) \right)I_3$ and \eqref{cond::Psi} is verified. The difficult part of \eqref{cond::observable} is the lower bound. Denoting $Cov(V_n)$ by $N$ we will show:
\begin{align*}
  \exists \beta_1, & \forall n \in \NN, \quad \beta_1 I_3 \leq \hat{R}_{t_n}^T N^{-1} \hat{R}_{t_n} \\
 & + (\Phi_{t_n}^{t_{n+1}})^T \begin{pmatrix} 0_{1,2} & I_2 \end{pmatrix}^T \hat{R}_{t_{n-1}}^T N^{-1} \hat{R}_{t_{n-1}} \begin{pmatrix} 0_{1,2} & I_2 \end{pmatrix}  \Phi_{t_n}^{t_{n+1}}.
\end{align*}
That is to say that we want a lower bound on the quadratic form: 
\begin{align*}
M & \begin{pmatrix} \theta \\ u \end{pmatrix} =  \begin{pmatrix} \theta \\ u \end{pmatrix}^T \hat{R}_{t_n}^T N^{-1} \hat{R}_{t_n} \begin{pmatrix} \theta \\ u \end{pmatrix} \\
 &+ \begin{pmatrix} \theta \\ u \end{pmatrix}^T (\Phi_{t_n}^{t_{n+1}})^T \begin{pmatrix} 0_{1,2} & I_2 \end{pmatrix}^T \hat{R}_{t_{n-1}}^T N^{-1} \hat{R}_{t_{n-1}} \begin{pmatrix} 0_{1,2} & I_2 \end{pmatrix}  \Phi_{t_n}^{t_{n+1}} \begin{pmatrix} \theta \\ u \end{pmatrix} .
\end{align*}
We decompose $ \Phi_{t_n}^{t_{n+1}}$ as $ \Phi_{t_n}^{t_{n+1}} = \begin{pmatrix} 1 & \begin{matrix} 0 & 0 \end{matrix} \\ \delta V_n & T_n \end{pmatrix}$. To simplify the writing we introduce the norms $||x||_N^2 = x^T N^{-1} x$ and the associated scalar product $ \left< .,. \right>_N$. There exists $\alpha >0$ such that $\forall x\in \RR^2, ||x||_N \geq \alpha ||x||$. For any $\begin{pmatrix} \theta \\ u \end{pmatrix} \in \RR^3$ we have 
$
M \begin{pmatrix} \theta \\ u \end{pmatrix}  = ||\hat{R}_{t_n} u||_N^2 + || \theta \hat{R}_{t_{n-1}}\delta V_n + \hat{R}_{t_{n-1}} T_n u||_N^2 
  = ||\hat{R}_{t_n} u||_N^2 + \theta^2 ||\hat{R}_{t_{n-1}} \delta V_n||_N^2 + 2\theta \left< \hat{R}_{t_{n-1}} \delta V_n,\hat{R}_{t_{n-1}} T_n u \right>_N + ||\hat{R}_{t_{n-1}} T_n u||_N^2
$
 and for $\lambda \in ]0,1]$ we have:
\begin{align*}
M  \begin{pmatrix} \theta \\ u \end{pmatrix}
  = & ||\hat{R}_{t_n} u||_N^2 + (1-\lambda^2) \theta^2 || \hat{R}_{t_{n-1}}\delta V_n||_N^2 + \lambda^2 \theta^2 ||\hat{R}_{t_{n-1}}\delta V_n||_N^2  \\
 &+ 2\theta \left< \hat{R}_{t_{n-1}}\delta V_n, \hat{R}_{t_{n-1}} T_n u \right>_N + ||\hat{R}_{t_{n-1}} T_n u||_N^2 \\
  = & ||\hat{R}_{t_n} u||_N^2 + (1-\lambda^2) \theta^2 ||\hat{R}_{t_{n-1}}\delta V_n||_N^2 + ||\lambda \theta \hat{R}_{t_{n-1}}\delta V_n \\
 & + \frac{1}{\lambda} \hat{R}_{t_{n-1}} T_n u||_N^2 + (1-\frac{1}{\lambda^2}) ||\hat{R}_{t_{n-1}} T_n u||_N^2 \\
   \geq & \alpha \bigg[ ||\hat{R}_{t_n} u||^2 + (1-\lambda^2) \theta^2 ||\hat{R}_{t_{n-1}}\delta V_n||^2  \\
 &+ ||\lambda \theta \hat{R}_{t_{n-1}}\delta V_n + \frac{1}{\lambda} \hat{R}_{t_{n-1}} T_n u||^2 + (1-\frac{1}{\lambda^2}) ||\hat{R}_{t_{n-1}} T_n u||^2 \bigg] \\
     \geq & \alpha \left[ ||u||^2 + (1-\lambda^2) \theta^2 ||\delta V_n||^2 + (1-\frac{1}{\lambda^2}) ||u||^2 \right] \\
    \geq & \alpha \left[ ( 2-\frac{1}{\lambda^2})||u||^2 + \theta^2 + [\frac{1-\lambda^2}{2-\frac{1}{\lambda^2}} ||\delta V_n||^2-1] \theta^2 \right] \\
 \geq & \alpha (2-\frac{1}{\lambda^2}) \left( ||u||^2 + \theta^2 + [\frac{1-\lambda^2}{2-\frac{1}{\lambda^2}} v_{\min}^2-1] \theta^2 \right).
\end{align*}
As $ \frac{1-\lambda^2}{2-\frac{1}{\lambda^2}} \underset{\lambda \rightarrow \frac{1}{\sqrt{2}}^-}{\longrightarrow} +\infty $ there exists $\lambda_0$ such that:
$
{ M \begin{pmatrix}
\theta \\ u
\end{pmatrix} \geq \alpha \left( 2-\frac{1}{\lambda_0^2} \right) \left| \left| \begin{matrix}
\theta \\ u
\end{matrix} \right| \right|^2 }
$
and the result is true for $\beta_1 = \alpha (2-\frac{1}{\lambda_0^2})$.

\bibliographystyle{plain}
\bibliography{rhn}

\end{document}